\declaretheorem[name=Lemma,numberwithin=section]{lem}
\newcommand{\ex}[1]{\E\big[#1\big]}
\newcommand{\pr}[1]{\Pr\big[#1\big]}
\newcommand{\prc}[2]{\Pr_{#1}\big[#2\big]}
\newcommand{\exc}[2]{\E_{#1}\big[#2\big]}
\newcommand{\RR}{\mathbb{R}}
\newcommand{\kary}{\textsc{ary-ugc}}
\newcommand{\ug}{\textsc{UniqueGames}}
\newcommand{\fcov}{\psi}
\begin{document}

\title{{\bfseries Tight Approximation Bounds for \\
Maximum Multi-Coverage}\footnote{An extended abstract appeared in the proceedings of IPCO 2020}}

\author{Siddharth Barman\thanks{Indian Institute of Science. {\tt barman@iisc.ac.in}} \and Omar Fawzi\thanks{Univ Lyon, ENS Lyon, UCBL, CNRS,  Inria, LIP, F-69342, Lyon Cedex 07, France. {\tt omar.fawzi@ens-lyon.fr}} \and Suprovat Ghoshal\thanks{Indian Institute of Science. {\tt suprovat@iisc.ac.in}} \and Emirhan G\"urp\i nar \thanks{ENS Lyon. {\tt emirhan.gurpinar@ens-lyon.fr}}}

\date{}

\maketitle

\begin{abstract}
In the classic maximum coverage problem, we are given subsets $T_1, \dots, T_m$ of a universe $[n]$ along with an integer $k$ and the objective is to find a subset $S \subseteq [m]$ of size $k$ that maximizes $C(S) \coloneqq |\cup_{i \in S} T_i|$. It is well-known that the greedy algorithm for this problem achieves an approximation ratio of $(1-e^{-1})$ and there is a matching inapproximability result. We note that in the maximum coverage problem if an element $e \in [n]$ is covered by several sets, it is still counted only once. By contrast, if we change the problem and count each element $e$ as many times as it is covered, then we obtain a linear objective function, $C^{(\infty)}(S) = \sum_{i \in S} |T_i|$, which can be easily maximized under a cardinality constraint.  

We study the maximum $\ell$-multi-coverage problem which naturally interpolates between these two extremes. In this problem, an element can be counted up to $\ell$ times but no more; hence, we consider maximizing the function $C^{(\ell)}(S) = \sum_{e \in [n]} \min\{\ell, |\{i \in S : e \in T_i\}| \}$, subject to the constraint $|S| \leq k$. Note that the case of $\ell = 1$ corresponds to the standard maximum coverage setting and $\ell = \infty$ gives us a linear objective. 

We develop an efficient approximation algorithm that achieves an approximation ratio of $1 - \frac{\ell^{\ell}e^{-\ell}}{\ell!}$ for the $\ell$-multi-coverage problem. In particular, when $\ell = 2$, this factor is $1-2e^{-2} \approx 0.73$ and as $\ell$ grows the approximation ratio behaves as $1 - \frac{1}{\sqrt{2\pi \ell}}$. We also prove that this approximation ratio is tight, i.e., establish a matching hardness-of-approximation result, under the Unique Games Conjecture. 

This problem is motivated by the question of finding a code that optimizes the list-decoding success probability for a given noisy channel. We show how the multi-coverage problem can be relevant in other contexts, such as combinatorial auctions.
\end{abstract}

\section{Introduction}

Coverage problems lie at the core of combinatorial optimization and have been extensively studied in computer science. A quintessential example of such problems is the \emph{maximum coverage} problem wherein we are given subsets $T_1, \dots, T_m$ of a universe $[n]$ along with an integer $k \in \mathbbm{Z}_+$, and the objective is to find a size-$k$ set $S \subseteq [m]$ that maximizes the covering function $C(S) \coloneqq | \cup_{i \in S} T_i |$. It is well-known that a natural greedy algorithm achieves an approximation ratio of $1-e^{-1}$ for this problem (see, e.g.,~\cite{Hoc97}). Furthermore, the work of Feige~\cite{Fei98} shows that this approximation factor is tight, under the assumption that ${\rm P} \neq {\rm NP}$. Over the years, a large body of work has been directed towards extending these fundamental results and, more generally, coverage problems have been studied across multiple fields, such as operations research~\cite{CFN77}, machine learning~\cite{feldman2014learning}, and algorithmic game theory~\cite{DV11}.

In this paper, we study the $\ell$-multi-coverage ($\ell$-coverage for short) problem, which is a natural generalization of the classic maximum coverage problem. Here, we are given a universe of elements $[n]$ and a collection of subsets $\mathcal{F} = \{T_i \subseteq [n] \}_{i=1}^m $. For any integer $\ell \in \mathbbm{Z}_+$ and a choice of index set $S \subseteq [m]$, we define the $\ell$-coverage of an element $e$ to be $C^{(\ell)}_e(S) \coloneqq \min\{\ell, | i \in S : e \in T_i |\}$, i.e., $C^{(\ell)}_e(S)$ counts---up to $\ell$---how many times element $e$ is covered by the subsets indexed in $S$. We extend this definition to that of $\ell$-coverage of all the elements, $C^{(\ell)} (S) \coloneqq \sum_{e\in [n]}  C^{(\ell)}_e(S)$.


The $\ell$-multi-coverage problem is defined as follows: given a universe of elements $[n]$, a collection $\mathcal{F}$ of subsets of $[n]$ and an integer $k \le m$, find a size-$k$ subset $S \subseteq [m]$ which maximizes $C^{(\ell)}({S})$. For $\ell = 1$, it is easy to see that this reduces to the standard maximum coverage problem. 

\subsection{Our Results and Techniques} 

Our main result is a polynomial-time algorithm that achieves a tight approximation ratio for the $\ell$-multi-coverage problem, with any $\ell \geq 1$. 


\begin{theorem}
\label{theorem:algorithm}
	Let $\ell$ be a positive integer. There exists a randomized polynomial-time algorithm that takes as input an integer $n$, a set system $\mathcal{F} = \{T_i \subseteq [n] \}_{i=1}^m $ along with an integer $k \leq m$ and outputs a size-$k$ set $S \subseteq [m]$ (i.e., identifies  $k$ subsets $\{T_i \}_{i\in S}$ from $\mathcal{F}$) such that
	\begin{equation*}
	\ex{C^{(\ell)}(S)} \ge \left(1 - \frac{\ell^{\ell} e^{-\ell}}{\ell !}\right)\max_{S' \in {[m] \choose k}}C^{(\ell)}(S').
	\end{equation*} 
\end{theorem}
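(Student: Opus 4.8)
The plan is to combine a concave-programming relaxation with a rounding step whose multiplicative loss is governed by the \emph{correlation gap} of the capped-coverage function, and to show that this correlation gap is exactly $1-\frac{\ell^{\ell}e^{-\ell}}{\ell!}$.

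\textbf{Step 1 (relaxation).} For $x\in[0,1]^m$ write $p_e(x)\eqdef\sum_{i:\,e\in T_i}x_i$ and set $\mathrm{LP}(x)\eqdef\sum_{e\in[n]}\min\{\ell,p_e(x)\}$, a concave function. Since every $S\in\binom{[m]}{k}$ has indicator $\mathbf 1_S$ with $C^{(\ell)}(S)=\mathrm{LP}(\mathbf 1_S)$, the value $\mathrm{OPT}_{\mathrm{LP}}\eqdef\max\{\mathrm{LP}(x):x\in[0,1]^m,\ \sum_i x_i\le k\}$ is at least $\max_{S'}C^{(\ell)}(S')$, and it is computable in polynomial time (maximize $\sum_e z_e$ subject to $z_e\le\ell$, $z_e\le p_e(x)$, $\sum_i x_i\le k$, $x\in[0,1]^m$). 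Let $x^\star$ be an optimal solution; by monotonicity of $\mathrm{LP}$ we may assume $\sum_i x_i^\star=k$. Let $F\colon[0,1]^m\to\RR$ be the multilinear extension of $C^{(\ell)}$, i.e. $F(x)=\mathbb{E}\big[C^{(\ell)}(R(x))\big]$ where $R(x)$ contains each $i$ independently with probability $x_i$; equivalently $F(x)=\sum_e\mathbb{E}[\min\{\ell,B_e(x)\}]$ with $B_e(x)$ a Poisson--binomial variable with success probabilities $(x_i)_{i:\,e\in T_i}$. Note $F$ can be evaluated exactly in polynomial time, since the mass function of each $B_e(x)$ is computed by the standard polynomial-product dynamic program, and that $C^{(\ell)}$ is monotone submodular (each $C^{(\ell)}_e$ is a nondecreasing concave function of a modular function, and submodularity is closed under sums). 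The crux is the correlation-gap inequality
\begin{equation}\label{eq:corrgap}
F(x)\ \ge\ \Big(1-\tfrac{\ell^{\ell}e^{-\ell}}{\ell!}\Big)\,\mathrm{LP}(x)\qquad\text{for all }x\in[0,1]^m.
\end{equation}

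\textbf{Step 2 (proof of \eqref{eq:corrgap}).} By linearity over $e\in[n]$, \eqref{eq:corrgap} reduces to the scalar statement: for every Poisson--binomial $B$ with mean $\mu$, $\mathbb{E}[\min\{\ell,B\}]\ge(1-\tfrac{\ell^{\ell}e^{-\ell}}{\ell!})\min\{\ell,\mu\}$. I would prove this in two moves. First, a convex-order domination: for each $p\in[0,1]$ we have $\mathrm{Bern}(p)\le_{\mathrm{cx}}\mathrm{Poi}(p)$ (equal means, and the stop-loss transform of $\mathrm{Bern}(p)$ vanishes at every integer $t\ge1$ while that of $\mathrm{Poi}(p)$ is nonnegative); since convex order is preserved under independent sums, $B=\sum_i\mathrm{Bern}(x_i)\le_{\mathrm{cx}}\sum_i\mathrm{Poi}(x_i)=\mathrm{Poi}(\mu)$, and because $y\mapsto\min\{\ell,y\}$ is concave this gives $\mathbb{E}[\min\{\ell,B\}]\ge\mathbb{E}[\min\{\ell,\mathrm{Poi}(\mu)\}]$. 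Second, the one-parameter bound $\mathbb{E}[\min\{\ell,\mathrm{Poi}(\mu)\}]\ge(1-\tfrac{\ell^{\ell}e^{-\ell}}{\ell!})\min\{\ell,\mu\}$ for all $\mu\ge0$: writing $a(\mu)=\mathbb{E}[(\mathrm{Poi}(\mu)-\ell)^+]$, the Poisson identities $\mathbb{E}[X\mathbbm 1_{X\ge\ell}]=\mu\Pr[X\ge\ell-1]$ yield $a'(\mu)=\Pr[\mathrm{Poi}(\mu)\ge\ell]$ and $\mu a'(\mu)-a(\mu)=\ell\,\Pr[\mathrm{Poi}(\mu)\ge\ell+1]\ge0$, so on $(0,\ell]$ the ratio $\mathbb{E}[\min\{\ell,\mathrm{Poi}(\mu)\}]/\mu=1-a(\mu)/\mu$ is nonincreasing, while on $[\ell,\infty)$ the quantity $\mathbb{E}[\min\{\ell,\mathrm{Poi}(\mu)\}]/\ell$ is nondecreasing; hence the minimum over $\mu$ is attained at $\mu=\ell$, where $a(\ell)=\ell\,\Pr[\mathrm{Poi}(\ell)=\ell]$ and the ratio equals $1-\tfrac{\ell^{\ell}e^{-\ell}}{\ell!}$.

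\textbf{Step 3 (rounding and conclusion).} Starting from $x^\star$, apply pipage rounding inside $\{x\in[0,1]^m:\sum_i x_i=k\}$: while $x$ has two fractional coordinates $i,j$, move along $(x_i,x_j)\mapsto(x_i+t,x_j-t)$; on this segment $F$ is quadratic in $t$ with $\frac{d^2}{dt^2}F=-2\,\partial^2_{ij}F\ge0$ by submodularity, hence convex, so one of the two extreme feasible values of $t$ (where a coordinate reaches $0$ or $1$) does not decrease $F$. After at most $m$ steps $x=\mathbf 1_S$ for some $S$ with $|S|=k$, and $C^{(\ell)}(S)=F(\mathbf 1_S)\ge F(x^\star)\ge(1-\tfrac{\ell^{\ell}e^{-\ell}}{\ell!})\mathrm{LP}(x^\star)\ge(1-\tfrac{\ell^{\ell}e^{-\ell}}{\ell!})\max_{S'}C^{(\ell)}(S')$; every step runs in polynomial time by the exact evaluation of $F$. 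The routine parts here are the relaxation and the pipage rounding; the substance — and the main obstacle — is inequality \eqref{eq:corrgap}, where the convex-order step pins down the extremal \emph{shape} of the per-element distribution (Poisson) and the one-parameter calculus pins down the extremal \emph{mean} ($\mu=\ell$), at which point the bound evaluates to $1-\Pr[\mathrm{Poi}(\ell)=\ell]=1-\tfrac{\ell^{\ell}e^{-\ell}}{\ell!}$.
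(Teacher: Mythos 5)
Your proposal is correct, and while it shares the paper's algorithmic skeleton exactly (the LP relaxation \eqref{lp_relaxation}, the multilinear extension, and pipage rounding along the hyperplane $\sum_i x_i = k$), the heart of the argument — the per-element inequality $\exc{}{\min\{\ell, B_e\}} \geq \rho_\ell \min\{\ell, \mu_e\}$ — is proved by a genuinely different and leaner route. The paper first invokes Lemma~\ref{lem:3-values} to reduce the Poisson--binomial $B_e$ to the three-valued configuration $x_i \in \{0,1,q\}$, then runs a case analysis on the number $\bar{\ell}$ of coordinates equal to $1$, relying on the convexity/monotonicity of the truncated binomial expression (Lemma~\ref{lem:function-convex}) to push the mean to its extreme, and only at the very end compares $\mathrm{Bin}(t,\ell/t)$ with $\mathrm{Poi}(\ell)$ via the convex order. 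You instead apply the convex order at the outset — $\mathrm{Ber}(x_i) \leq_{\mathrm{cvx}} \mathrm{Poi}(x_i)$ plus closure under independent sums gives $B_e \leq_{\mathrm{cvx}} \mathrm{Poi}(\mu_e)$ directly, which is the same substance as the paper's Lemma~\ref{lem:convex-order-bin-poi} but for heterogeneous success probabilities — and then dispose of the dependence on the mean by showing via the Poisson identities $a'(\mu) = \Pr[\mathrm{Poi}(\mu)\geq\ell]$ and $\mu a'(\mu) - a(\mu) = \ell\Pr[\mathrm{Poi}(\mu)\geq\ell+1]$ that the ratio $\exc{}{\min\{\ell,\mathrm{Poi}(\mu)\}}/\min\{\ell,\mu\}$ attains its minimum $\rho_\ell$ at $\mu = \ell$. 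I checked these identities and the stop-loss verification of $\mathrm{Ber}(p)\leq_{\mathrm{cvx}}\mathrm{Poi}(p)$ (for lattice variables the piecewise-linear stop-loss transforms need only be compared at integer thresholds, which you do); everything is sound. Your route avoids the three-value reduction, the binomial convexity lemma of Appendix~\ref{sec:function-convex}, and the case split on $\bar{\ell}$ entirely, at the cost of a short calculus argument about the Poisson family; it also isolates cleanly why the worst case is the Poisson shape at mean exactly $\ell$, which the paper's proof leaves somewhat implicit.
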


One way to interpret this approximation ratio $\rho_\ell \coloneqq \left(1 - \frac{\ell^{\ell} e^{-\ell}}{\ell !}\right)$ is that $\rho_{\ell} = \frac{1}{\ell} \ex{\min \{\ell, \mathrm{Poi}(\ell)\}}$, where $\mathrm{Poi}(\ell)$ denotes a Poisson random variable with rate parameter $\ell$. 

We complement Theorem~\ref{theorem:algorithm} by proving that the achieved approximation guarantee is tight, under the Unique Games Conjecture. Formally,  
\addtocounter{thm}{+1}
\begin{restatable}{thm}{hardness} \label{theorem:hardness}
		Assuming the Unique Games Conjecture, it is {\rm NP}-hard to approximate the maximum $\ell$-multi-coverage problem to within a factor greater than $\left(1 - \frac{\ell^{\ell}}{\ell!} e^{-\ell}  + \varepsilon\right)$, for any constant of $\varepsilon > 0$.
\end{restatable}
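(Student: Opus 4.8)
The plan is to establish the matching lower bound via a reduction from \ug{} to \mult{}, structured as a coverage ``dictatorship test'' in the spirit of Feige's $(1-e^{-1})$-hardness for Max-Coverage~\cite{Fei98}, but with the noise/partition structure tuned so that a ``spread-out'' family of $k$ selected sets behaves like $k$ independent $\mathrm{Bern}(\ell/k)$ events, whose coverage count converges to $\mathrm{Poi}(\ell)$. Since $\ex{\min\{\ell,\mathrm{Poi}(\ell)\}} = \ell\rho_\ell$ while an ``honest'' selection covers every element exactly $\ell$ times, this produces a YES/NO gap of $\rho_\ell$ versus $1$, which is precisely the claimed ratio $1-\frac{\ell^\ell}{\ell!}e^{-\ell}$.

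Concretely, fix a \ug{} instance $(G=(V,E),[L],\{\pi_e\}_{e\in E})$ on a regular graph that is hard to distinguish between being $(1-\eta)$-satisfiable and at most $\eta$-satisfiable. I would build a set system whose universe is $E$ crossed with the ground set $[\mathfrak m]$ of a ``multi-partition system'' --- $[\mathfrak m]$ together with a family of nearly independent partitions into $\mathfrak t$ parts --- and whose sets $T_{v,i}$ (indexed by a vertex $v$, a label $i\in[L]$, and possibly an auxiliary part index) are unions, over edges $e$ incident to $v$, of shuffled partition parts, with the shuffle routed through $\pi_e$ so that label-agreement across an edge translates into coherently tiling the $e$-block. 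The cardinality bound $k$ is set to the value at which the canonical ``pick the sets prescribed by a labeling'' solution $\ell$-covers the whole universe, and $\mathfrak t$ and $k$ are chosen so that a selection that spreads its budget as ``one part from each of $r$ partitions'' $\ell$-covers only a $\frac1\ell\,\ex{\min\{\ell,\mathrm{Bin}(r,1/\mathfrak t)\}}\to\rho_\ell$ fraction; this forces $\mathfrak t$ (hence $L$) to grow polynomially in $1/\varepsilon$ and $\ell$.

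Completeness is routine: a labeling $\sigma$ satisfying $\ge 1-\eta$ of the edges yields a size-$k$ selection with $C^{(\ell)}(S)\ge(1-O(\eta))\cdot\ell\mathfrak m|E|$, essentially the maximum. The heart of the argument is soundness. Here I would, block by block, average a given size-$k$ selection into low-degree (fractional) selection functions on each vertex's long code, and then invoke an invariance principle for the non-smooth functional $x\mapsto\min\{\ell,x\}$ to argue that, absent an influential coordinate at any vertex, the $\ell$-coverage inside each edge-block is at most $(\rho_\ell+o(1))\,\ell\mathfrak m$, matching the Poisson($\ell$) value because a low-influence selection is statistically indistinguishable from a random one. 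The influential coordinates at each vertex form a short candidate list of labels, and a standard \ug{}-decoding step converts any selection with $C^{(\ell)}(S)>(\rho_\ell+\varepsilon)\,\ell\mathfrak m|E|$ into a labeling satisfying $\ge\mathrm{poly}(\varepsilon)$ fraction of edges, contradicting the NO case.

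The main obstacle is exactly this invariance/soundness step: since $\min\{\ell,\cdot\}$ is not smooth, one must control the multiset-overlap statistics of a selected family of sets within a block and show they converge to those of $\ell$-wise independent Poisson arrivals, which calls for a Poissonization argument together with Mossel-type Gaussian noise-stability bounds (or the ``It Ain't Over Till It's Over'' theorem). A clean way to isolate this difficulty is to route the reduction through an intermediate \kary{} instance whose hard distribution is $\ell$-wise independent, so that the relevant invariance principle applies essentially off the shelf and only the combinatorial gadget and the Binomial-to-Poisson bookkeeping remain; establishing UGC-hardness of that intermediate problem with the correct parameters would then be a self-contained preliminary step. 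A secondary nuisance is the global cardinality constraint $|S|\le k$, which is not ``local'' and must be handled in the folding/decoding step rather than assumed away.
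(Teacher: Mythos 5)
Your high-level architecture---a Feige-style partitioning gadget attached to a Unique Games outer verifier, with the YES/NO gap arising from ``every element covered exactly $\ell$ times'' versus ``$\ell$-coverage of a spread-out selection behaves like $\ex{\min\{\ell,\mathrm{Bin}(r,\ell/h)\}}\to\ell\rho_\ell$''---matches the paper. But the soundness step, which you yourself flag as the main obstacle, is left unresolved, and the route you sketch for it (an invariance principle for $x\mapsto\min\{\ell,x\}$, Mossel-type noise-stability bounds, long-code influence decoding) is both unnecessary and not obviously workable: $\min\{\ell,\cdot\}$ is not smooth, the relevant statistic is a multiset-overlap count rather than a low-degree polynomial, and you give no argument that the required invariance statement holds. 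As written, the proposal does not contain a proof of soundness.

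The idea that closes this gap in the paper is a different choice of outer verifier: an $h$-uniform hypergraph variant of \ug{} ($h$-\kary) with two notions of satisfaction---a hyperedge is \emph{strongly} satisfied if all $h$ projected labels agree, and \emph{weakly} satisfied if at least two agree---and the UGC-equivalent promise is ``most edges strongly satisfiable'' versus ``almost no edge even weakly satisfiable.'' This dichotomy makes soundness purely combinatorial, with no analytic machinery: if a hyperedge is not even weakly satisfied (``inconsistent''), then the projected label sets $\pi_{e,v_x}(L(v_x))$ are pairwise disjoint, so the chosen sets intersect each partition family $\mathcal{P}^e_i$ of that block in at most one member, and property (2) of the gadget bounds the block's coverage by $\fcov^{(\ell)}_{|L(e)|,h}+\eta$ deterministically. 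The remaining bookkeeping---which your proposal also does not address---is controlling selections that concentrate their budget on few blocks: the paper sets $k=|V|$, uses regularity plus Markov's inequality to bound $|L(e)|$ for most edges, and crucially needs that the piecewise-linear extension of $x\mapsto\fcov^{(\ell)}_{x,h}$ is increasing and \emph{concave} so Jensen's inequality converts the bound on $\E_e[|L(e)|]$ into a bound on $\E_e[\fcov^{(\ell)}_{|L(e)|,h}]$. A standard random decoding from the lists $L(v)$ then weakly satisfies $\Omega(\delta^3)$ of the edges, contradicting the NO case. Your closing suggestion to route through an intermediate \kary{} instance is in the right spirit, but what that intermediate problem must supply is the strong/weak satisfaction gap (so that pairwise disjointness of label projections can be exploited), not an $\ell$-wise independent hard distribution for an invariance principle.
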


\paragraph{The Approximation Algorithm} We first observe that for the maximum multi-coverage problem the standard greedy algorithm fails: the approximation guarantee does not improve with $\ell$. As the function $C^{(\ell)}$ is a monotone, submodular function, the greedy algorithm will certainly still achieve an approximation ratio of $1-e^{-1}$. However, it is simple to construct instances wherein exactly this ratio is achieved. In fact, if $\cF$ is a collection of distinct subsets, let $\cF^{(\ell)}$ contain the same subsets as $\cF$ but each one appearing $\ell$ times. Then, it is easy to see that the greedy algorithm, when applied to $\cF^{(\ell)}$, will simply choose $\ell$ times the sets chosen by the algorithm on input $\cF$. So the greedy algorithm is not able to take advantage when we have $\ell > 1$.

Instead, we use another algorithmic idea, which is standard in the context of submodular function maximization. We consider the natural linear programming (LP) relaxation of the problem to obtain a fractional, optimal solution and apply pipage rounding to transform this fractional solution to an integral solution. Pipage rounding was first introduced by Ageev and Sviridenko~\cite{AS04_pipage} for some specific combinatorial problems, and then generalized to submodular function maximization by Calinescu,  Chekuri, P\'al and Vondr\`ak~\cite{vondrak2007submodularity, calinescu2011maximizing}. Pipage rounding is a randomized method that maps a fractional solution $x \in [0,1]^m$ into an integral one $x^{\mathrm{int}} \in \{0,1\}^m$, in a way that preserves the constraints and does not decrease the expected value of the objective function; here the fractional solution $x \in [0,1]^m$ is viewed as a (product) distribution over the index set $[m]$. The expected value of $C^{(\ell)}$ for a set chosen according to a distribution defined by $x$ is known as the multilinear extension of $C^{(\ell)}$ evaluated at $x$, and in fact, the method of pipage rounding works for any submodular function, as submodularity is sufficient to guarantee the convexity properties used in the rounding procedure~\cite{calinescu2011maximizing}. We point out that there are other randomized rounding techniques that could be used, such as swap rounding~\cite{chekuri2009dependent}. For concreteness, we use here the result~\cite[Lemma 3.5]{calinescu2011maximizing}. Another remark is that the maximum multi-coverage problem is a special case of the class of sums of weighted rank functions of matroids for which an algorithm based on rounding a linear program is given in~\cite[Section 4]{calinescu2007ipco}. In fact, the linear program and the algorithm that we analyze here is the same as the one given in~\cite{calinescu2007ipco}. Our contribution here is to show that for the maximum $\ell$-coverage problem, the approximation factor for this algorithm, which is also the ratio between the multilinear relaxation and the linear program, can be improved from $1-e^{-1}$ in ~\cite[Lemma 6]{calinescu2007ipco} to $1 - \frac{\ell^{\ell}}{\ell!} e^{-\ell}$ in our Theorem~\ref{th:main_algo}.


Hence, the core of the analysis of this algorithm is to compute the expected $\ell$-coverage, $\exc{S \sim x}{C^{(\ell)}(S)}$, and relate it to the optimal value of the linear program (which, of course, upper bounds the value of an integral, optimal solution). With a careful use of convexity, one can establish that the analytic form of this expectation corresponds to the expected value of a binomial random variable truncated at $\ell$. 

To obtain the claimed approximation ratio (which, as mentioned above, has a Poisson interpretation), one would like to use the well-known Poisson approximation for binomial distributions. However, this convergence statement is only asymptotic and thus will lead to an error term that will depend on the size of the problem instance and on the value of $\ell$. One can alternatively try to compare the two distributions using the natural notion of stochastic domination. It turns out that indeed a binomial distribution can be stochastically dominated by a Poisson distribution, but this again cannot be used in our setting for two reasons: there is a loss in terms of the underlying parameters (and, hence, this cannot lead to a tight approximation factor) and more importantly the inequality goes in the wrong direction.\footnote{Here, the Poisson distribution stochastically dominates the binomial. Hence, instead of a lower bound, we obtain an upper bound.}

The right tool for us turns out to be the notion of \emph{convex order} between distributions. It expresses the property that one distribution is more ``spread'' than the other. While this notion has found several applications in statistics, economics, and other related fields (see \cite{bookStochOrder} and references therein), to the best of our knowledge, it is not a commonly used tool in the context of analyzing approximation algorithms.
In particular, it leads to tight comparison inequalities between binomial and Poisson distributions, even in non-asymptotic regimes (see Lemma~\ref{lem:convex-order-bin-poi}). Overall, using this tool we are able to obtain optimal approximation guarantees for all values of $\ell$.


We also note that our algorithmic result directly generalizes to the weighted version of maximum $\ell$-coverage and we can replace the constraint $|S| \leq k$ by a matroid constraint $S \in \mathcal{M}$; here, $\mathcal{M}$ is any matroid that admits an efficient, optimization algorithm (equivalently, any matroid whose basis polytope admits an efficient separation oracle). To keep the exposition simple, we conform to the unweighted case and to the cardinality constraint $|S| \leq k$ and only discuss the generalization in Section~\ref{sec:generalization}.

\paragraph{Hardness Result} 
	We now give a brief description of our hardness result and the techniques used to establish it. In \cite{Fei98}, the $(1 - 1/e)$ inapproximability of the standard maximum coverage problem was shown using the tight $\ln n$ inapproximability of the \emph{set cover} problem, which in turn was obtained via a reduction from a variant of max-3-sat. However, in our setting, one cannot hope to show tight inapproximability for the maximum $\ell$-multi-coverage problem by a similar sequence of reductions. This is because, as detailed in Section~\ref{section:related-work}, the multi-coverage analogue of the set cover problem is as inapproximable as the usual set cover problem. Therefore, one cannot hope to directly reuse the arguments from Feige's reduction in order to get tight inapproximability for the maximum $\ell$-multi-coverage problem. We bypass this by developing a direct reduction to the maximum $\ell$-multi-coverage problem without going through the set cover variant. 
	
	Our reduction is from a $h$-ary hypergraph variant of $\ug$ \cite{khot2002power, khot2005unique}, which we call $h$-$\kary$. Here the constraints are given by $h$-uniform hyperedges on a vertex set $V$ with a label set $\Sigma$. A salient feature of the $h$-$\kary$, which is crucially used in our reduction, is that it involves two distinct notions of satisfied hyperedges, namely \emph{strongly} and \emph{weakly} satisfied hyperedges. A labeling $\sigma:V \mapsto \Sigma$ strongly satisfies a hyperedge $e = (v_i)_{i \in [h]}$ if all the labels project to the same alphabet, i.e., $\pi_{e,v_1}(\sigma(v_1)) = \pi_{e,v_2}(\sigma(v_2)) = \cdots = \pi_{e,v_3}(\sigma(v_h))$. We say a labelling $\sigma$ weakly satisfies the hyperedge $e$ if at least two of the projected labels match, i.e., $\pi_{e,v_i}(\sigma(v_i)) = \pi_{e,v_j}(\sigma(v_j))$ for some $i,j \in [h], i \neq j$. The equivalent of Unique Games Conjecture for these instances is the following: {\em It is NP-Hard to distinguish between whether (YES): most hyperedges can be strongly satisfied or (NO): even a small fraction of hyperedges cannot be weakly satisfied.}
	
	We employ the above variant of $\ug$ with a generalization of Feige's partitioning gadget, which has been tailored to work with the $\ell$-coverage objective $C^{(\ell)}(\cdot)$. This gadget is essentially a collection of $s$ set families $\mathcal{P}_1,\mathcal{P}_2,\ldots,\mathcal{P}_s$ over a universe $[\widehat{n}]$ satisfying (i) Each family $\mathcal{P}_i$ is a collection of sets such that each element in $[\widehat{n}]$ is covered exactly $\ell$-times i.e, it has (normalized) $\ell$-coverage $1$ (ii) Any choice of sets $S_1,S_2,\ldots,S_h$ from distinct families has  $\ell$-coverage at most $\rho_\ell $ (the target approximation ratio). We combine the $h$-$\kary$ instance with the partitioning gadget by associating each hyperedge constraint with a disjoint copy of the gadget. The construction of the set family in our reduction ensures that sets corresponding to strongly satisfied edges use property (i), whereas sets corresponding to not even weakly satisfied hyperedges use property (ii). Since in the YES case, most hyperedges can be strongly satisfied, we get that there exists a choice of sets for which the normalized $\ell$-multi-coverage is close to $1$. On the other hand, in the NO case, since most hyperedges are not even weakly satisfied, for any choice of sets, the normalized $\ell$-multi-coverage will be at most $\rho_\ell$. Combining the two cases gives us the desired inapproximability. 
	



\subsection{Related Covering Problems and Submodular Function Maximization} 
\label{section:related-work}

Another fundamental problem in the covering context is the \emph{set cover} problem: given subsets $T_1, \dots, T_m$ of a universe $[n]$, the objective is to find the set $S \subseteq [m]$ of minimal cardinality that cover all of $n$, i.e., $C(S) = \cup_{i \in S} T_i = [n]$. This is one of the first problems for which approximation algorithms were studied: Johnson~\cite{Joh74} showed that the natural greedy algorithm achieves an approximation ratio of $1+\ln n$ and much later Feige~\cite{Fei98}, building on a long line of works, established a matching inapproximability result.\footnote{See~\cite{moshkovitz2012projection} for a hardness result based on ${\rm P} \neq {\rm NP}$}. 

Along the lines of maximum coverage, one can also consider the $\ell$-version of set cover. In this version, the goal is to find the smallest set $S$ such that $C^{(\ell)}(S) = \ell n$ (this corresponds to every element being covered at least $\ell$ times). Here, with $\ell >1$, we observe an interesting dichotomy: while one achieves improved approximation guarantees for the maximum $\ell$-multi-coverage, this is not the case for set $\ell$-cover. In particular, set $\ell$-cover is essentially as hard as to approximate as the standard set cover problem. To see this, consider the instance where $\cF^{(\ell)}$ is obtained from $\cF$ by adding $\ell-1$ copies of the whole set $[n]$. Then, we have that $[n]$ can be $1$-covered with $k$ sets in $\cF$ if and only if $[n]$ can be $\ell$-covered with $k+\ell-1$ sets in $\cF^{(\ell)}$.

A well-studied generalization of the set cover problem, called the set multicover problem, requires element $e \in [n]$ to be covered at least $d_e$ times, where the demand $d_e$ is part of the input. The greedy algorithm was shown to also achieve $1+\ln n$ approximation for this problem as well~\cite{Dob82,RV98}. Even though there has been extensive research on set multicover, its variants, and applications (see e.g.,~\cite{BDS07}), we are not aware of any previous work that considers the maximum multi-coverage problem.

The problem of maximum coverage fits within the larger framework of \emph{submodular function maximization}~\cite{NWF78}. In fact, the covering function $C : 2^{[n]} \to \RR$ is submodular in the sense that it satisfies a diminishing-returns property: $C(S \cup \{i\}) - C(S) \geq C(S' \cup \{i\}) - C(S')$ for any $S \subseteq S'$ and $i \notin S'$. Nemhauser et al.~\cite{NWF78} showed that the greedy algorithm achieves the ratio $1-e^{-1}$ not only for the coverage function $C$ but for any submodular function. Submodular functions are a central object of study in combinatorial optimization and appear in a wide variety of applications; we refer the reader to~\cite{KG12} for a textbook treatment of this topic. Here, an important thread of research is that of maximizing submodular functions that have an additional structure which render them closer to linear functions. Specifically, the notion of curvature of a function was introduced by~\cite{CC84}. The curvature of a monotone submodular function $f : 2^{[m]} \to \mathbb{R}$ is a parameter $c \in [0,1]$ such that for any $S \subset [m]$ and $j \notin S$, we have $f(S \cup \{j\}) - f(S) \geq (1-c) f(\{j\})$. Note that if $c=0$, this means that $f$ is a linear function and if $c=1$, the condition is clearly satisfied.

Conforti and Cornu{\'e}jols~\cite{CC84} have shown that when the greedy algorithm is applied to a function with curvature $c$, the approximation guarantee is $\frac{1}{c}(1-e^{-c})$.  Using a different algorithm, this was later improved by Sviridenko et al.~\cite{SVW17} to a factor of approximately $1-\frac{c}{e}$. This notion of curvature does have applications in some settings (see e.g.,~\cite{SVW17} and references therein), but the requirement is too strong and does not apply to the $\ell$-coverage function $C^{(\ell)}$. In fact, if $S$ is such that the sets $T_i$ for $i \in S$ cover all the universe at least $\ell$ times, then adding another set $T_j$ will not change the function $C^{(\ell)}$. Another way to see that this condition is not adapted to our $\ell$-coverage problem is that we know that the greedy algorithm will not be able to beat the factor $1-e^{-1}$ for any value of $\ell$. We hope that this work will help in establishing a more operational way of interpolating between general submodular functions and linear functions.




\subsection{Applications}
We now briefly discuss some applications of the $\ell$-coverage problem, the main message being that for most settings where coverage is used, $\ell$-coverage has a very natural and meaningful interpretation as well. We leave the more detailed discussion of such applications for future work. 
%

Our initial motivation for studying the maximum multi-coverage problem was in understanding the complexity of finding the code for which the list-decoding success probability is optimal. More precisely, consider a noisy channel with input set $X$ and output set $Y$ that maps an input $x \in X$ to $y \in Y$ with probability $W(y|x)$. To simplify the discussion, assume that for any input $x$, the output is uniform on a set $T_x$ of size $t$, i.e., $W(y|x) = \frac{1}{t}$ if $y \in T_x$ and $0$ otherwise. We would like to send a message $m$ belonging to the set $\{1, \dots, k\}$ using this noisy channel in such a way to maximize the probability of successfully decoding the message $m$. It is elementary to see that this problem can be written as one of maximizing the quantity $ \frac{1}{t k} | \cup_{x \in S} T_x |$ over codes $S \subseteq X$ of size $k$~\cite{BF15}. Thus, the problem of finding the optimal code can be written as a covering problem, and handling general noisy channels corresponds to a weighted covering problem. This connection was exploited in~\cite{BF15} to prove tightness of the bound known as the \emph{meta-converse} in the information theory literature~\cite{PPV10} and to give limitations on the effect of quantum entanglement to decrease the communication errors. Suppose we now consider the list-decoding success probability, i.e., the receiver now decodes $y$ into a list of size $\ell$ and we deem the decoding successful if $m$ is in this list. Then the success probability can be written as: $\frac{1}{t k} \sum_{y \in Y} \min\{ \ell, | \{ x \in S : y \in T_x \}|\}$, i.e., an $\ell$-coverage function. Our main result thus shows that the code with the maximum list-decoding success probability can be approximated to a factor of $\rho_{\ell} = 1 - \frac{\ell^\ell e^{-\ell}}{\ell !}$ and it shows that the meta-converse for list-decoding is tight within the factor $\rho_{\ell}$.

The applicability of the multi-coverage can also be observed in game-theoretic settings in which the (standard) covering function is used to represent valuations of agents; see, e.g., works on combinatorial auctions~\cite{dughmi2016optimal,dobzinski2006improved}. As a stylized instantiation, consider a setup wherein the elements in the ground set represent types of goods and the given subsets correspond to bundles of goods (of different types). Assuming that, for each agent, goods of a single type are perfect substitutes of each other, one obtains valuations (defined over the bundles) that correspond to covering functions. In this context, the $\ell$-multi-coverage formulation provides an operationally-useful generalization: additional copies (of the same type of the good) are valued, till a threshold $\ell$. Indeed, our algorithmic result shows that if the diminishing-returns property does not come into effect right away, then better (compared to $1- e^{-1}$) approximation guarantees can be obtained.


\section{Approximation Algorithm for the $\ell$-Multi-Coverage Problem}


The algorithm we analyze is simple and composed of two steps (relax and round): First, we solve the natural linear programming relaxation (see \eqref{lp_relaxation}) obtaining a fractional, optimal solution $x^* \in [0,1]^m$, which satisfies $\sum_{i \in [m]} x_i^* = k$. The second step is to use \emph{pipage rounding} to find an \emph{integral} vector $x^{\mathrm{int}} \in \{0,1\}^m$ with the property that $\sum_{i \in [m]} x^{\mathrm{int}}_i = k$. This is the size-$k$ set returned by the algorithm, $S =\{i \in [m] : x^{\mathrm{int}}_i = 1 \}$. These two steps are detailed below. 

\paragraph{Step 1. Solve the Linear Programming Relaxation:} 

Specifically, we consider the following linear programming relaxation of the $\ell$-multi-coverage problem. Here, with the given collection of sets $\mathcal{F} = \{T_1, T_2, \ldots, T_m\}$, the set $\Gamma_e \coloneqq \{ i \in [m] : e \in T_i \}$ denotes the indices of $T_i$s that contain the element $e$. 
\begin{align}\label{lp_relaxation}
\begin{aligned}
& \underset{x, c}{\max}
& & \sum_{e \in [n]} c_{e} \\
& \text{subject to}
& &c_e \leq \ell \quad \forall e \in [n] \\
& 
& &c_e \leq \sum_{i \in \Gamma_e} x_i  \quad \forall e \in [n] \\
& 
& & 0 \leq x_i \leq 1 \quad \forall i \in [m] \\ 
&
& & \sum_{i=1}^m x_i = k \,.
\end{aligned}
\end{align}
In this linear program (LP), the number of variables is $n+m$ and the number of constraints is $O(n+m)$ and, hence, an optimal solution can be found in polynomial time.



\paragraph{Step 2. Round the fractional, optimal solution:} We round the computed fractional solution $x^*$ by considering the {\em multilinear extension} of the objective, and applying pipage rounding~\cite{AS04_pipage, vondrak2007submodularity, calinescu2011maximizing} on it. Formally, given any function $f : \{0,1\}^m \to \RR$, one can define the multilinear extension $F : [0,1]^m \to \RR$ by $F(x_1, \dots, x_m) \coloneqq \ex{f(X_1, \dots, X_m)}$, where $X_1, \dots, X_m \in \{0,1\}$ are independent random variables with $\pr{X_i = 1} = x_i$. 

For a submodular function $f$, one can use pipage rounding to transform, in polynomial time, any fractional solution $x \in [0,1]^m$ satisfying $\sum_{i \in [m]} x_i = k$ into a random integral vector $x^{\mathrm{int}} \in \{0,1\}^m$ such that $\sum_{i \in [m]} x^\mathrm{int}_i = k$ and $\ex{F(x^{\mathrm{int}})} \geq F(x)$. Specifically, we use~\cite[Lemma 3.5]{calinescu2011maximizing}. We apply this strategy for the $\ell$-coverage function and the fractional, optimal solution $x^*$ of the LP relaxation~\eqref{lp_relaxation}. It is simple to check that the $\ell$-coverage function $C^{(\ell)}$ is submodular. We thus get the following lower bound for the $\ell$-coverage value of the set returned by the algorithm:\footnote{That is, a lower bound for the $\ell$-coverage value of the size-$k$ set $\{i \in [m] : x^{\mathrm{int}}_i = 1 \}$.}
\begin{align*}
\ex{C^{(\ell)}(x^{\mathrm{int}})} &= \exc{(X_1, \dots X_m) \sim (x^{\mathrm{int}}_1, \dots x^{\mathrm{int}}_m)}{C^{(\ell)}(X_1, \dots, X_m)} \\
&\geq \exc{(X_1, \dots X_m) \sim (x^{*}_1, \dots x^{*}_m)}{C^{(\ell)}(X_1, \dots, X_m)}.
\end{align*}

To conclude it suffices to relate $\exc{(X_1, \dots X_m) \sim (x^{*}_1, \dots x^{*}_m)}{C^{(\ell)}(X_1, \dots, X_m)}$ to the value taken by the LP at the optimal solution $x^*=(x^*_1, \dots, x^*_m)$. In particular, Theorem~\ref{theorem:algorithm} directly follows from the following result (Theorem~\ref{th:main_algo}), which provides a lower bound in terms of the value achieved by the LP relaxation. 

Indeed, this randomized algorithm is quite direct: it simply solves a linear program and applies pipage rounding. We consider this as a positive aspect of the work and note that our key technical contribution lies in the underlying analysis. 

\begin{theorem}
\label{th:main_algo}
Let $x \in [0,1]^m$ and $c \in [0,1]^n$ constitute a feasible solution of the LP relaxation \eqref{lp_relaxation}. Then we have,
\begin{align*}
\exc{(X_1, \dots, X_m) \sim (x_1, \dots, x_m)}{C^{(\ell)}(X_1, \dots, X_m)} \geq \rho_{\ell} \sum_{e \in [n]} c_e
\end{align*}
where $\rho_{\ell}$ is defined by
\begin{align}
\label{eq:def-rho}
\rho_{\ell} := 1 - \frac{\ell^{\ell}}{\ell!} e^{-\ell}.
\end{align}
\end{theorem}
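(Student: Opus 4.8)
The plan is to reduce, by linearity of expectation, to a statement about a single element, in which the quantity to be controlled is the expected value of a \emph{truncated Poisson-binomial} random variable, and then to invoke the \emph{convex order} exactly as foreshadowed in the introduction. Concretely, I would expand $\exc{X\sim x}{C^{(\ell)}(X)}=\sum_{e\in[n]}\ex{\min\{\ell, N_e\}}$, where $N_e:=\sum_{i\in\Gamma_e}X_i$ is a sum of independent Bernoulli random variables with parameters $(x_i)_{i\in\Gamma_e}$ and mean $\mu_e:=\sum_{i\in\Gamma_e}x_i$. The LP constraints give $0\le c_e\le\ell$ and $c_e\le\mu_e$, hence $c_e\le\min\{\ell,\mu_e\}$; since $\rho_\ell\ge0$, it suffices to prove, for each $e$,
\[
\ex{\min\{\ell, N_e\}}\ \ge\ \rho_\ell\cdot\min\{\ell,\mu_e\},
\]
and then sum over $e$.

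The key step is the following comparison. The map $t\mapsto\min\{\ell,t\}$ is concave, so $\ex{\min\{\ell,\cdot\}}$ can only decrease when its argument is replaced by a random variable that is larger in the convex order. Since $\mathrm{Bern}(p)\preceq_{\mathrm{cx}}\mathrm{Poi}(p)$ for every $p\in[0,1]$ --- the binomial-versus-Poisson inequality provided by Lemma~\ref{lem:convex-order-bin-poi}, or simply its one-term special case --- and the convex order is stable under independent convolution (see, e.g., \cite{bookStochOrder}), we get $N_e\preceq_{\mathrm{cx}}\sum_{i\in\Gamma_e}\mathrm{Poi}(x_i)=\mathrm{Poi}(\mu_e)$, and therefore
\[
\ex{\min\{\ell,N_e\}}\ \ge\ \ex{\min\{\ell,\mathrm{Poi}(\mu_e)\}}\ =:\ h(\mu_e).
\]
(To make a genuine binomial appear, as the introduction indicates, one can instead first symmetrize the parameter vector $(x_i)_{i\in\Gamma_e}$ to the uniform one with the same sum --- a majorization/Schur-convexity step that only decreases $\ex{\min\{\ell,\cdot\}}$ --- turning $N_e$ into $\mathrm{Bin}(|\Gamma_e|,\mu_e/|\Gamma_e|)$, and then apply Lemma~\ref{lem:convex-order-bin-poi}.)

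It then remains to show $h(\mu)\ge\rho_\ell\min\{\ell,\mu\}$ for all $\mu\ge0$, which is a short calculus exercise. From the elementary identity $\frac{d}{d\mu}\ex{\phi(\mathrm{Poi}(\mu))}=\ex{\phi(\mathrm{Poi}(\mu)+1)-\phi(\mathrm{Poi}(\mu))}$, specialized to $\phi=\min\{\ell,\cdot\}$ (so that $\phi(k+1)-\phi(k)=\mathbbm{1}[k\le\ell-1]$), one obtains $h'(\mu)=\pr{\mathrm{Poi}(\mu)\le\ell-1}$, which is non-negative and non-increasing in $\mu$; hence $h$ is non-decreasing and concave on $[0,\infty)$. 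Moreover $h(0)=0$ and $h(\ell)=\ex{\min\{\ell,\mathrm{Poi}(\ell)\}}=\rho_\ell\,\ell$, the last equality being the Poisson reformulation of $\rho_\ell$ recorded after Theorem~\ref{theorem:algorithm} (equivalently, an elementary identity). Concavity then forces the graph of $h$ to lie above the chord through $(0,0)$ and $(\ell,\rho_\ell\ell)$ on $[0,\ell]$, i.e.\ $h(\mu)\ge\rho_\ell\mu$ there, while monotonicity gives $h(\mu)\ge h(\ell)=\rho_\ell\ell$ for $\mu\ge\ell$; in both ranges $h(\mu)\ge\rho_\ell\min\{\ell,\mu\}$, completing the argument.

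I expect the main obstacle to be not the reduction above --- which amounts to linearity of expectation, concavity of $\min\{\ell,\cdot\}$, and the two-line analysis of $h$ --- but the convex-order comparison between Poisson-binomial and Poisson laws with matched means, namely Lemma~\ref{lem:convex-order-bin-poi} itself. As the introduction stresses, one cannot substitute the asymptotic Poisson limit (which leaks an instance-dependent error term) nor stochastic dominance (which holds only as $\mathrm{Poi}(\mu)\succeq N_e$, yielding an upper rather than the desired lower bound), so producing a \emph{tight}, non-asymptotic inequality is the real technical content.
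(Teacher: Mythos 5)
Your proof is correct, and it takes a genuinely shorter route than the paper's. The paper also reduces to a per-element inequality, but it then invokes a ``three values'' lemma to force all $x_i \in \{0,1,q\}$, peels off the indices with $x_i=1$, rewrites the expectation as a truncated binomial, applies convexity of the polynomial $x \mapsto \sum_{m<s}(s-m)\binom{t}{m}x^m(1-x)^{t-m}$ twice (in $q$ and then along a chord), uses the binomial-vs-Poisson convex order only at the very end, and closes with a separate monotonicity argument for the sequence $\ell^{\ell+1}e^{-\ell}/\ell!$. You instead pass to the Poisson \emph{first}: the heterogeneous version of Lemma~\ref{lem:convex-order-bin-poi} --- the one-term inequality $\mathrm{Ber}(x_i) \leq_{\mathrm{cvx}} \mathrm{Poi}(x_i)$ combined with closure of the convex order under independent convolution, exactly the two ingredients the paper's own proof of that lemma supplies --- gives $N_e \leq_{\mathrm{cvx}} \mathrm{Poi}(\mu_e)$ directly for the Poisson-binomial $N_e$, and then all remaining convexity analysis collapses into the smooth one-parameter function $h(\mu)=\ex{\min\{\ell,\mathrm{Poi}(\mu)\}}$, whose concavity via $h'(\mu)=\pr{\mathrm{Poi}(\mu)\le \ell-1}$ and endpoint values $h(0)=0$, $h(\ell)=\rho_\ell\ell$ immediately yield the chord bound $h(\mu)\ge\rho_\ell\min\{\ell,\mu\}$. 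This dispenses with Lemma~\ref{lem:3-values}, Lemma~\ref{lem:function-convex}, and the final sequence-monotonicity step; what the paper's longer route buys is an explicit handle on the truncated binomial itself (mirroring the quantity $\fcov^{(\ell)}_{|T|,h}$ that reappears in the hardness gadget), but for Theorem~\ref{th:main_algo} alone your argument is complete as stated. The one point worth flagging explicitly in a write-up is that you are using Theorem~3.A.12 of~\cite{bookStochOrder} for \emph{non-identically distributed} independent summands, which that theorem does cover; your optional Schur-convexity detour through a genuine binomial is also valid (it is Hoeffding's extremality of the binomial among Poisson-binomials of fixed mean) but unnecessary.
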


In fact, as we show in Lemma~\ref{lem:Ce} below, the inequality holds for every element $e \in [n]$. Before getting into the proof of this result, we establish some useful properties of the quantity $\rho_{\ell}$. 

\subsection{Preliminaries and Properties of the Approximation Ratio $\rho_{\ell}$}

Throughout, we will use $\mathrm{Poi}(\cdot)$, $\mathrm{Bin}(\cdot, \cdot)$, and $\mathrm{Ber}(\cdot)$ to, respectively, denote Poisson, Binomial, and Bernoulli random variables with appropriate parameters. 

 The next lemma uses this notion to prove the desired relation and, hence, highlights an interesting application of convex orders in the context of approximation algorithms. 
\begin{lem}
	\label{lem:convex-order-bin-poi}
	For any convex function $f$, any integer $N \geq 1$ and parameter $p \in [0,1]$, we have 
	\begin{align}
		\label{eq:cvx_order}
		\ex{f(\mathrm{Bin}(N,p))} \leq \ex{f(\mathrm{Poi}(Np))} \ .
	\end{align}
\end{lem}
\begin{proof}
	The notion of convex order between two distributions is defined as follows. If $X$ and $Y$ are random variables, we say that $X \leq_{\mathrm{cvx}} Y$ iff $\ex{f(X)} \leq \ex{f(Y)}$ holds for any convex function $f : \RR \to \RR$. We refer the reader to~\cite[Section 3.A]{bookStochOrder} for more information and properties of this order.
	As a result, the lemma will follow once we show that
	\begin{align}
		\label{eq:bin_poi_convex_order}
		\mathrm{Bin}(N,p) \leq_{\mathrm{cvx}} \mathrm{Poi}(Np).
	\end{align}
	
	First, we note that it suffices to prove this inequality for $N = 1$; this is a direct consequence of the fact that the convex order is closed under convolution~\cite[Theorem 3.A.12]{bookStochOrder} (i.e., it is closed under the addition of independent random variables).\footnote{Recall that if $X$ and $Y$ are independent, Poisson random variables with rate parameters $\lambda_1$ and $\lambda_2$, respectively, then $X+Y$ is Poisson-distributed with parameter $\lambda_1 + \lambda_2$.}
	
	Now, using~\cite[Theorem 3.A.2]{bookStochOrder}, we have that equation \eqref{eq:bin_poi_convex_order} for $N=1$ is equivalent to showing that $\ex{|\mathrm{Ber}(p) - a|} \leq \ex{|\mathrm{Poi}(p) - a|}$ for any $a \in \RR$. To prove this, we perform a case analysis. The cases $a \leq 0$ or $a \geq 1$ are simple: here, we have $\ex{|\mathrm{Ber}(p) - a|} = |p-a|$ and we always have $\ex{|\mathrm{Poi}(p) - a|} \geq \max\{\ex{\mathrm{Poi}(p) - a}, \ex{a - \mathrm{Poi}(p)}\} = |p-a|$. If $a \in (0,1)$, then 
	\begin{align*}
		\ex{|\mathrm{Ber}(p) - a|} &=  (1-p)a + p (1-a) = p+a-2ap \\
		\ex{|\mathrm{Poi}(p) - a|} &=  2a e^{-p} + p - a \geq p+a-2ap \ ,
	\end{align*}
	which concludes the proof.
\end{proof}

\label{sec:prop_rho_l}
\begin{lem}
\label{lemma:rhol-binomial}
We have
\begin{align*}
\rho_{\ell} &= 1 - \sum_{q=0}^{\ell-1} \frac{\ell - q}{\ell} \frac{\ell^{q}}{q!} e^{-\ell} =\frac{1}{\ell} \sum_{q=1}^{\ell} \pr{\mathrm{Poi}(\ell) \geq q} 
	= \frac{1}{\ell} \ex{\min\{\ell, \mathrm{Poi}(\ell)\}} \ .
\end{align*}
In addition, the following inequality holds for any $t \geq \ell$,
\begin{align}
\label{eq:rhol-binomial}
\rho_{\ell} \leq  \frac{1}{\ell} \ex{\min\left\{\ell, \mathrm{Bin}\left(t,\frac{\ell}{t}\right)\right\}} \ .
\end{align}
\end{lem}
\begin{proof}
To see the first equality, write
\begin{align*}
\sum_{q=0}^{\ell-1} (\ell - q) \frac{\ell^{q}}{q!} e^{-\ell}
&= e^{-\ell}  \sum_{m=1}^{\ell-1} \left( \frac{\ell^{q+1}}{q!} - \frac{\ell^{q}}{(q-1)!} \right) + \ell e^{-\ell} \\
&= e^{-\ell} \frac{\ell^{\ell}}{(\ell-1) !} \tag{telescoping sum}\ ,
\end{align*}
which gives the desired expression. 

The second equality is obtained by substituting the distribution function of $\mathrm{Poi}(\ell)$ and the third inequality follows from the tail-sum formula (applied over the random variable $\min\{ \ell, \mathrm{Poi}(\ell)\}$). 

To prove inequality~\eqref{eq:rhol-binomial}, it suffices to apply Lemma~\ref{lem:convex-order-bin-poi} to the concave function $\phi_{\ell} : x \mapsto \min\{x, \ell\}$ with $N = t$ and $p = \frac{\ell}{t}$.
\end{proof}

The following lemma gives a relation between the binomial distribution $\mathrm{Bin}(N,p)$ and the Poisson distribution $\mathrm{Poi}(Np)$. It is well-known that, for a constant $c$, $\mathrm{Bin}(N,c/N)$ converges to $\mathrm{Poi}(c)$ as $N$ grows. For the analysis of our algorithm, we in fact need a non-asymptotic relation between these two distributions ensuring that $\ex{\phi_{\ell}(\mathrm{Poi}(c))} \leq \ex{\phi_{\ell}(\mathrm{Bin}(N,c/N))}$, for the function $\phi_{\ell} : x \mapsto \min\{\ell, x\}$. Such a property is captured by the notion of convex order between distributions~\cite{bookStochOrder}.

 We also need a lemma about the convexity of the following function. The proof of this lemma is deferred to Appendix~\ref{sec:function-convex}.
\begin{restatable}{lem}{fnconvex} \label{lem:function-convex}
	For any nonnegative integers $s$ and $t$, the function
	\begin{align*}
		f : x \mapsto \sum_{q=0}^{s-1} (s-q) \binom{t}{q} x^{q} (1 - x)^{t-q} 
	\end{align*}
	is non-increasing and convex in the interval $[0,1]$. Note that $\binom{t}{q} = 0$ when $q > t$.
\end{restatable}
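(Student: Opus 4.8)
The plan is to recognize $f$ as $\ex{h(\mathrm{Bin}(t,x))}$ for a fixed, non-increasing and discretely convex function $h$, and then to differentiate twice using a standard discrete-difference identity for the binomial. Since $(s-m)^+ := \max\{s-m,0\}$ equals $s-m$ for $0 \le m \le s-1$ and equals $0$ for $m \ge s$, and since the terms with $m > t$ vanish under the convention $\binom{t}{m}=0$, I would first rewrite
\[
f(x) = \sum_{m \ge 0} (s-m)^+ \binom{t}{m} x^m (1-x)^{t-m} = \ex{h(\mathrm{Bin}(t,x))}, \qquad h(m) := (s-m)^+ .
\]

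Next I would record the key tool: for every function $g : \mathbbm{Z}_{\geq 0} \to \RR$ and every integer $t \ge 1$,
\[
\frac{d}{dx}\, \ex{g(\mathrm{Bin}(t,x))} = t\, \ex{\Delta g(\mathrm{Bin}(t-1,x))}, \qquad \Delta g(m) := g(m+1) - g(m),
\]
which follows by differentiating $\sum_m g(m)\binom{t}{m}x^m(1-x)^{t-m}$ termwise and re-indexing the two resulting sums, using $m\binom{t}{m} = t\binom{t-1}{m-1}$ and $(t-m)\binom{t}{m} = t\binom{t-1}{m}$. Applying this to $g = h$, and noting that $\Delta h(m) = -\mathbbm{1}[m \le s-1]$, gives $f'(x) = -t\,\pr{\mathrm{Bin}(t-1,x) \le s-1} \le 0$, so $f$ is non-increasing. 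Applying the identity a second time, to $g = \Delta h$ (legitimate once $t \ge 2$), and computing $\Delta^2 h(m) = \mathbbm{1}[m = s-1]$, gives $f''(x) = t(t-1)\,\pr{\mathrm{Bin}(t-2,x) = s-1} \ge 0$, so $f$ is convex. The degenerate cases $s = 0$ (then $f \equiv 0$), $t = 0$ (then $f \equiv s$), and $t = 1$ (then $f$ is affine in $x$) I would dispatch by inspection.

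I do not expect a genuine obstacle: the only care required is the termwise differentiation and re-indexing in the identity, the evaluation of the first and second discrete differences of $h$, and the handful of boundary cases. The one conceptual point I would flag is that one should resist proving convexity term by term — writing $f(x) = \sum_{j=0}^{s-1} \pr{\mathrm{Bin}(t,x) \le j}$ expresses $f$ as a sum of binomial cumulative distribution functions, but each such CDF, as a function of $x$ on $[0,1]$, has an inflection point and is \emph{not} convex; convexity emerges only after the terms are combined against the weights $s-m$, and the differentiation identity above is precisely what makes this cancellation transparent, collapsing a sum of probabilities back into a single monotone CDF.
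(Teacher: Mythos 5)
Your proof is correct, and the route is a cleaner repackaging of what the paper does rather than the same write-up. The paper differentiates $f$ termwise, combines the coefficients of $x^m(1-x)^{t-m-1}$ by hand using $m\binom{t}{m}=t\binom{t-1}{m-1}$ and $(t-m)\binom{t}{m}=t\binom{t-1}{m}$, and watches everything telescope down to a single nonnegative term, $f''(x)=\binom{t}{s-1}(t-s+1)(t-s)\,x^{s-1}(1-x)^{t-s-1}$; it also splits off the case $s\ge t$ separately, where $f(x)=s-tx$ is affine. You instead isolate the one identity that drives the cancellation, $\frac{d}{dx}\,\mathbb{E}[g(\mathrm{Bin}(t,x))]=t\,\mathbb{E}[\Delta g(\mathrm{Bin}(t-1,x))]$, and apply it twice to $h(m)=(s-m)^+$, reducing the whole computation to $\Delta h=-\mathbbm{1}[m\le s-1]$ and $\Delta^2 h=\mathbbm{1}[m=s-1]$; your final expression $t(t-1)\Pr[\mathrm{Bin}(t-2,x)=s-1]$ agrees exactly with the paper's, and it also absorbs the paper's $s\ge t$ case automatically (the probability is then zero, so $f$ is affine). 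What your approach buys is transparency and reusability — the sign of $f'$ and $f''$ is read off from the monotonicity and discrete convexity of $h$, which is the conceptual reason the lemma is true — at the cost of having to state and verify the differentiation identity and to dispatch the boundary cases $t\in\{0,1\}$ and $s=0$ separately, which the paper's direct computation handles implicitly. Your closing remark about the failure of a term-by-term convexity argument is apt and not something the paper discusses.
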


Finally, we use the following lemma used in the analysis is standard, see e.g.,~\cite{Fei09}.
\begin{lem}					
\label{lem:3-values}
	Let $x \in [0,1]^m$ be such that $x_1 + \dots + x_m = t$ with $t$ integer. We use the notation $\prc{x}{.}$ to compute probabilities where $X_1, \dots, X_m$ are independent Bernouilli random variables with $\pr{X_i=1} = x_i$. Then for any expression of the form $\sum_{\tau} a_{\tau} \prc{x}{\sum_{i=1}^m X_i = \tau}$, there exists a $q \in (0,1)$ and $x' \in [0,1]^m$ such that $x'_i \in \{0,1,q\}$ for all $i \in [m]$ and $\sum_{i} x'_i = t$ and $\sum_{\tau} a_{\tau} \prc{x}{\sum_{i=1}^m X_i = \tau} \leq \sum_{\tau} a_{\tau} \prc{x'}{\sum_{i=1}^m X_i = \tau}$. This also works for minimizing the probability. 
\end{lem}
\begin{proof}
	Let $x \in [0,1]^m$ be such that it achieves the maximum for the quantity $\sum_{\tau} a_{\tau} \prc{x}{\sum_{i=1}^m X_i = \tau}$ subject to $x_1 + \dots + x_m = t$. For the purpose of contradiction, assume $x_1$ and $x_2$ take different values in $(0,1)$. Then let $p = x_1 + x_2$. And assign $x'_1 = y$ and $x'_2 = p - y$. Then for any $\tau$, we have 
	\begin{align*}
		\prc{x'}{\sum_i X_i = \tau} &= \prc{x'}{\sum_{i\geq 3} X_i = \tau} \prc{x'}{X_1 + X_2 = 0 | \sum_{i\geq 3} X_i = \tau} \\
		&+ \prc{x'}{\sum_{i\geq 3} X_i = \tau-1} \prc{x'}{X_1 + X_2 = 1 | \sum_{i\geq 3} X_i = \tau-1} \\
		&+ \prc{x'}{\sum_{i\geq 3} X_i = \tau-2} \prc{x'}{X_1 + X_2 = 2 | \sum_{i\geq 3} X_i = \tau-2} \ .
	\end{align*}
	Consider the first term. We have $\prc{x'}{X_1 + X_2 = 0 | \sum_{i\geq 3} X_i = \tau} = (1-y)(1-(p-y))$ and so the first term is a polynomial of degree $2$ in $y$ and symmetric under the exchange $y \leftrightarrow p-y$.  The same holds for the other two terms. Furthermore, even if we are considering a sum of such terms, then we still get a symmetric polynomial of degree $2$. So the minimum and the maximum are either achieved at the boundary with $y \in \{0, p\}$ (or $\{p, 1\}$ if $p > 1$) or when $y = p - y$. This contradicts the fact that $x$ maximizes $\sum_{\tau} a_{\tau} \prc{x}{\sum_{i=1}^m X_i = \tau}$.
\end{proof}

\subsection{Proof of Theorem~\ref{th:main_algo}}

We now state and prove the main lemma for the analysis of the algorithm.
\begin{lem}
\label{lem:Ce}
Let $x \in [0,1]^m$ and $c \in [0,1]^n$ constitute a feasible solution of the linear program \eqref{lp_relaxation}. Then, we have for any $e \in [n]$:
\begin{align*}
\exc{(X_1, \dots, X_m) \sim (x_1, \dots, x_m)}{C^{(\ell)}_{e}(X_1, \dots, X_m)} \geq \rho_{\ell} c_e \ .
\end{align*}
\end{lem}
\begin{proof}
To make the notation lighter, we write $C_e^{(\ell)} = C^{(\ell)}_{e}(X_1, \dots, X_m)$ with indicators $X_i \sim \mathrm{Ber}(x_i)$. Recall that $C_e^{(\ell)} = \min\{\ell, \sum_{i \in \Gamma_e} X_i\}$, where $\Gamma_e \coloneqq \{ i \in [m] : e \in T_i \}$ denotes the indices of all the given subsets $T_i \in \mathcal{F}$  that contain the element $e$. 

The tail-sum formula gives us 
\begin{align*}
\ex{C_e^{(\ell)}} 
&= \sum_{a = 1}^{\ell} \pr{\sum_{i \in \Gamma_e} X_i \geq a} \\
&= \ell - \sum_{a=0}^{\ell-1} (\ell - a) \pr{\sum_{i \in \Gamma_e} X_i = a} \ .
\end{align*}
Now we can apply Lemma~\ref{lem:3-values} (stated and proved at the end of this section) and get that the expression for $\ex{C_e^{(\ell)}}$ is minimum when for all $i\in \Gamma_e$, $x_i \in \{0,1,q\}$ for some $q \in (0,1)$. We now assume $x$ has this form. Let $\bar{\ell}$ be the number of elements $i \in \Gamma_e$ such that $x_i = 1$. As we have in this case $\pr{ \sum_{i \in \Gamma_e} X_i = a} = 0$ for $a < \bar{\ell}$, we can write
\begin{align*}
\ex{C_e^{(\ell)}} 
&= \ell - \sum_{a=\bar{\ell}}^{\ell-1} (\ell - a) \pr{ \sum_{i \in \Gamma_e} X_i = a} \ .
\end{align*}
Note that, if $c_e \leq \bar{\ell}$, then we are done as $\ex{C^{(\ell)}_e} \geq \bar{\ell} \geq c_e$. Assume now that $c_e \geq \bar{\ell}$ and we write $d_e = c_e - \bar{\ell} \geq 0$. We also write $t$ for the number of elements $i \in \Gamma_e$ such that $x_i = q$; hence, $\sum_{i \in \Gamma_e} x_i = \bar{\ell} + qt$. 

Note that $\sum_{i \in \Gamma_e} X_i - \bar{\ell}$ has a binomial distribution with parameters $t$ and $q$. We can then write
\begin{align*}
\ex{C^{(\ell)}_e} 
&= \ell - \sum_{a=\bar{\ell}}^{\ell-1} (\ell - a) \binom{t}{a-\bar{\ell}} q^{a-\bar{\ell}} (1-q)^{t-(a-\bar{\ell})} \\
&= \ell - \sum_{a=0}^{\ell - \bar{\ell}-1} (\ell - \bar{\ell} - a) \binom{t}{a} q^{a} (1-q)^{t-a} \ ,
\end{align*}
where we implemented the change of variable $a \rightarrow a - \bar{\ell}$. We now use Lemma~\ref{lem:function-convex} with $s = \ell - \bar{\ell}$ and $t$.
Using the fact that this expression is increasing in $q$ together with the inequality $d_e \leq qt$ (this follows from the linear program~\eqref{lp_relaxation}), we get
\begin{align*}
\ex{C_e^{(\ell)}} 
&\geq \ell - \sum_{a=0}^{\ell - \bar{\ell} - 1} (\ell - \bar{\ell} - a) \binom{t}{a} \left(\frac{d_e}{t}\right)^{a} \left(1-\frac{d_e}{t}\right)^{t-a} \ .
\end{align*}
From Lemma~\ref{lem:function-convex} again, we have that the function $x \mapsto \sum_{a=0}^{\ell - \bar{\ell} - 1} (\ell - \bar{\ell} - a) \binom{t}{a} \left(\frac{x}{t}\right)^{a} \left(1-\frac{x}{t}\right)^{t-a}$ is convex in the interval $[0,t]$. We now distinguish two cases. We start with the simple case when $\ell - \bar{\ell} > t$. Then we write $d_e = \frac{d_e}{t} \cdot t + (1 - \frac{d_e}{t}) \cdot 0$ and using convexity we get
\begin{align*}
\ex{C_e^{(\ell)}} 
&= \ell - \sum_{a=0}^{t} (\ell - \bar{\ell} - a) \binom{t}{a} \left(\frac{d_e}{t}\right)^{a} \left(1-\frac{d_e}{t}\right)^{t-a} \\
&\geq \frac{d_e}{t} (\ell - (\ell - \bar{\ell} - t)) + (1 - \frac{d_e}{t}) (\ell - (\ell - \bar{\ell})) \\
&= c_e,
\end{align*}
which concludes the first case.
If $t \geq \ell - \bar{\ell}$, we instead write $d_e = \frac{d_e}{\ell - \bar{\ell}} \cdot (\ell - \bar{\ell}) + (1 - \frac{d_e}{\ell - \bar{\ell}}) \cdot 0$.
Applying convexity, we get
\begin{align*}
\ex{C^{(\ell)}_e} &\geq  \frac{d_e}{\ell - \bar{\ell}} \left(\ell - \sum_{a=0}^{\ell - \bar{\ell} - 1} (\ell - \bar{\ell} -  a) \binom{t}{a} \left(\frac{\ell - \bar{\ell}}{t}\right)^{a} \left(1-\frac{\ell - \bar{\ell}}{t}\right)^{t-a} \right) +  (1 - \frac{d_e}{\ell - \bar{\ell}}) \bar{\ell} \\
&= \bar{\ell} + \left((\ell - \bar{\ell}) - \sum_{a=0}^{\ell - \bar{\ell} - 1} (\ell - \bar{\ell} -  a) \binom{t}{a} \left(\frac{\ell - \bar{\ell}}{t}\right)^{a} \left(1-\frac{\ell - \bar{\ell}}{t}\right)^{t-a} \right) \frac{d_e}{\ell - \bar{\ell}} \\
&= \bar{\ell} + \ex{\min\{\ell - \bar{\ell}, \mathrm{Bin}(t, \frac{\ell - \bar{\ell}}{t})\}} \frac{d_e}{\ell - \bar{\ell}} \\
&\geq \bar{\ell} + \rho_{\ell - \bar{\ell}} \cdot d_e \\
&= \left(\frac{\bar{\ell}}{c_e} + \rho_{\ell - \bar{\ell}} \frac{c_e - \bar{\ell}}{c_e} \right) c_e \ ,
\end{align*}
where to obtain the last inequality, we used equation~\eqref{eq:rhol-binomial} from Lemma~\ref{lemma:rhol-binomial}.
Now observe that $\left(\frac{\bar{\ell}}{c_e} + \rho_{\ell - \bar{\ell}} \frac{c_e - \bar{\ell}}{c_e} \right) = \frac{\bar{\ell}(1-\rho_{\ell - \bar{\ell}})}{c_e} + \rho_{\ell - \bar{\ell}}$ is a decreasing function of $c_e$ and so we can lower bound it with the value it takes when $c_e = \ell$. So it only remains to show that 
\begin{align*}
\frac{\bar{\ell}}{\ell} + \rho_{\ell - \bar{\ell}} \frac{\ell - \bar{\ell}}{\ell}  \geq \rho_{\ell} \ .
\end{align*}
Recalling that $\rho_{\ell} = 1 - \frac{\ell^{\ell} e^{-\ell}}{\ell!}$, this is equivalent to 
\begin{align*}
\frac{(\ell-\bar{\ell})^{\ell-\bar{\ell}} e^{-(\ell-\bar{\ell})}}{(\ell-\bar{\ell})!} \frac{\ell - \bar{\ell}}{\ell}  \leq  \frac{\ell^{\ell} e^{-\ell}}{\ell!} \ .
\end{align*}
In other words, it suffices to show that the sequence $\frac{\ell^{\ell+1} e^{-\ell}}{\ell!}$ is an increasing sequence. To see this, we can take the logarithm of the ratio of the $\ell$th term to the $(\ell+1)$th term and get $(\ell+1) \ln(1-\frac{1}{\ell+1}) + 1 \leq 0$.
%
\end{proof}

\subsection{Generalization to weighted cover subject to a matroid constraint}
\label{sec:generalization}
As mentioned previously, the algorithm can be easily generalized by allowing the objective function to be a weighted $\ell$-coverage function and the constraint to be one that requires $S \in \mathcal{M}$, for a matroid $\mathcal{M}$. More precisely, we are now given a collection of real weights $\{w_{i,e}\}_{i \in [m], e \in [n]}$. For an integer $\ell$ and a set $S \subseteq [m]$, we define the weighted $\ell$-coverage of an element $e$ to be $C_e^{(\ell)}(S) \coloneqq  \max_{i_1, \dots, i_{\ell} \in S } w_{i_1, e} + \dots + w_{i_{\ell}, e}$;  here the maximization is over all distinct indices $i_1, \ldots, i_{\ell}$ in the set $S$. In other words, $C_{e}^{(\ell)}(S)$ is the sum of the largest $\ell$ weights in the list $(w_{i,e})_{i \in S}$. Then, as before, we define $C^{(\ell)}(S) = \sum_{e \in [n]} C_{e}^{(\ell)}(S)$. The problem at hand is to maximize $C^{(\ell)}(S)$ subject to the matroid constraint $S \in \mathcal{M}$.

The algorithm has exactly the same structure as the one described at the beginning of this section. We consider the following linear program.
\begin{align}\label{lp_relaxation_w}
\begin{aligned}
& \underset{x, c}{\max}
& & \sum_{i \in [m], e \in [n]} c_{i,e} w_{i,e} \\
& \text{subject to}
& &\sum_{i \in [m]} c_{i,e} \leq \ell \quad \forall e \in [n] \\
& 
& &c_{i,e} \leq x_i  \quad \forall i \in [m], e \in [n] \\
& 
& & 0 \leq x_i \leq 1 \quad \forall i \in [m] \\ 
&
& & x \in P(\mathcal{M}) \,.
\end{aligned}
\end{align}
Here $P(\mathcal{M})$ is the matroid polytope of $\mathcal{M}$. Note that, as before, the function $C^{(\ell)} (\cdot)$ is submodular and~\cite{calinescu2011maximizing} gives an efficient algorithm for computing a $1-e^{-1}$ approximation for such problems and more generally sums of weighted rank functions with matroid constraints.
Again, once we obtain an optimal fractional solution $x^*$, we use pipage rounding to obtain a random integral vector $x^{\mathrm{int}} \in \{0,1\}^m$, such that $x^{\mathrm{int}} \in \mathcal{M}$ and $\ex{C^{(\ell)}(x^{\mathrm{int}})} \geq \exc{(X_1, \dots X_m) \sim (x^{*}_1, \dots x^{*}_m)}{C^{(\ell)}(X_1, \dots, X_m)}$ (see, e.g., \cite[Lemma 3.5]{calinescu2011maximizing}). Thus, it only remains to relate this expectation to the objective value achieved by the linear program.
For a fixed $e \in [n]$, we can express the weighted coverage function $C^{(\ell)}_e(\cdot)$ as follows. First order the weights so that 
$w_{1,e} \geq w_{2, e} \geq \dots \geq w_{m,e}$. Then, for $X_1, X_2, \ldots, X_m \in \{0,1\}$, write $C_{i, e} = \min\{\ell, \sum_{j=1}^{i} X_j\} - \min\{\ell, \sum_{j=1}^{i-1} X_j\}$ for all $i \in [m]$. Note that $C^{(\ell)}_e(X_1, \dots, X_m) = \sum_{i \in [m]} w_{i,e} C_{i,e}$. Thus, the following lemma is sufficient to obtain the desired result.
\begin{lem}
Let $e \in [n]$. Assume that $w_{1,e} \geq w_{2, e} \geq \dots \geq w_{m,e}$. Let $x, c$ be a feasible solution of the above linear program and $X_1, X_2, \ldots, X_m \in \{ 0,1 \}$ be independent random variables with $\pr{X_i = 1} = x_i$. Then, for 
$C_{i, e} = \min\{\ell, \sum_{j=1}^{i} X_j\} - \min\{\ell, \sum_{j=1}^{i-1} X_j\}$, we have
\begin{align*}
\sum_{i \in [m]} w_{i,e} \ex{C_{i,e}} \geq \rho_{\ell} \sum_{i \in [m]} w_{i,e} c_{i,e}.
\end{align*}
\end{lem}
\begin{proof}
Fixing $w_{m+1,e} = 0$ and using the definition of $C_{i,e}$, we can write
\begin{align*}
\sum_{i=1}^m w_{i,e} \ex{C_{i,e}} &= \sum_{i=1}^m w_{i,e} \ex{\min\{ \ell, \sum_{j=1}^{i} X_j \} - \min\{ \ell, \sum_{j=1}^{i-1} X_j \} } \\
&= \sum_{i=1}^m (w_{i,e} - w_{i+1,e}) \ex{\min\{ \ell, \sum_{j=1}^{i} X_j \}} \,.
\end{align*}
Our objective now is to show that for every $i$, 
\begin{align}
\label{eq:weighted_version}
\ex{\min\{ \ell, \sum_{j=1}^{i} X_j \}} \geq \rho_{\ell} \sum_{j=1}^{i} c_{j,e} \ .
\end{align}
But this follows from Lemma~\ref{lem:Ce} applied to the setting where we only consider sets $j \in \{1, \dots, i\}$ and replacing $\sum_{j=1}^i c_{j,e}$ with $c_e$. Thus we get 
\begin{align*}
\sum_{i=1}^m w_{i,e} \ex{C_{i,e}} &\geq \rho_{\ell} \sum_{i=1}^m (w_{i,e} - w_{i+1,e}) \sum_{j=1}^{i} c_{j,e} = \rho_{\ell} \sum_{i=1}^{m} w_{i,e} c_{i,e} \ .
\end{align*}
This concludes the proof of the lemma.
\end{proof}

\allowdisplaybreaks
\section{Hardness of Approximating the Multi-Coverage Problem}

In this section we establish an inapproximability bound for the maximum $\ell$-multi-coverage problem. Throughout this section we will use $\Gamma$ to denote the universe of elements and, hence, an instance of the $\ell$-multi-coverage problem will consist of $\Gamma$, along with a collection of subsets $\mathcal{F} = \{T_i \subseteq \Gamma\}_{i=1}^m$ and an integer $k$. Recall that the  objective of this problem is to find a size-$k$ subset $S \subseteq [m]$ that maximizes $C^{(\ell)}(S) = \sum_{e \in \Gamma} \min\{\ell, |\{i \in S : e \in T_i\}| \}$. 

Formally, we establish Theorem~\ref{theorem:hardness} (restated below for completeness).\footnote{As mentioned above, for brevity, we will sometimes refer to $\ell$-multi-coverage as $\ell$-coverage.}  

\hardness*

Our reduction is from the $h$-$\kary$ problem, as detailed in Definition~\ref{def:unique}. Specifically, we will rely on Conjecture~\ref{conj:ugc}, which asserts the hardness of  the $h$-$\kary$ problem by showing that it is equivalent to the Unique Games Conjecture ({\rm UGC}). 


\begin{definition}[$h$-$\kary$]							\label{def:unique}
	An instance  $\mathcal{G}(V,E,\Sigma, \{\pi_{e,v}\}_{e\in E,v \in e})$ of $h$-$\kary$ is characterized by an $h$-uniform regular hypergraph, $(V, E)$, and bijection (projection) constraints $\pi_{e,v}: \Sigma \mapsto \Sigma$. Here, each $h$-uniform hyperedge represents a $h$-ary constraint. Additionally, for any labeling $\sigma : V \mapsto \Sigma$, we have the following notions of strongly and weakly satisfied constraints:
	\begin{itemize}
		\item A hyperedge $e = (v_1,v_2,\ldots,v_h) \in E$ is \emph{strongly satisfied} by $\sigma$ if for every $x,y \in [h]$ we have $\pi_{e,v_x}(\sigma(v_x)) = \pi_{e,v_y}(\sigma(v_y))$.
		\item A hyperedge $e = (v_1,v_2,\ldots,v_h) \in E$ is \emph{weakly satisfied} by $\sigma$ if there exists $x,y \in [h], x \neq y$ such that $\pi_{e,v_x}(\sigma(v_x)) = \pi_{e,v_y}(\sigma(v_y))$.
	\end{itemize}
\end{definition}

The following conjecture is equivalent to {\rm UGC} (see, Appendix \ref{sec:hyper-ugc}): 

\begin{conjecture}
	\label{conj:ugc}
	For any constant $\varepsilon > 0$ and constant integer $h \geq 2$, given an instance $\mathcal{G}$ of $h$-$\kary$, it is {\rm NP}-hard to distinguish between 
	\begin{itemize}
		\item(YES): There exists a labeling $\sigma$ that strongly satisfies at least $1 - \varepsilon$ fraction of the edges.
		\item(NO): No labeling weakly satisfies more than $\varepsilon$ fraction of the edges.
	\end{itemize}
	Furthermore, the constraint hypergraph for instance $\mathcal{G}$  is \emph{regular} and the size of the underlying alphabet set depends only on the parameter $\varepsilon$.	 
\end{conjecture}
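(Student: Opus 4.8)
The plan is to prove the asserted equivalence with the Unique Games Conjecture (UGC) by reductions in both directions. One direction needs no work: taking $h=2$, a hyperedge $(v_1,v_2)$ with bijections $\pi_{e,v_1},\pi_{e,v_2}$ enforces $\pi_{e,v_1}(\sigma(v_1))=\pi_{e,v_2}(\sigma(v_2))$, i.e.\ $\sigma(v_1)=(\pi_{e,v_1}^{-1}\!\circ\pi_{e,v_2})(\sigma(v_2))$, a genuine \emph{unique} two-variable constraint, and for $h=2$ the strong and weak satisfaction notions coincide; hence Conjecture~\ref{conj:ugc} with $h=2$ is verbatim the UGC. So it remains to deduce Conjecture~\ref{conj:ugc}, for every constant $h\geq 2$, from the UGC.

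\textbf{The reduction.} Fix $\varepsilon>0$ and $h\geq 2$ and put $\delta:=\varepsilon/h^2$. Start from the standard regular \emph{two-prover} (bipartite) form of the UGC: a biregular bipartite constraint graph with a ``center'' side $W$, a ``leaf'' side $V$, a common label set $\Sigma$ with $|\Sigma|$ depending only on $\delta$, and projections $\pi_{(w,v)}:\Sigma\to\Sigma$ on the edges $(w,v)$, for which it is {\rm NP}-hard to distinguish (YES) some $(\sigma_W,\sigma_V)$ with $\pi_{(w,v)}(\sigma_V(v))=\sigma_W(w)$ on at least a $1-\delta$ fraction of edges, from (NO) no $(\sigma_W,\sigma_V)$ meeting this on more than a $\delta$ fraction. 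Build the $h$-$\kary$ instance on vertex set $V$ as follows: for every center $w$ and every set $\{v_1,\dots,v_h\}$ of $h$ distinct neighbours of $w$ (these exist, since the degree grows with $|\Sigma|$), include the hyperedge $e$ on $\{v_1,\dots,v_h\}$ with $\pi_{e,v_i}:=\pi_{(w,v_i)}$. Biregularity makes this hypergraph regular, and its alphabet is $\Sigma$, hence of size depending only on $\varepsilon$ (with $h$ a fixed constant). For the \emph{completeness} (YES) direction, keep $\sigma_V$; a uniformly random hyperedge is drawn by picking a random center $w$ and $h$ of its neighbours, and a union bound over the $h$ edges $(w,v_i)$ --- each distributed as a uniform edge of the UG instance --- shows all of them are satisfied with probability at least $1-h\delta\geq 1-\varepsilon$; on that event $\pi_{e,v_i}(\sigma_V(v_i))=\sigma_W(w)$ for all $i$, so $e$ is strongly satisfied.

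\textbf{Soundness.} Suppose some $\sigma_V$ weakly satisfies more than an $\varepsilon$ fraction of the hyperedges. For each center $w$ let $f_w$ be the distribution of $\pi_{(w,v)}(\sigma_V(v))$ over a uniformly random neighbour $v$ of $w$. The projected labels on a hyperedge at $w$ are $h$ i.i.d.\ samples from $f_w$, so by a union bound over the $\binom{h}{2}$ pairs the hyperedge is weakly satisfied with probability at most $\binom{h}{2}\sum_{a}f_w(a)^2$; averaging over $w$ yields $\E_{w}\big[\sum_a f_w(a)^2\big] > \varepsilon/\binom{h}{2}$. Define $\sigma_W(w)$ to be a most likely value under $f_w$; since $\max_a f_w(a)\geq\sum_a f_w(a)^2$, the labeling $(\sigma_W,\sigma_V)$ satisfies the UG instance on a fraction $\E_w\big[f_w(\sigma_W(w))\big]=\E_w\big[\max_a f_w(a)\big]>\varepsilon/\binom{h}{2}\geq\varepsilon/h^2=\delta$, contradicting the NO case. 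Together with the completeness analysis this gives Conjecture~\ref{conj:ugc}, and hence the claimed equivalence with the UGC.

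\textbf{Main obstacle.} The crux is the soundness step, and in particular the reason the construction uses a \emph{bipartite} UG instance and places only the leaf vertices into the hypergraph: the label $\sigma_W$ extracted from the collision statistics of $f_w$ must be simultaneously consistent with the handed-over labeling $\sigma_V$ to form a single labeling of the UG instance, which would be impossible were $W$ and $V$ allowed to overlap. The remaining points are routine: the existential ``at least two of $h$ agree'' event costs only a $\binom{h}{2}$ factor in soundness, which is harmless because the UGC supplies arbitrarily small $\delta$; insisting that the $v_i$ be distinct changes nothing, since UGC-hard instances have degree depending only on $|\Sigma|$; and biregularity of the source instance (standard) yields a regular constraint hypergraph as required by the conjecture.
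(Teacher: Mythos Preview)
Your reduction is exactly the one the paper gives in its appendix (Theorem~\ref{thm:hyper-ugc}): start from a biregular bipartite UG instance, keep only the ``leaf'' side $V$, and for every center $w$ turn each $h$-subset of $N(w)$ into a hyperedge with the inherited projections. Completeness is argued identically via a union bound over the $h$ incident UG edges.

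Where you differ is in the soundness analysis, and your argument is in fact cleaner than the paper's. The paper averages over the $\binom{h}{2}$ pairs to find one pair $(i,j)$ with collision probability $\geq 2\varepsilon/h^2$, then does a second averaging to isolate ``good'' centers $u$, and for each such $u$ fixes one particular neighbour $v(u)$ whose projected label serves as $\sigma(u)$; stringing these together yields a UG value of $\varepsilon^2/h^4$. You instead compute the full collision statistic $\sum_a f_w(a)^2$ and use $\max_a f_w(a)\geq \sum_a f_w(a)^2$ to label $w$ directly, obtaining UG value $>\varepsilon/\binom{h}{2}$ --- a quadratically better dependence and one fewer averaging step. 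Both arguments are valid; yours exploits more of the structure.

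One small quibble: the $h$ projected labels at a center are \emph{not} literally i.i.d.\ from $f_w$, since the $v_i$ are drawn without replacement. The pairwise collision probability without replacement is $\frac{1}{d(d-1)}\sum_a n_a(n_a-1)\leq \frac{d}{d-1}\sum_a f_w(a)^2$, so your bound holds up to a $(1+o(1))$ factor when the degree $d$ is large; you flag this at the end, but it would be better to state the correct inequality rather than call the samples i.i.d. Also, ``$h=2$ is verbatim the UGC'' overstates things slightly (the standard UGC is bipartite), though the equivalence in that direction is indeed immediate and the paper does not bother to spell it out either.
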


Our reduction uses a gadget that generalizes a partitioning system used by Feige~\cite{Fei98}. We will begin by describing the partitioning gadget (Section~\ref{sect:ell-part-gadget}) and then the use it for the reduction (Section~\ref{sect:ell-reduction}).

\subsection{The Partitioning Gadget}
\label{sect:ell-part-gadget}

For any set of elements $[\widehat{n}]$ (with $\widehat{n} \in \mathbbm{Z}_+$) and a collection of subsets $\mathcal{Q} \subseteq 2^{[\widehat{n}]}$, we use $C^{(\ell)}(\mathcal{Q})$ to denote the $\ell$-coverage of $[\widehat{n}]$ by the subsets contained in  $\mathcal{Q}$,\footnote{Here, the notation $C^{(\ell)}(\cdot)$ is overloaded for ease of presentation.} i.e., $C^{(\ell)}(\mathcal{Q}) := \sum_{e \in [\widehat{n}]} \ \min  \Big\{ \ell,  \  | \{ P \in \mathcal{Q} : e \in P\}| \ \Big\}$. Furthermore, we will say that the collection of subsets $\mathcal{Q}$ is an $\ell$-cover of $S$ if $C^{(\ell)}(\mathcal{Q}) = \ell \widehat{n}$, i.e., if every element of $[\widehat{n}]$ is covered at least $\ell$ times.

We begin by defining the $([\widehat{n}], h, s, \ell, \eta)$-partitioning system which is the basic gadget used in our reduction. 

\begin{definition}					\label{def:ell-partition}
	Given a ground set $[\widehat{n}]$, an $([\widehat{n}],h, s, \ell, \eta)$-partitioning system consists of $s$ collections of subsets of $[\widehat{n}]$ denoted $\mathcal{P}_1, \mathcal{P}_2, \ldots, \mathcal{P}_s$ that satisfy  
	
	\begin{itemize}
		\item[(1)] For every $i \in [s]$, the family $\mathcal{P}_i$ is a collection of $h$ subsets $P_{i,1}, P_{i,2},\ldots, P_{i,h} \subseteq [\widehat{n}]$ such that: (i) for every $j \in [h]$, $|P_{i,j}| = \ell \widehat{n}/h$ and (ii) for every $a \in [\hat{n}]$, $|\{i : a \in P_{i,j}\}| = \ell$. 
		
		In other words, $\mathcal{P}_i$ forms an $\ell$-cover of the ground set $[\widehat{n}]$ 
		\item[(2)] For any subset of indices $T \subseteq [s]$ and any collection of subsets $\mathcal{Q} = \{P_{i,j(i)} \mid i \in T \}$ for some function $j : T \to [h]$, we have
		$C^{(\ell)}(\mathcal{Q}) \le \left( \fcov^{(\ell)}_{|T|, h} + \eta \right) \widehat{n}$, where 
		\begin{equation*}
		\fcov^{(\ell)}_{|T|, h} := \ell - \sum_{i = 0}^{\ell-1}(\ell - i){ |T| \choose i } \left( \frac{\ell}{h} \right)^i  \left(1- \frac{\ell}{h}\right)^{|T|-i}.
		\end{equation*}
		In particular, if $|T| = h (1 + \mu)$, for some $0 < \mu < \frac{\varepsilon}{ 2\ell^2 } \ e^\ell  \frac{\ell!}{\ell^\ell} $, then
		\begin{align}								\label{eq:rho-ratio}
		C^{(\ell)} (\mathcal{Q}) \le \left( \ell \left(1 -  \frac{\ell^{\ell} }{ \ell!} \ e^{-\ell}  \right) + \varepsilon \right) \widehat{n}. 
		\end{align}
		
		\end{itemize}
\end{definition}

Note that we can restate equation \eqref{eq:rho-ratio} as $C^{(\ell)} (\mathcal{Q}) \le \left( \rho_\ell\ell + \varepsilon \right) \widehat{n}$. We now state and prove a lemma which shows that such partitioning systems exist for a useful range of parameters.

\begin{lemma}			\label{lem:gadget}
	For every choice of $\widehat{n}, h, s, \ell \in \mathbbm{Z}_+$ and $\eta > 0$ such that $s \geq h \geq \ell$ and ${\widehat{n}} \ge 100\eta^{-2}s \ell^2 \log h$ a multiple of $h$,  there exists an $([\widehat{n}], h, s, \ell,\eta)$-partitioning system. Moreover, such a partitioning system can be found in time $\exp(s \widehat{n} \log \widehat{n})\cdot{\rm poly}(h)$. 
\end{lemma}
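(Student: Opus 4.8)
\textbf{Proof proposal for Lemma~\ref{lem:gadget}.} The plan is the standard probabilistic-method construction: build the $s$ families $\mathcal{P}_1,\dots,\mathcal{P}_s$ at random and show that with positive probability all the required properties hold simultaneously. The only genuinely rigid constraint is property~(1), which is deterministic; so the construction will first fix the "shape" of each family and only randomize the labels.

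First I would set up the randomness. For each $i \in [s]$, I construct the family $\mathcal{P}_i = \{P_{i,1},\dots,P_{i,h}\}$ by independently assigning to every element $e \in [\widehat{n}]$ a uniformly random $\ell$-subset $L_{i}(e) \subseteq [h]$ of the $h$ "slots", and putting $e \in P_{i,j}$ iff $j \in L_i(e)$. By construction every element lies in exactly $\ell$ sets of $\mathcal{P}_i$, so property~(1) holds with probability $1$ — except that the sets $P_{i,j}$ need to have size exactly $\ell\widehat{n}/h$. I would handle this either by a symmetrization/balancing argument (the expected size is $\ell\widehat{n}/h$ and one can condition on the balanced event, whose probability is polynomially small but nonzero, losing only a $\mathrm{poly}$ factor which is absorbed by the running-time bound), or more cleanly by using a random balanced assignment: partition $[\widehat n]$ into groups and cyclically rotate a fixed balanced pattern, then apply a random permutation of the slots per family — this keeps all $P_{i,j}$ of size exactly $\ell\widehat n/h$ while keeping the per-element slot-sets independent across $i$. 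Either way property~(1) is arranged deterministically or on an event of non-vanishing probability.

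Next comes property~(2), which is where concentration is used. Fix a subset $T\subseteq[s]$ with $|T|=:\tau$ (note we only need $\tau\le s$; the bound \eqref{eq:rho-ratio} only kicks in for $\tau$ near $h(1+\mu)$, but property~(2) must hold for all $T$) and fix a choice $P_i\in\mathcal{P}_i$ for $i\in T$, i.e. a choice of slot $j_i\in[h]$ for each $i\in T$. For a fixed element $e$, whether $e\in P_i$ is the event $j_i\in L_i(e)$, which has probability $\ell/h$ and is independent across $i\in T$. Hence the number of chosen sets covering $e$ is a $\mathrm{Bin}(\tau,\ell/h)$ random variable, and $\mathbb{E}[\min\{\ell,\,|\{i\in T: e\in P_i\}|\}] = \ell - \sum_{i=0}^{\ell-1}(\ell-i)\binom{\tau}{i}(\ell/h)^i(1-\ell/h)^{\tau-i} = \fcov^{(\ell)}_{\tau,h}$, matching the definition exactly. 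Summing over $e$, $\mathbb{E}[C^{(\ell)}(\mathcal{Q})]=\fcov^{(\ell)}_{\tau,h}\widehat n$. Since $C^{(\ell)}(\mathcal{Q})=\sum_e \min\{\ell,\dots\}$ is a sum of $\widehat n$ independent (across $e$) bounded random variables each in $[0,\ell]$, Hoeffding's inequality gives $\Pr[C^{(\ell)}(\mathcal{Q}) > (\fcov^{(\ell)}_{\tau,h}+\eta)\widehat n] \le \exp(-2\eta^2\widehat n/\ell^2)$. A union bound over all $\tau\le s$, all $T$ of each size (at most $2^s$), and all choices of representatives (at most $h^s$) costs a factor $\le 2^s h^{s}\cdot s \le \exp(O(s\log h))$; the hypothesis $\widehat n\ge 100\eta^{-2}s\ell^2\log h$ makes $\exp(-2\eta^2\widehat n/\ell^2)$ beat this, so the bad event has probability $<1$ and a valid partitioning system exists. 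Finally, \eqref{eq:rho-ratio} follows from property~(2) purely analytically: when $\tau=h(1+\mu)$, $\fcov^{(\ell)}_{\tau,h}$ is close to $\ell(1-\tfrac{\ell^\ell}{\ell!}e^{-\ell})=\rho_\ell\ell$ by the binomial-to-Poisson comparison (Lemma~\ref{lem:convex-order-bin-poi} / Lemma~\ref{lemma:rhol-binomial}) together with a continuity estimate in $\mu$; the stated range $0<\mu<\frac{\varepsilon}{2\ell^2}e^\ell\frac{\ell!}{\ell^\ell}$ is exactly what is needed to keep the total error below $\varepsilon$, and one chooses $\eta$ small (say $\eta=\varepsilon/2$) in the union-bound step. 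The running-time claim is immediate: one can derandomize by brute-force search over all $s$-tuples of families, of which there are at most $\exp(O(s\widehat n\log\widehat n))$, checking property~(2) for each in time $\mathrm{poly}(h,\widehat n, s)$, or simply note the probabilistic construction succeeds and can be verified in that time.

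The main obstacle I anticipate is reconciling the \emph{exact} size requirement "$|P_{i,j}|=\ell\widehat n/h$" and the exact-$\ell$-coverage requirement in property~(1) with the independence structure needed for the Hoeffding bound in property~(2); the naive i.i.d.-per-element construction gives exact per-element coverage for free but only expected set sizes, so some care (conditioning on a balanced event, or the rotate-a-balanced-pattern-then-permute trick) is needed to get both without destroying independence across families. Everything else — the expectation computation matching $\fcov^{(\ell)}_{\tau,h}$, the concentration, the union bound against $\exp(O(s\log h))$, and the analytic passage to \eqref{eq:rho-ratio} via the Poisson comparison — is routine given the earlier lemmas.
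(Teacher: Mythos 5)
Your overall strategy (probabilistic construction, per-element binomial expectation equal to $\fcov^{(\ell)}_{|T|,h}$, concentration plus a union bound over the $\exp(O(s\log h))$ choices of $T$ and representatives, brute-force derandomization) is the same as the paper's, and those parts are fine. However, there are two genuine problems.

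First, your route to equation \eqref{eq:rho-ratio} is directionally wrong. Lemma~\ref{lem:convex-order-bin-poi} and inequality \eqref{eq:rhol-binomial} give $\ex{\min\{\ell,\mathrm{Bin}(N,p)\}} \geq \ex{\min\{\ell,\mathrm{Poi}(Np)\}}$, i.e.\ a \emph{lower} bound on $\fcov^{(\ell)}_{|T|,h}$ in terms of the Poisson expectation. What \eqref{eq:rho-ratio} requires is an \emph{upper} bound $\fcov^{(\ell)}_{h(1+\mu),h} \leq \ell\rho_\ell + \varepsilon$, which the convex order cannot supply. The paper instead proves this by a direct quantitative computation: lower-bounding $(1-\ell/h)^{|T|-i} \geq e^{-\ell(1+2\mu)}$ and $\binom{|T|}{i} \geq (1-o(1))h^i(1+\mu)^i/i!$, telescoping the resulting sum to get $e^{-\ell(1+2\mu)}\ell^\ell/(\ell-1)!$, and then using $e^{-x}\geq 1-x$ to absorb the $\mu$-dependence into $\varepsilon$. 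Your ``continuity estimate in $\mu$'' is precisely the step that needs to be carried out explicitly, and the lemmas you cite do not do it.

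Second, of your two fixes for the exact-size requirement in property (1), the ``rotate a balanced pattern then apply a random permutation of the slots'' one destroys the concentration you need. Under that scheme the only randomness in family $i$ is a permutation of $[h]$, so for a fixed choice of representatives $(j_i)_{i\in T}$ the normalized coverage $C^{(\ell)}(\mathcal{Q})/\widehat{n}$ is a function of just $|T|$ independent uniform variables on $[h]$; its fluctuations do not shrink as $\widehat{n}$ grows, so the failure probability per choice is a constant and the union bound collapses. The conditioning fix is salvageable, but conditioning on the balanced event destroys the independence across elements on which your plain Hoeffding bound relies, so you would have to switch to a bounded-differences/Azuma argument. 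The paper sidesteps all of this by sampling each $\mathcal{P}_i$ directly as a uniformly random $h$-equi-sized $\ell$-cover (so property (1) holds by construction and the marginal $\Pr[a\in P_{i,j}]=\ell/h$ and independence across families still hold) and applying Azuma--Hoeffding to $\sum_a C_a$ with $0\le C_a\le \ell$.
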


\begin{proof}
The existential proof is based on the probabilistic method. Given integers $\widehat{n}, s , h$, for every $i \in [s]$ we set the collection of subsets $\mathcal{P}_i = (P_{i,1}, \dots, P_{i, h})$ to be uniformly chosen among all the collections satisfying condition (1), i.e., the sets are all of size $\frac{\ell \widehat{n}}{h}$ and they form an $\ell$-cover of $[\widehat{n}]$. This can be achieved by first choosing the set $P_{i,1}$ uniformly at random among all subsets of $[\widehat{n}]$ of size $\frac{\ell \widehat{n}}{h}$, then $P_{i,j}$ is chosen uniformly at random among all size-$\frac{\ell \widehat{n}}{h}$ subsets of $[\widehat{n}]$ from which we remove the elements that already appeared $\ell$ times in $P_{i,1}, \dots, P_{i,j-1}$.
Write  $\mathcal{P} = (\mathcal{P}_1, \mathcal{P}_2,\ldots, \mathcal{P}_s)$. Note that for each element $a \in [\widehat{n}]$, and any subset $P_{i, j} \in \mathcal{P}_i$, we have $\Pr \ [ a \in P_{i,j} ] = \ell/h$. By construction, condition $(1)$ of Definition~\ref{def:ell-partition} is satisfied, hence all that remains is to show that the partitioning system $\mathcal{P}$ also satisfies condition $(2)$. 
	
	Towards that, we fix an index set $T \subseteq [s]$ and a collection of subsets $\mathcal{Q} = \{P_{i, j(i)} | i \in T\}$. For ease of notation, we will denote $P_{i, j(i)}$ by $P_i$; hence, $\mathcal{Q} = \{P_i\}_{i \in T}$. For a fixed element $a \in [\widehat{n}]$, the expected $\ell$-coverage of $\mathcal{Q}$ is equal to
	\begin{align*}
	&\sum_{i = 1}^{\ell-1} i \cdot \Pr \Big[ \left| \{ P \in \mathcal{Q} : P \ni a  \} \right| = i \Big] + \ell \cdot \Pr\Big[  \left| \{ P \in \mathcal{Q} : P \ni a  \} \right| \geq \ell \Big] \\
	& = \sum_{i = 1}^{\ell-1} i \cdot \Pr \Big[ \left| \{ P \in \mathcal{Q} : P \ni a  \} \right| = i \Big] + \ell \left(1-  \sum_{i = 0}^{\ell-1} \Pr \Big[ \left| \{ P \in \mathcal{Q} : P \ni a  \} \right| = i \Big] \right) \\
	& = \ell - \sum_{i = 0}^{\ell-1} (\ell-i)  \Pr \Big[ \left| \{ P \in \mathcal{Q} : P \ni a  \} \right| = i \Big]  \\
	& = \ell - \sum_{i = 0}^{\ell-1} (\ell-i){|T| \choose i} \left(\frac{\ell}{h} \right)^i \left(1 - \frac{\ell}{h} \right)^{|T| - i}  \tag{$\mathcal{P}_is$ are constructed independently},
	\end{align*}
	where in the last line, we used the fact that all the sets in $\mathcal{Q}$ are independent and uniformly chosen subsets of $[\widehat{n}]$ of size $h$.
	We now denote by $X^a_{i} \in \{0,1\}$ the indicator of the event $[ a \in P_i ]$. Then, $C^{(\ell)} ( \mathcal{Q}) = \sum_{a \in [\widehat{n}]} C_{a} $ with $C_a = \min\{\ell, \sum_{i=1}^h  X^{a}_i\}$. We just computed $\ex{C_a}$ so by linearity of expectation, we have $\E_{\mathcal{P}} \left[ C^{(\ell)} ( \mathcal{Q}) \right] = \fcov^{(\ell)}_{|T|,h} \ \widehat{n}$. Now, we claim that  
	\begin{equation}					\label{eqn:conc1}
	\Pr_\mathcal{P} \left[ C^{(\ell)} (\mathcal{Q})  > \left( \fcov^{(\ell)}_{|T|,h} + \eta \right) \widehat{n} \right] \le 2 \exp\Big(-2 (\eta/\ell)^2 \ {\widehat{n}} \Big) \ .
	\end{equation}
	In order to establish this result, we first observe that $0 \leq C_a \leq \ell$, but the random variables $C_a$ are not independent in general. However, the random variables $\{C_a\}_{a \in [\widehat{n}]}$ are negatively associated~\cite{joag1983negative} and this is sufficient for the Chernoff-Hoeffding bound; see e.g.,~\cite[Chapter 3]{dubhashi2009concentration} or~\cite{dubhashi1996balls}. The set of random variables $\{C_a\}_{a \in [\widehat{n}]}$ is said to be negatively associated if for any functions $f$ and $g$ either both increasing or both decreasing and any disjoint index sets $I, J \subseteq [\widehat{n}]$, we have
	\begin{align*}
	\ex{f(C_a : a \in I) \cdot g(C_a : a \in J)} \leq \ex{f(C_a : a \in I)} \cdot \ex{g(C_a : a \in J)} \ .
	\end{align*}
	
	Note that the random variable $C_a$ is a monotone increasing function of $\{X_i^{a}\}_{i \in [s]}$. Thus, in order to show that $\{C_a\}_{a \in [\widehat{n}]}$ are negatively associated, it suffices to show that $\{X_i^{a}\}_{i \in [s], a \in [\widehat{n}]}$ are negatively associated.
	
	For this, observe that for any fixed $i \in [s]$, $\{X_i^{a}\}_{a \in [\widehat{n}]}$ are negatively associated because it corresponds to a permutation distribution (see e.g.,~\cite[Theorem 2.11]{joag1983negative}). Then, using the fact that the families $\{X_i^{a}\}_{a \in [\widehat{n}]}$ for $i \in [s]$ are mutually independent, (see e.g.,~\cite[Property P7]{joag1983negative}) we obtain that $\{X_i^{a}\}_{i \in [s], a \in [\widehat{n}]}$ are negatively associated. We can thus apply the Chernoff-Hoeffding bound for negatively associated random variables (see e.g.,~\cite[Chapter 3]{dubhashi2009concentration} or~\cite{dubhashi1996balls}) and get~\eqref{eqn:conc1}. 
	
	
	Note that equation (\ref{eqn:conc1}) holds for a fixed choice of $T$ and $\mathcal{Q}$. Applying union bound over all the $(h+1)^s$ possible choices of $T$ and $\mathcal{Q}$, we have that with probability at least $0.9$, the $\ell$-covering value  satisfies $C^{(\ell)}(\mathcal{Q}) \le \left( \fcov^{(\ell)}_{|T|,h}  +  \eta \right) \widehat{n} \ $ (since ${\widehat{n}} \ge 100 s \ell^2 \eta^{-2}\log h$).

	To prove the last statement of the lemma, we consider $|T| = h(1 + \mu)$ for $0 < \mu < \frac{\varepsilon}{ \ell } \  \frac{\ell!}{\ell^\ell} $. Then, it follows that for a large enough (compared to $\ell$) $h$ and $i \leq |T|$, we have $\left(1 - \frac{\ell}{h} \right)^{|T| - i} \ge \left(1 - \frac{\ell}{h}\right)^{|T|} \ge e^{-\ell (1 + 2\mu)}$. Therefore,  $\fcov^{(\ell)}_{|T|,h}$ can be upper bounded as 
	\begin{align}			
	\ell - \sum_{i = 0}^{\ell-1} (\ell-i){|T| \choose i} \left(\frac{\ell}{h} \right)^i \left(1 - \frac{\ell}{h} \right)^{|T| - i}  &  \leq  \ell - e^{-\ell (1 + 2\mu)}\sum_{i = 0}^{\ell-1} (\ell - i ) {|T| \choose i} \left( \frac{\ell}{h} \right)^i   \label{eq:sum-term}
	\end{align}
	
	For $(\ell -1) = o(\sqrt{|T|})$ and $i \leq \ell-1$, the following bound holds for the binomial coefficients 
\begin{align}						\label{eq:tchoosei}
{|T| \choose i} &  \geq ( 1 - o(1)) \frac{|T|^{i}}{i!} \\ 
& = ( 1 - o(1)) \frac{h^i (1+\mu)^{i}}{i!} \ .
\end{align}


Therefore, the right-hand-side of inequality \eqref{eq:sum-term} satisfies{\footnote{Here we ignore the $(1 - o(1))$ multiplicative factor from equation \eqref{eq:tchoosei} to keep the calculation clean. The $(1 - o(1))$ term can be accounted for in equation \eqref{eq:last-step}, where it can be absorbed into $\varepsilon$.}} 
\begin{align}
 \ell - e^{-\ell (1 + 2\mu)}\sum_{i = 0}^{\ell-1} (\ell - i ) {|T| \choose i} \left( \frac{\ell}{h} \right)^i & \leq \ell - e^{-\ell (1 + 2\mu)} \sum_{i=0}^{\ell-1} \left( \ell - i \right) \frac{h^i (1+\mu)^{i}}{i!} \left( \frac{\ell}{h} \right)^i  \notag \\
 & \leq \ell - e^{-\ell (1 + 2\mu)} \sum_{i=0}^{\ell-1} \left( \ell - i \right) \frac{\ell^i}{i!}  \tag{since $(1+\mu)^i \geq 1$} \\ 
 & = \ell - e^{-\ell (1 + 2\mu)} \left( \sum_{i=1}^{\ell-1}  \left( \frac{\ell^{i+1}}{i!} - \frac{\ell^i}{(i-1)!} \right) \ + \ell \right) \notag \\
 & = \ell - e^{-\ell (1 + 2\mu)} \frac{\ell^\ell}{(\ell-1)!} \tag{telescoping sum} \\
 & \leq \ell - e^{-\ell} (1 - 2\ell \mu)  \frac{\ell^\ell}{(\ell-1)!} \tag{$e^{-x} \geq 1 - x$}  \\
 & = \ell - e^{-\ell} \frac{\ell^\ell}{(\ell-1)!} + e^{-\ell} \ell \frac{\ell^\ell}{(\ell-1)!}\cdot 2\mu \notag \\
 & \leq \ell - e^{-\ell} \frac{\ell^\ell}{(\ell-1)!} + \varepsilon	\ .					\label{eq:last-step}
 \end{align}
 Here, the last inequality follows from the fact $\mu < \frac{\varepsilon}{ 2\ell^2 } \ e^\ell  \frac{\ell!}{\ell^\ell} $. Overall, in this setting we get that 
 \begin{align*}
\fcov^{(\ell)}_{|T|,h} & \leq \ell \left( 1 -  \frac{\ell^\ell}{\ell!} e^{-\ell} \right) + \varepsilon \ .
 \end{align*}
Therefore, $C^{(\ell)}(\mathcal{Q})$ satisfies the stated bound $C^{(\ell)} (\mathcal{Q}) \le \left( \ell \left(1 -  \frac{\ell^{\ell} }{ \ell!} \ e^{-\ell}  \right) + \varepsilon \right) \widehat{n}$.
	
	
	
	
	Since, a random choice of $\mathcal{P}=(\mathcal{P}_1, \mathcal{P}_2, \ldots, \mathcal{P}_s)$ satisfies the desired properties, we can enumerate over all choices of $(\mathcal{P}_1, \mathcal{P}_2, \ldots, \mathcal{P}_s)$ in time $\exp(s \widehat{n} \log \widehat{n}){\rm poly}(h)$ to find such a partitioning system.\footnote{Note that in our setting, $m$, $h$, and $s$ will be treated as constants and, hence, such a partitioning system can be constructed in constant time.}
	\end{proof}

Finally, we will also need the notion of \emph{piecewise linear extension} of a function defined on integers.

\begin{definition}[Piecewise Linear Extension]
	Let $f:\mathbbm{Z}_+ \mapsto \mathbbm{R}$ be a function defined on the nonnegative integers. We denote its piecewise linear extension $\tilde{f}:\mathbbm{R}_+ \mapsto \mathbbm{R}$ as $\tilde{f}(a) \coloneqq \lambda f(i) + (1 - \lambda)f(i+1)$, where $a \in \mathbb{R}_+$ lies between the integers $i$ and $i+1$ (i.e., $a \in [i,i+1)$) and $\lambda$ satisfies $a = \lambda i + (1-\lambda)(i+1)$.
\end{definition}

By definition, $\tilde{f}(a) = f(a)$ for every integer $a \in \mathbbm{Z}_+$. Consequently, for any distribution $\mathcal{D}$ supported over the integers, we have $\E_{X \sim \mathcal{D}}\left[f(X)\right] = \E_{X \sim \mathcal{D}}\left[\tilde{f}(X)\right]$.

\subsection{The Reduction}
\label{sect:ell-reduction}

We now describe the reduction from $h$-$\kary$ to the multi-coverage problem. Given an instance $\mathcal{G}(V,E,\Sigma, \{\pi_{e,v}\}_{e\in E,v \in e})$ of $h$-$\kary$ (as described in Definition \ref{def:unique}), we construct an instance of the maximum $\ell$-multi-coverage problem with ground set ${\Gamma}$ and set system $\mathcal{F}$ as follows.

	\noindent {\bf Ground Set}: For every $h$-uniform hyperedge $e \in E$, we introduce a distinct copy of the set $[\widehat{n}]$, which we denote by $[\widehat{n}]_e$. The overall ground set is $\Gamma = \cup_{e \in E} [\widehat{n}]_e$.

	\noindent {\bf Set System}: Fix an $([\widehat{n}], h, s, \ell,\eta)$-partitioning gadget as described in Lemma \ref{lem:gadget}, with $s = |\Sigma|$ i.e., the size of the alphabet set from the $h$-$\kary$ instance. For each hyperedge $e = (v_1,v_2,\ldots,v_h) \in E$, consider the copy of the $([\widehat{n}], h, s, \ell,\eta)$-partitioning gadget on the elements corresponding to the hyperedge $e$. Say for $e$, the partitioning gadget is comprised of the collections $\mathcal{P}^e_1, \mathcal{P}^e_2, \ldots, \mathcal{P}^e_s$. 
	
	Using these collections, we will first define sets $T^{e,v}_\beta$ for each hyperedge $e \in E$, vertex $v \in e$, and alphabet $\beta \in \Sigma$ in the given instance $\mathcal{G}$. Then, for every vertex $v$ and alphabet $\beta$, we will include $\cup_{e \in E : e \ni v} \  T^{e,v}_\beta$ as a subset in the set system $\mathcal{F}$. 
	
	 For each $i \in [s]$, we consider the $i$th alphabet of $\Sigma$, say $\alpha_i$, and associate the $h$ sets in $\mathcal{P}^e_i$ with labels for $v_1,v_2, \ldots,v_h$ which map (under bijection $\pi_{e, v_j}$) to $\alpha_i$. This is done by renaming the subsets in $\mathcal{P}^e_i = \{P^e_{i,1}, \ldots, P^e_{i,h} \}$ to $T^{e,v_1}_{(\pi_{e,v_1})^{-1}(\alpha_i)},T^{e,v_2}_{(\pi_{e,v_2})^{-1}(\alpha_i)},\ldots,T^{e,v_h}_{(\pi_{e,v_h})^{-1}(\alpha_i)}$, respectively. In other words, if alphabet  $\beta \in \Sigma$ satisfies $\pi_{e, v_j} (\beta) = \alpha_i$, then we assign $T^{e, v_j}_{\beta} = P^e_{i, j}$.  Since $\pi_{e, v_j}$ is a bijection, its inverse (i.e., $\beta$) is well-defined.  
	
	For every vertex $v$ and alphabet $\beta \in \Sigma$,  we construct $\widetilde{T}^v_\beta := \cup_{e \in E} T^{e,v}_{\beta}$. These subsets $\widetilde{T}^v_\beta \subseteq \Gamma = \cup_{e \in E} [\widehat{n}]_e$ constitute the set system of the $\ell$-coverage problem, i.e., $\mathcal{F} := \Big\{\widetilde{T}^v_\beta \mid  v \in V, \beta \in \Sigma \Big\}$. Finally, we set the (cardinality constraint) threshold $k = |V|$, i.e., in the constructed instance $(\Gamma, \mathcal{F})$ the objective is to select $k =|V|$ subsets (from $\mathcal{F}$) with as large an $\ell$-coverage value as possible.  
	
	We quickly point out that in order to ensure that guarantees of the partitioning gadgets hold, we need $|\Sigma| = s \ge |T| \ge 2h$. However, since $h$ is a constant, without loss of generality, we can always consider $h$-$\kary$ instances with alphabet sizes large enough without losing out on the completeness and soundness parameters. We detail this observation in Section \ref{sec:alphabet}.

\subsection{Proof of Theorem~\ref{theorem:hardness}}	
	In the following subsections we argue the completeness and soundness directions of the reduction. This will establish the stated inapproximability result. 
		
\subsubsection{Completeness}

Suppose the given $h$-$\kary$ instance $\mathcal{G}$ is a YES instance. Then, there exists a labeling $\sigma: V \mapsto \Sigma$ which strongly satisfies $1 - \delta$ fraction of edges. Consider the collection of $|V|$ subsets $\mathcal{T} := \{ \widetilde{T}^v_{\sigma(v)} \mid v \in V \}$. Let $e = ({v_1,v_2,\ldots,v_h})$ be a hyperedge which is strongly satisfied by $\sigma$. Then, $\pi_{e,v_1}(\sigma(v_1)) = \pi_{e,v_2}(\sigma(v_2))=\cdots=\pi_{e,v_h}(\sigma(v_h)) = \alpha_r$, for some $\alpha_r \in \Sigma$. By construction, for every $x \in [h]$, we have $T^{e, v_x}_{\sigma(v_x)} \subset \widetilde{T}^{v_x}_{\sigma(v_x)}$ and $\mathcal{P}^e_{r} = \left\{ T^{e,v_x}_{\sigma(v_x)} \mid x \in [h] \right\}$. Therefore, condition $(1)$ of Definition \ref{def:ell-partition} ensures that the sets $\widetilde{T}^{v_x}_{\sigma(v_x)}$ forms an $\ell$-cover of $[\widehat{n}]_e$. Since this is true for at least $(1 - \delta)$ fraction of the edges, we have $C^{(\ell)}(\mathcal{T}) \geq (1 - \delta) \ell |\Gamma|$.

\subsubsection{Soundness}

	We establish the contrapositive for the soundness claim, i.e., if there exists a family of $|V|$ sets with large $C^{(\ell)}$-value, then there exists a labeling which \emph{weakly} satisfies a significant fraction of the edges of the given instance $\mathcal{G}$. Formally, suppose there exists a collection $\mathcal{T} \subset \mathcal{F}$ of $|V|$ subsets such that 
	\begin{align}
	C^{(\ell)}( \mathcal{T}) & \geq \left(  1-  \frac{\ell^\ell}{\ell!} e^{-\ell}  \right) \ell |\Gamma| + \delta |\Gamma| \label{ineq:contrap}
	\end{align}
	For every vertex $v \in V$, we define $L(v) := \{\beta \in \Sigma \mid \widetilde{T}^v_{\beta} \in \mathcal{T}\}$ to be the candidate set of labels that can be associated with the vertex $v$. We extend this definition to hyperedges $e = (v_1,v_2,\ldots,v_h)$, where we define $L(e) := L(v_1) \cup L(v_2) \cup \cdots L(v_h)$ to be the \emph{multiset} of all labels associated with the edge.
	
	We say that a hyperedge $e = (v_1,v_2,\ldots,v_h) \in E$ is \emph{consistent} iff there exists $x,y \in [h]$ such that $\pi_{e,v_x}(L(v_x)) \cap \pi_{e,v_y}(L(v_y)) \neq \emptyset$, i.e., $e$ is consistent if there exists two vertices $v_x$ and $v_y$ in this hyperedge such that the projections  of the label sets of $v_x$ and $v_y$ are not disjoint. We will need the following basic lemma which says that the $\ell$-coverage of any inconsistent hyperedge (i.e., a hyperedge which is not consistent) must be small.

	\begin{lemma}				\label{lem:ell-cover-bound}
	
	Let $e = (v_1,v_2,\ldots,v_h) \in E$ be any hyperedge which is inconsistent with respect to $\mathcal{T}$. Then, the $\ell$-coverage of $\mathcal{T}$, restricted to elements $[\widehat{n}]_e$, is upper bounded as follows $C^{(\ell)}_e (\mathcal{T}) \le \left( \fcov^{(\ell)}_{|L(e)|, h} + \eta \right) \widehat{n}$. 
	\end{lemma}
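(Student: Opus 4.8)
\textbf{Proof plan for Lemma~\ref{lem:ell-cover-bound}.}
The plan is to show that the sets from $\mathcal{T}$ that touch $[\widehat{n}]_e$ are exactly a collection of the type described in condition $(2)$ of Definition~\ref{def:ell-partition}, namely one set $P_i \in \mathcal{P}^e_i$ for each index $i$ in some index subset $T \subseteq [s]$, with $|T| = |L(e)|$. Once this is established, condition $(2)$ of the partitioning system immediately gives $C^{(\ell)}_e(\mathcal{T}) \le (\fcov^{(\ell)}_{|T|,h} + \eta)\widehat{n} = (\fcov^{(\ell)}_{|L(e)|,h}+\eta)\widehat{n}$, as claimed.

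First I would unwind the construction from Section~\ref{sect:ell-reduction}. A set $\widetilde{T}^v_\beta \in \mathcal{T}$ contributes to $[\widehat{n}]_e$ precisely through its component $T^{e,v}_\beta = P^e_{i,j}$, where $i$ is the index with $\pi_{e,v}(\beta) = \alpha_i$ and $j$ is the coordinate of $v$ in the hyperedge $e = (v_1,\dots,v_h)$. So each chosen set $\widetilde{T}^v_{\beta}$ with $v \in e$ and $\beta \in L(v)$ picks out one subset from the family $\mathcal{P}^e_{i}$, where $i$ is determined by $\pi_{e,v}(\beta)$. The next step is to argue that no two of these chosen subsets come from the same family $\mathcal{P}^e_i$: this is exactly where inconsistency of $e$ is used. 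If two chosen sets $\widetilde{T}^{v_x}_{\beta_x}$ and $\widetilde{T}^{v_y}_{\beta_y}$ (with $v_x, v_y \in e$) both landed in $\mathcal{P}^e_i$, then by construction $\pi_{e,v_x}(\beta_x) = \alpha_i = \pi_{e,v_y}(\beta_y)$; if $x \neq y$ this would put $\alpha_i \in \pi_{e,v_x}(L(v_x)) \cap \pi_{e,v_y}(L(v_y))$, contradicting inconsistency of $e$, and if $x = y$ the two sets would be the same element of $\mathcal{T}$ (since $\pi_{e,v_x}$ is a bijection, $\beta_x = \beta_y$). Hence the chosen subsets hitting $[\widehat{n}]_e$ form a collection $\mathcal{Q} = \{P_i \mid i \in T\}$ with $P_i \in \mathcal{P}^e_i$ for an index set $T$ of size exactly $\sum_{v \in e}|L(v)| = |L(e)|$ (the multiset $L(e)$ has size equal to the number of such sets precisely because the map $(v,\beta)\mapsto i$ restricted to the chosen sets is injective).

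Then I would just invoke condition $(2)$ of Definition~\ref{def:ell-partition} with this $T$ to conclude $C^{(\ell)}_e(\mathcal{T}) \le (\fcov^{(\ell)}_{|L(e)|,h} + \eta)\widehat{n}$. I expect the only slightly delicate point to be the bookkeeping that $|T| = |L(e)|$ as a multiset cardinality and that the injectivity argument handles the $x=y$ case (a label $\beta \in L(v)$ contributes to $[\widehat{n}]_e$ through a single family $\mathcal{P}^e_i$, so there is no double counting within a single vertex either); everything else is a direct application of the gadget's defining property. One should also note in passing that $|L(e)| \le s$ so that condition $(2)$ is applicable, which holds since $|L(e)| \le hk / \text{(something)}$—more simply, $|L(e)|$ is bounded by the total number of sets in $\mathcal{T}$ incident to $e$, and the size constraints $s \ge 2h$ from Section~\ref{sec:alphabet} together with the soundness regime ensure $\fcov^{(\ell)}_{|T|,h}$ is well-defined; but for this lemma we only need the inequality exactly as stated, with no constraint on $|L(e)|$ beyond what the gadget guarantees.
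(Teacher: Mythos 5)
Your proposal is correct and follows essentially the same route as the paper: inconsistency of $e$ (plus bijectivity of the projections, for the within-vertex case) forces $\mathcal{T}$ to intersect each family $\mathcal{P}^e_i$ in at most one subset, so the sets of $\mathcal{T}$ restricted to $[\widehat{n}]_e$ form a collection $\mathcal{Q}=\{P_i \mid i\in T\}$ with $|T|=|L(e)|$, and condition $(2)$ of Definition~\ref{def:ell-partition} gives the bound. Your treatment is in fact slightly more explicit than the paper's about the injectivity of $(v,\beta)\mapsto i$ and hence about why $|T|$ equals the multiset cardinality $|L(e)|$.
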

	\begin{proof}
		Since $e$ is inconsistent, we have that for every $x,y \in [h]$, the projected label sets are disjoint i.e., $\pi_{e,v_x}(L(v_x)) \cap \pi_{e,v_y}(L(v_y)) = \emptyset$. Therefore, for every $\alpha_i \in \Sigma$, there exists at most one $v \in e$ such that $\pi_{e,v}(L(v)) \ni \alpha_i$, which implies that for every $i \in [s]$, the family $\mathcal{T}$ intersects with the collection $\mathcal{P}^e_i$ in at most one subset. Therefore, we can invoke condition $(2)$ from Definition~\ref{def:ell-partition} (with $|T| = |L(e)|$), to obtain $C^{(\ell)}_e (\mathcal{T}) \leq \left( \fcov^{(\ell)}_{|L(e)|, h} + \eta \right) \widehat{n}$.  	
		\end{proof}
		
	 Since the overall $\ell$-coverage value of $\mathcal{T}$ is large and inconsistent edges admit small $\ell$-coverage, we can show that there exists a large fraction of consistent edges. To begin with, we claim that for a significant fraction of the edges $e$, the associated label sets $L(e)$ cannot be too large. Specifically, we note that 
	 
	 	\begin{equation}
	\E_{e = (v_1,\ldots,v_h) \sim E}\Big[|L(v_1) \cup L(v_2) \cup \cdots \cup L(v_h)|\Big] \le \frac{h}{|V|}\sum_{v \in V} |L(v)| = h.
	\end{equation}
	
	\noindent Here, we use the fact that the underlying hypergraph is regular and, hence, picking a hyperedge uniformly at random corresponds to selecting vertices with probability $h/|V|$ each.

	Therefore, via Markov's inequality, the size of the label set (i.e., $|L(e)|$) is greater than $ \frac{16  \ell^2}{\delta} \ h $ for at most $\frac{\delta}{16 \ell^2}$ fraction of the hyperedges. 
	Next, we provide a decoding which weakly satisfies a significant fraction of hyperedges which are consistent and have small-sized label sets. 
	
	In particular, we say that a hyperedge $e \in E$ is {\em nice} if (i) $e$ is consistent and (ii) the associated label set $L(e) $ is of cardinality at most $ \frac{ 16 \ell^2}{\delta} \ h$. Since inconsistent edges with small label sets must result in small $\ell$-coverage (Lemma \ref{lem:ell-cover-bound}), it must be that a significant fraction of edges must be nice. This observation is formalized in the following lemma:

	\begin{lemma}
	At least $\frac{\delta}{16 \ell^2}$ fraction of hyperedges must be nice.
	\end{lemma}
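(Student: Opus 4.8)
The plan is to first establish that a fraction at least $\frac{\delta}{2\ell}$ of the hyperedges are \emph{consistent}, and then to intersect this with the fact (already shown above, via Markov's inequality) that all but a $\frac{\delta}{16\ell^2}$ fraction of hyperedges satisfy $|L(e)| \le \frac{16\ell^2}{\delta}h$. To lower bound the number of consistent edges I would upper bound the total coverage $C^{(\ell)}(\mathcal{T}) = \sum_{e \in E} C^{(\ell)}_e(\mathcal{T})$ by splitting $E$ according to consistency: a consistent hyperedge contributes at most $\ell\widehat{n}$ (the trivial bound), while an inconsistent hyperedge $e$ contributes at most $\bigl(\widetilde{\fcov}^{(\ell)}_{|L(e)|,h} + \eta\bigr)\widehat{n}$ by Lemma~\ref{lem:ell-cover-bound}, where $\widetilde{\fcov}^{(\ell)}_{\cdot,h}$ is the piecewise linear extension of $t\mapsto \fcov^{(\ell)}_{t,h}$ (which it agrees with on integers). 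The naive estimate -- bounding $\fcov^{(\ell)}_{|L(e)|,h}$ for a single inconsistent edge using only $|L(e)| \le \frac{16\ell^2}{\delta}h$ -- is hopelessly lossy, since $\fcov^{(\ell)}_{t,h} \to \ell$ as $t\to\infty$. The right move is to work with the \emph{aggregate} over inconsistent edges and exploit that the label-set sizes average to $h$: regularity of the hypergraph together with $|\mathcal{T}| = |V|$ gives $\sum_{e \in E}|L(e)| = h|E|$.

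The key technical ingredient is that $t \mapsto \fcov^{(\ell)}_{t,h}$ is non-decreasing and concave on the nonnegative integers, so that its piecewise linear extension is non-decreasing and concave on $\mathbb{R}_{\ge 0}$. This follows from the identity $\fcov^{(\ell)}_{t,h} = \ex{\min\{\ell,\mathrm{Bin}(t,\ell/h)\}}$ (the same tail-sum rewriting used in Lemma~\ref{lemma:rhol-binomial}): coupling $\mathrm{Bin}(t{+}1,\ell/h)$ with $\mathrm{Bin}(t,\ell/h)+\mathrm{Ber}(\ell/h)$ shows the increment $\fcov^{(\ell)}_{t+1,h}-\fcov^{(\ell)}_{t,h}$ equals $\frac{\ell}{h}\pr{\mathrm{Bin}(t,\ell/h)\le\ell-1}$, which is nonnegative and non-increasing in $t$ because the binomial is stochastically increasing. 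Jensen's inequality, applied with the uniform distribution over the inconsistent edges, then yields
\begin{align*}
\sum_{e \text{ inconsistent}} \widetilde{\fcov}^{(\ell)}_{|L(e)|,h} \;\le\; (1-\gamma)|E|\cdot \widetilde{\fcov}^{(\ell)}_{\,\tau,h}, \qquad \tau := \frac{1}{(1-\gamma)|E|}\sum_{e\text{ inconsistent}}|L(e)| \;\le\; \frac{h}{1-\gamma},
\end{align*}
where $\gamma$ denotes the fraction of consistent edges and the last step uses $\sum_{e}|L(e)| = h|E|$.

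Finally I would run a short case analysis. If $\gamma \ge \frac{\delta}{2\ell}$ we are already done. Otherwise $\gamma < \frac{\delta}{2\ell} \le \frac12$, hence $\tau \le \frac{h}{1-\gamma} \le h(1+2\gamma) < h\bigl(1+\frac{\delta}{\ell}\bigr)$; provided the parameters are chosen in the right order -- $\eta,\varepsilon$ much smaller than $\delta$ and the gadget parameter $\mu$ chosen so that $\frac{\delta}{\ell}+\frac1h \le \mu$ -- the bound behind equation~\eqref{eq:rho-ratio}, together with monotonicity of $\widetilde{\fcov}^{(\ell)}_{\cdot,h}$, gives $\widetilde{\fcov}^{(\ell)}_{\tau,h} \le \rho_\ell\ell + \varepsilon$. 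Combining the coverage upper bound with the hypothesis $C^{(\ell)}(\mathcal{T}) \ge (\rho_\ell\ell+\delta)|E|\widehat{n}$ of inequality~\eqref{ineq:contrap} gives $\rho_\ell\ell+\delta \le \gamma\ell + (1-\gamma)(\rho_\ell\ell+\varepsilon+\eta)$, which rearranges (using $1-\rho_\ell \le 1$) to $\gamma \ge \frac{\delta-\varepsilon-\eta}{\ell} \ge \frac{\delta}{2\ell}$, contradicting $\gamma < \frac{\delta}{2\ell}$. Hence $\gamma \ge \frac{\delta}{2\ell}$ in all cases, and since at most a $\frac{\delta}{16\ell^2}$ fraction of edges have $|L(e)| > \frac{16\ell^2}{\delta}h$, at least a $\gamma - \frac{\delta}{16\ell^2} \ge \frac{\delta}{2\ell}-\frac{\delta}{16\ell^2} \ge \frac{\delta}{16\ell^2}$ fraction of edges are both consistent and have small label set, i.e.\ are nice. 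The main obstacle is precisely this averaging step: one cannot afford to bound a single inconsistent edge's coverage, and must instead aggregate via concavity of $\fcov^{(\ell)}_{\cdot,h}$ so that the ``effective'' label size $\tau$ lands just inside the window $|T| \le h(1+\mu)$ where the partitioning gadget's second property is meaningful -- which is what forces the careful, ordered choice of $\eta$, $\mu$, $\varepsilon$, $\delta$, and $h$.
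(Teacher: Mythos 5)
Your overall strategy is the same as the paper's: split the coverage of $\mathcal{T}$ into consistent and inconsistent edges, bound the inconsistent ones via Lemma~\ref{lem:ell-cover-bound}, aggregate using Jensen together with the monotonicity and concavity of the piecewise-linear extension $\widetilde{\fcov}^{(\ell)}_{\cdot,h}$ (your coupling argument for the increments is exactly the paper's diminishing-marginals lemma), use regularity to control the average label-set size, and derive a contradiction from \eqref{ineq:contrap}. The only structural difference is that you first extract a lower bound on the fraction $\gamma$ of \emph{consistent} edges and then intersect with the Markov bound, whereas the paper runs the contradiction directly against the fraction of \emph{nice} edges.

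There is, however, a genuine quantitative gap in your parameter bookkeeping. In your contradiction branch you only assume $\gamma < \frac{\delta}{2\ell}$, which forces you to take $\tau$ as large as $h\bigl(1+\frac{\delta}{\ell}\bigr)$ and hence to invoke the gadget bound \eqref{eq:rho-ratio} with $\mu \geq \frac{\delta}{\ell}$. But Definition~\ref{def:ell-partition} only guarantees $\widetilde{\fcov}^{(\ell)}_{h(1+\mu),h} \leq \rho_\ell \ell + \varepsilon$ when $\mu < \frac{\varepsilon}{2\ell^2}e^{\ell}\frac{\ell!}{\ell^{\ell}} = O(\varepsilon/\ell^{3/2})$; combined with $\mu \geq \frac{\delta}{\ell}$ this forces $\varepsilon = \Omega(\delta\sqrt{\ell})$, which is incompatible with your requirement $\varepsilon + \eta \leq \delta/2$ in the final rearrangement (indeed, each unit increase of $|T|$ beyond $h$ raises $\fcov^{(\ell)}_{|T|,h}$ by about $\frac{\ell}{2h}$, so $h\cdot\frac{\delta}{\ell}$ extra steps cost about $\delta/2$ by themselves). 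The fix is simply to aim for the weaker intermediate target $\gamma \geq \frac{\delta}{8\ell^2}$: then in the contradiction branch $\tau \leq h\bigl(1+\frac{\delta}{4\ell^2}\bigr)$, the gadget bound applies with $\mu = \frac{\delta}{4\ell^2}$ and $\varepsilon = \delta/4$ (this is the paper's choice, and the constraint $\frac{\delta}{4\ell^2} < \frac{\delta/4}{2\ell^2}e^\ell\frac{\ell!}{\ell^\ell}$ holds since $e^\ell \ell!/\ell^\ell \geq 2$), and your rearrangement yields $\gamma \geq \frac{\delta - \delta/4 - \eta}{\ell} \geq \frac{\delta}{2\ell} > \frac{\delta}{8\ell^2}$, the desired contradiction. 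Your final counting step still closes, since $\frac{\delta}{8\ell^2} - \frac{\delta}{16\ell^2} = \frac{\delta}{16\ell^2}$.
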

	\begin{proof}
		Assume not, for the purpose of contradiction. Then, applying a union bound gives us
	\begin{equation*}
	\Pr\big[ e \mbox { is consistent }\big] \le \Pr\big[ e \mbox{ is nice }\big] + \Pr \left[ |L(e)| \ge \frac{16 \ell^2}{\delta} h  \right] \leq \frac{ \delta}{8 \ell^2} 
	\end{equation*}
	\noindent where the tail bound on $|L(e)|$ follows from Markov's inequality (as mentioned above). Therefore, the $\ell$-coverage contribution from consistent edges is at most $\frac{ \delta}{  8 \ell^2 } |E| \cdot \ell \widehat{n} = \frac{ \delta}{ 8 \ell } |\Gamma|$.\footnote{Recall that $|\Gamma| = |E| \widehat{n}$.} Furthermore, using equation \eqref{ineq:contrap}, we get that the $\ell$-coverage from inconsistent edges is at least $\left( \ell \left(  1-  \frac{\ell^\ell}{\ell!} e^{-\ell}  \right)  + \frac{\delta}{2} \right) |\Gamma| $. 
	
	To obtain a contradiction, we will now prove that the $\ell$-coverage of inconsistent hyperedges cannot be this large. Write  $E_{\rm inc} \subset E$ to denote the set of inconsistent hyperedges. Since $|E_{\rm inc}| \ge (1 - \frac{\delta}{8 \ell^2}) |E|$, by averaging it follows that $\E_{e \sim E_{\rm inc}}[|L(e)|] \le \left(1 + \frac{\delta}{4 \ell^2} \right) h$. 
	
Furthermore, focusing on inconsistent edges and applying Lemma~\ref{lem:ell-cover-bound} we get 
	\begin{align*}
	\E_{e \sim E_{\rm inc}} \left[C^{(\ell)}_e (\mathcal{T})\right] \le \E_{e \sim E_{\rm inc}} \left[{\fcov}^{(\ell)}_{|L(e)|,h} + \eta\right] \
	& \overset{1}{=} \E_{e \sim E_{\rm inc}} \left[\widetilde{\fcov}^{(\ell)}_{|L(e)|,h} + \eta\right] \\
	&\overset{2}{\leq} \left(\widetilde{\fcov}^{(\ell)}_{h(1 + \mu),h} + \eta \right) \widehat{n} \tag{setting $\mu = \frac{\delta}{4 \ell^2}$} \\
	&\overset{3}{=} \left({\fcov}^{(\ell)}_{h(1 + \mu),h} + \eta \right) \widehat{n} \\
	&\leq \left( \ell \left(1 - \frac{\ell^\ell}{\ell!} e^{-\ell} \right) + \delta/4 \right) \widehat{n}
	\end{align*}
	
	Here, in Steps $1$ and $3$ we use the fact that, by construction, $\widetilde{\fcov}^{(\ell)}_{x,h} = \fcov^{(\ell)}_{x,h}$ for every $x \in \mathbbm{Z}_+$. Step $2$ follows via Jensen's inequality along with the observation that $\widetilde{\fcov}^{(\ell)}_{x,h}$ is increasing (Lemma \ref{lem:rho_incr}) and concave in $x$ (Lemma \ref{lem:rho-concave}). Now, the last inequality follows from the last statement of Definition \ref{def:ell-partition}. This implies that the total $\ell$-coverage contribution of inconsistent edges $E_{\rm inc}$ is at most $\left( \ell \left(1 - \frac{\ell^\ell}{\ell!} e^{-\ell} \right) + \delta/4 \right) |\Gamma|$. Hence, we obtain a contradiction and the claim follows.

	\end{proof}

	Finally, we construct a randomized labeling $\sigma: V \mapsto \Sigma$ as follows. For every vertex $v$, if $L(v) \neq \emptyset$, we set $\sigma(v)$ uniformly from $L(v)$, otherwise we set $\sigma(v)$ arbitrarily. We claim that, in expectation, this labeling must weakly satisfy $\Omega(\delta^3)$ fraction of the hyperedges. 
	
	 To see this, fix any nice hyperedge $e = (v_1,v_2,\ldots,v_h)$. Without loss of generality, we can assume that $\pi_{e,v_1}(L(v_1)) \cap \pi_{e,v_2}(L(v_2)) \neq \emptyset$. Furthermore, the niceness also implies that $|L(v_1)|,|L(v_2)| \le \frac{16 \ell^2}{\delta} \ h $. Therefore, with probability at least $1/|L(v_1)||L(v_2)| \ge \frac{\delta^2}{256 \ell^4 h^2}$, we must have $\pi_{e,v_1}(\sigma(v_1)) = \pi_{e,v_2}(\sigma(v_2))$. Therefore,
	\begin{align*}
	\E_\sigma\E_{e \sim E}\left[\mathbbm{1}\{\sigma \mbox{ weakly satisifies } e\}\right] \ge \frac{\delta}{16 \ell^2} \E_\sigma\E_{e \sim E }\left[\mathbbm{1}\{\sigma \mbox{ weakly satisifies } e\} \mid e \in E_{\rm nice} \right] \geq \Omega(\delta^3)	
	\end{align*}
	
	which, for fixed $\ell$ and $h$, gives us the soundness direction.

\section{Concluding remarks} 

The standard coverage function $C(S)$ counts the number of elements $e \in [n]$ that are covered by at least one set $T_i$ with $i \in S$. Note that the contribution of an $e \in [n]$ to $C(S)$ is exactly the same whether $e$ appears in just one set $T_i$ or in all of them. As previously mentioned, it is very natural to consider settings wherein having more than one copy of $e$ is more valuable than just one copy of it. The $\ell$-coverage function we introduced does exactly that: having $c$ copies of element $e$ has a value of $\min\{\ell, c\}$. We showed that when this is the case, we can take advantage of this structure and obtain a better approximation guarantee as a function of $\ell$. In subsequent works~\cite{dudycz2020tight,barman2020tight}, the general setting where $c$ copies have a value of $\varphi(c)$ for some concave and monotone function $\varphi$ was studied.
More generally, we believe that our work paves the way towards an operationally motivated notion of submodularity that interpolates between linear functions and completely general submodular functions. The previously mentioned notion of curvature studied in~\cite{CC84,SVW17} does this interpolation but the definition is unfortunately too restrictive and thus difficult to interpret operationally. 

Another interesting question is whether there exists combinatorial algorithms that achieve the approximation ratio $\rho_{\ell}$ for maximum $\ell$-coverage for $\ell \geq 2$. For $\ell =1$, the simple greedy algorithm does the job, but as we mentioned previously, the greedy algorithm only gives a $1-e^{-1}$ approximation ratio even for $\ell \geq 2$. Is it possible to generalize the greedy algorithm to give an approximation ratio beating $1-e^{-1}$?

\section*{Acknowledgements}

We are very grateful to Guillaume Aubrun for referring us to the notion of convex order between distributions and  Bar{\i}\c{s} Nakibu\u{g}lu for asking us about the list-decoding variant of~\cite{BF15} during the workshop Beyond IID in Information Theory held at the Institute for Mathematical Sciences, National University of Singapore in 2017. We would also like to thank Edouard Bonnet for referring us to the literature on set multicover. In addition, we thank the reviewers for their very useful comments on the manuscript, in particular for pointing out an error in the proof of Lemma~\ref{lem:gadget} in a previous version and suggesting to use negative association to correct it.

This research is supported by the French ANR project ANR-18-CE47-0011 (ACOM). Siddharth Barman gratefully acknowledges the support of a Ramanujan Fellowship (SERB - {SB/S2/RJN-128/2015}) and a Pratiksha Trust Young Investigator Award. Part of this work was conducted during the first author's visit to \'{E}cole Normale Sup\'{e}rieure de Lyon and was supported by the Administration de la recherche (ADRE), France.

  \bibliographystyle{abbrv}
  
  \bibliography{big}
  
  \appendix

  \section{Proof of Lemma \ref{lem:function-convex} }				\label{sec:function-convex}

\fnconvex*

\begin{proof}
	Let us start with the case $s \geq t$. In this case, 
	\begin{align*}
		\sum_{q=0}^{s-1} (s-q) \binom{t}{q} x^{q} (1 - x)^{t-q} 
		&= \sum_{q=0}^{s} (s-q) \binom{t}{q} x^{q} (1 - x)^{t-q} \\
		&= \sum_{q=0}^{t} (s-q) \binom{t}{q} x^q (1-x)^{t-q} \\
		&= s - \sum_{q=1}^{t} q \binom{t}{q} x^q (1-x)^{t-q} \\
		&= s - \sum_{q=1}^{t} t \binom{t-1}{q-1} x^q (1-x)^{t-q} \\
		&= s - t x \cdot \sum_{q=0}^{t-1} \binom{t-1}{q-1} x^{q} (1-x)^{t-1 - q} \\
		&= s - t x \ ,
	\end{align*}
	which proves the statement.
	
	Now we assume $s < t$. Then we have
	\begin{align*}
		f'(x) = -st (1-x)^{t-1} + \sum_{q=1}^{s-1} (s-q) \binom{t}{q} (q x^{q-1} (1-x)^{t-q} - (t-q) x^q (1-x)^{t-q-1}) \ .
	\end{align*}
	If we now combine the two terms of the form $x^q (1-x)^{t-q-1}$ for $q \in \{0,\dots,s-2\}$, we get a coefficient of 
	\begin{align*}
		-(s-q) (t-q) \binom{t}{q} + (s-q-1) \binom{t}{q+1} (q+1) &= -(s-q) (q+1) \binom{t}{q+1} \\ & \ \ \ \ \ \ + (s-q-1) \binom{t}{q+1} (q+1) \\
		&= - (q+1)\binom{t}{q+1} \ .
	\end{align*}
	Thus, for $x \in [0,1]$
	\begin{align*}
		f'(x) = - \sum_{q=0}^{s-2} \binom{t}{q+1} (q+1) x^{q} (1-x)^{t-q-1} - \binom{t}{s-1} (t-s+1) x^{s-1} (1-x)^{t-s}  \leq 0 \ .
	\end{align*}
	This proves the fact that $f$ is non-increasing. Now, if we differentiate one more time, we get
	\begin{align*}
		f''(x) &= (t-1) \cdot t \cdot (1-x)^{t-2} + \sum_{q=1}^{s-2} \binom{t}{q+1} (q+1) (- q x^{q-1} (1-x)^{t-q-1} + (t-q-1) x^{q} (1-x)^{t-q-2})  \\
		&- \binom{t}{s-1} (t-s+1) (s-1) x^{s-2}(1-x)^{t-s} + \binom{t}{s-1} (t-s+1) (t-s) x^{s-1} (1-t)^{t-s-1} \\
		&= \sum_{q=0}^{s-2} \left( \binom{t}{q+1} (q+1) (t-q-1) - \binom{t}{q+2} (q+2) (q+1) \right) x^{q} (1-x)^{t-q-2} \\
		&+ \binom{t}{s-1} (s-1) (t-s+1) x^{s-2} (1-x)^{t-s} \\
		& - \binom{t}{s-1} (t-s+1) (s-1) x^{s-2}(1-x)^{t-s} + \binom{t}{s-1} (t-s+1) (t-s) x^{s-1} (1-x)^{t-s-1} \\
		&= \binom{t}{s-1} (t-s+1) (t-s) x^{s-1} (1-x)^{t-s-1} \geq 0 \ ,
	\end{align*}
	which proves the convexity of $f$. 
	
\end{proof}

  \newcommand{\bin}{\rm{Bin}}
\section{Concavity of $\widetilde{\fcov}$}

In this section, we prove that the linear piecewise extension of the function $\fcov^{(\ell)}_{x,h}$ (which we denote by $\widetilde{\fcov}^{(\ell)}_{x,h}$) is concave in $x$. Our strategy would be to show that for a fixed choice of $\ell,h$, the quantity $\fcov^{(\ell)}_{x,h}$ is \emph{increasing} and satisfies a {diminishing marginals} property in $x$. Using these properties, the concavity of $\widetilde{\fcov}$ follows immediately. We setup some additional notation. For the rest of this section, we will fix $\ell$ and $h$, and let $p \coloneqq \ell/h$. Therefore, we will drop by indexing by $\ell$ and $h$, and denote $\fcov(x) = \fcov^{(\ell)}_{x,h}$. 

We will use $Z_{x,p}$ to denote a random variable drawn from the distribution ${\rm Bin}(x,p)$ i.e., the binomial distribution with bias $p$ and number of trials $x$. Let $s: \mathbbm{R}^+ \mapsto \mathbb{R}$ denote the piecewise linear function defined as follows: $s(x) \coloneqq x$ for all $x \le \ell$ and $s(x) = \ell $ for all $x > \ell$. Recall that from the proof of Lemma \ref{lem:gadget} we can alternatively write $\fcov(x)$ as 
\begin{equation}			\label{def:rho}
\fcov(x) = \sum_{i = 0}^{x} s(i){x \choose i} p^i(1 - p)^{x - i} = \E\Big[s\big(Z_{x,p}\big)\Big]
\end{equation}
Going forward, the above expression for $\fcov(x)$ will prove to be useful. We begin by the following lemma, which says that $\fcov(x)$ is increasing in $x$.

\begin{lemma}					\label{lem:rho_incr}
	For every $x \in \mathbbm{Z}_+$ we have $\fcov({x+1}) \ge \fcov(x)$.
\end{lemma}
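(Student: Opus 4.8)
The plan is to use the representation $\fcov(x) = \E[s(Z_{x,p})]$ from~\eqref{def:rho}, where $Z_{x,p} \sim \mathrm{Bin}(x,p)$ and $s$ is the truncation-at-$\ell$ function, and exploit the fact that $s$ is a nondecreasing function. First I would observe the natural coupling between $\mathrm{Bin}(x,p)$ and $\mathrm{Bin}(x+1,p)$: write $Z_{x+1,p} = Z_{x,p} + B$, where $B \sim \mathrm{Ber}(p)$ is independent of $Z_{x,p}$. Since $B \geq 0$ almost surely, we get $Z_{x+1,p} \geq Z_{x,p}$ pointwise under this coupling. Because $s$ is nondecreasing, this yields $s(Z_{x+1,p}) \geq s(Z_{x,p})$ pointwise, and taking expectations gives $\fcov(x+1) = \E[s(Z_{x+1,p})] \geq \E[s(Z_{x,p})] = \fcov(x)$, which is exactly the claim.

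Alternatively, and perhaps more in the spirit of the surrounding text, one can argue directly from the closed form. Recall from Lemma~\ref{lem:function-convex} (applied with the roles of the parameters matched appropriately) that $\fcov(x) = \ell - g(x)$ where $g(x) = \sum_{m=0}^{\ell-1}(\ell-m)\binom{x}{m} p^m (1-p)^{x-m}$, so it suffices to show $g(x+1) \leq g(x)$. Using the Pascal identity $\binom{x+1}{m} = \binom{x}{m} + \binom{x}{m-1}$ together with $(1-p)^{x+1-m} = (1-p)(1-p)^{x-m}$ and a reindexing of the $\binom{x}{m-1}$ terms, one can regroup $g(x+1)$ as a convex-type combination of the terms of $g(x)$; the monotonicity then follows after a short manipulation, essentially re-deriving the coupling identity at the level of generating-function coefficients. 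I expect the cleaner write-up to be the probabilistic coupling argument, so that is what I would present, keeping the generating-function version as a remark only if a self-contained combinatorial proof is preferred.

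The main (minor) obstacle is really just making sure the truncation function $s$ is handled correctly: $s$ is only nondecreasing, not strictly increasing, so the inequality is not strict, which matches the statement ($\geq$, not $>$). There is no subtlety about convexity or concavity here — that is needed later for $\widetilde{\fcov}$, not for this lemma — so the proof should be short. One should just be careful that the coupling $Z_{x+1,p} = Z_{x,p} + B$ is legitimate (it is, since a sum of $x+1$ i.i.d.\ $\mathrm{Ber}(p)$ variables is the sum of the first $x$ of them plus the last one), and that $s$ being nondecreasing is immediate from its piecewise definition. I would therefore write this as a two-sentence proof: exhibit the coupling, invoke monotonicity of $s$, take expectations.
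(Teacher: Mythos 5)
Your proposal is correct and rests on the same idea as the paper's proof: the paper invokes stochastic dominance of $\mathrm{Bin}(x+1,p)$ over $\mathrm{Bin}(x,p)$ together with the monotonicity of $s$ (via the tail-sum formula), and your explicit coupling $Z_{x+1,p}=Z_{x,p}+B$ is simply the standard constructive witness of that dominance. Either write-up is fine; no gap.
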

\begin{proof}
This is a direct consequence of stochastic dominance between binomial distributions with $x$ and $x+1$ trials, respectively. Recall that $\fcov(x) = \E\Big[s\big(Z_{x,p}\big)\Big]$. The tail-sum formula gives us  
\begin{equation*}
\E\Big[s\big(Z_{x,p}\big)\Big] = \sum_{t = 0}^\infty \Pr\left[ s(Z_{x,p}) \ge t\right] = \sum_{t = 0}^\ell \Pr\left[ s(Z_{x,p}) \ge t\right] = \sum_{t = 0}^\ell \Pr\left[Z_{x,p} \ge t\right]. 
\end{equation*}   
Here, the second equality follows from the fact that, by definition, $s(z) \leq \ell$ for all $z \in \mathbbm{Z}_+$. The third inequality relies on the observation that, for any $t \in [\ell]$, we have $s(z) \geq t$ iff $z \geq t$. 
Analogously, we have $\fcov(x+1) =  \sum_{t = 0}^\ell \Pr\left[Z_{x+1,p} \ge t\right]$. 

Since the random variable $Z_{x+1, p}$ stochastically dominates $Z_{x,p}$---in particular, $ \Pr\left[Z_{x+1,p} \ge t\right] \geq \Pr\left[Z_{x,p} \ge t\right]$---the desired inequality follows $\fcov(x+1) \geq \fcov(x)$. 

\end{proof}

Additionally, the following lemma proves that $\fcov$ satisfies a diminishing marginals property.
	
\begin{lemma}					\label{lem:incr_marginal}
	For every $x,y \in \mathbbm{Z}_+$ such that $x \le y$ we have $\fcov({x+1}) - \fcov(x) \ge \fcov({y+1}) - \fcov(y)$.
\end{lemma}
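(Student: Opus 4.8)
The plan is to exploit the representation $\fcov(x) = \E[s(Z_{x,p})]$ from \eqref{def:rho}, together with the tail-sum identity already derived in the proof of Lemma~\ref{lem:rho_incr}, namely $\fcov(x) = \sum_{t=1}^{\ell} \Pr[Z_{x,p} \ge t]$. It then suffices to prove, for each fixed $t \in \{1,\dots,\ell\}$, the single-threshold inequality
\begin{align*}
\Pr[Z_{x+1,p} \ge t] - \Pr[Z_{x,p} \ge t] \ge \Pr[Z_{y+1,p} \ge t] - \Pr[Z_{y,p} \ge t]
\end{align*}
for all integers $0 \le x \le y$, and then sum over $t$. So the whole lemma reduces to a statement about binomial tail probabilities: the increment in the tail $\Pr[\mathrm{Bin}(n,p) \ge t]$ as $n \to n+1$ is non-increasing in $n$, i.e. $n \mapsto \Pr[\mathrm{Bin}(n,p)\ge t]$ is concave on the integers (for each fixed $t\ge 1$).

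The cleanest way I would establish this is via a coupling. Write $Z_{n+1,p} = Z_{n,p} + B_{n+1}$ where $B_{n+1}\sim\mathrm{Ber}(p)$ is independent of $Z_{n,p}$; then $\Pr[Z_{n+1,p}\ge t] - \Pr[Z_{n,p}\ge t] = p\cdot\Pr[Z_{n,p} = t-1]$, since the event $\{Z_{n+1,p}\ge t\}\setminus\{Z_{n,p}\ge t\}$ is exactly $\{Z_{n,p}=t-1,\ B_{n+1}=1\}$. Hence the claim becomes: $n \mapsto \Pr[\mathrm{Bin}(n,p) = t-1] = \binom{n}{t-1}p^{t-1}(1-p)^{n-t+1}$ is non-increasing in $n$ for $n \ge t-1$. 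Taking the ratio of consecutive terms, $\frac{\Pr[\mathrm{Bin}(n+1,p)=t-1]}{\Pr[\mathrm{Bin}(n,p)=t-1]} = \frac{n+1}{n-t+2}(1-p)$, and one checks this is $\le 1$: equivalently $(n+1)(1-p) \le n-t+2$, i.e. $(t-1) \le (n+1)p$. Unfortunately this last inequality is \emph{not} true for all $n \ge t-1$ (it fails for small $n$), so the pointwise term is \emph{not} monotone, and this naive route needs a fix.

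The main obstacle is therefore precisely this: $\Pr[\mathrm{Bin}(n,p)=t-1]$ first increases then decreases in $n$, so I cannot argue termwise with a crude ratio bound. The repair I would pursue is to prove concavity of $n\mapsto\Pr[\mathrm{Bin}(n,p)\ge t]$ directly by a second-difference computation: using the identity above twice,
\begin{align*}
\big(\fcov(x+1)-\fcov(x)\big) - \big(\fcov(x+2)-\fcov(x+1)\big) = p\sum_{t=1}^{\ell}\big(\Pr[Z_{x,p}=t-1] - \Pr[Z_{x+1,p}=t-1]\big),
\end{align*}
and it suffices to show each bracket, summed appropriately, is $\ge 0$; note $\sum_{t=1}^{\ell}\Pr[Z_{n,p}=t-1] = \Pr[Z_{n,p}\le \ell-1]$, so the right side is $p\big(\Pr[Z_{x,p}\le\ell-1] - \Pr[Z_{x+1,p}\le\ell-1]\big) = p\big(\Pr[Z_{x+1,p}\ge\ell] - \Pr[Z_{x,p}\ge\ell]\big) \ge 0$ by the stochastic dominance already invoked in Lemma~\ref{lem:rho_incr}. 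This collapses the sum over $t$ beautifully and avoids the non-monotone termwise issue entirely; establishing the general two-index inequality $x\le y$ then follows by chaining the second-difference inequality $\fcov(x+1)-\fcov(x) \ge \fcov(x+2)-\fcov(x+1)$ from $x$ up to $y$. I would present the argument in this order: (i) recall the tail-sum form of $\fcov$; (ii) derive $\fcov(n+1)-\fcov(n) = p\,\Pr[Z_{n,p}\le \ell-1]$ via the Bernoulli coupling and the telescoping over $t$; (iii) conclude that the first difference is non-increasing in $n$ because $\Pr[Z_{n,p}\le\ell-1]$ is non-increasing in $n$ (stochastic dominance, as in Lemma~\ref{lem:rho_incr}); (iv) chain to get the stated inequality for all $x\le y$.
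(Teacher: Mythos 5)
Your final argument is correct and lands on exactly the same key identity as the paper, namely $\fcov(n+1)-\fcov(n)=p\,\Pr[Z_{n,p}\le \ell-1]$ (the paper derives it by conditioning on the value of $Z_{x,p}$ and using the piecewise-linear structure of $s$, you derive it via the tail-sum formula and a per-threshold Bernoulli coupling), after which both proofs conclude by stochastic dominance of $Z_{y,p}$ over $Z_{x,p}$. Your detour through the failed termwise ratio bound is correctly diagnosed and repaired, and the chaining of second differences at the end is valid, though once you have the closed form for the first difference you can compare $x$ and $y$ directly without chaining.
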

\begin{proof}
	We again use the alternative expression for $\fcov$ as given in (\ref{def:rho}):
	\begin{eqnarray*}
	\fcov({x+1}) = \E\Big[s\big(Z_{x+1,p}\big)\Big] &=& \sum_{i \ge 0}  \E\Big[s\big( i + Z_{1,p} \big) \Big| Z_{x,p} = i\Big] \cdot \Pr\Big[Z_{x,p} = i \Big] \\
	&=& \sum_{i \ge 0}  \E\Big[s\big( i + Z_{1,p} \big) - s(i) + s(i) \Big| Z_{x,p} = i\Big] \cdot \Pr\Big[Z_{x,p} = i \Big] \\
	&=& \sum_{i \ge 0}  \E\Big[s\big(Z_{1,p}\big) - s(i) \Big| Z_{x,p} = i\Big] \cdot \Pr\Big[Z_{x,p} = i\Big]  \\ & \ & + \sum_{i \ge 0} \E\Big[s(i) \Big| Z_{x,p} = i\Big] \cdot  \Pr\Big[Z_{x,p} = i\Big]   \\
	&=& \sum_{i \ge 0}  \E\Big[s\big(i + Z_{1,p}\big) - s(i) \Big| Z_{x,p} = i\Big] \cdot \Pr\Big[Z_{x,p} = i\Big] \ + \ \E\Big[s\big(Z_{x,p}\big)\Big]   \\
	&=& \sum_{i \ge 0} \E\Big[s\big(i + Z_{1,p}\big) - s(i) \Big| Z_{x,p} = i\Big] \cdot \Pr\Big[Z_{x,p} = i\Big]  \ + \ \fcov(x)  
	\end{eqnarray*}


Rearranging the above expression we get  
	\begin{eqnarray*}
	\fcov({x+1}) - \fcov(x) &=& \sum_{i \ge 0} \E\Big[ s\big(i + Z_{1,p}\big) - s(i) \Big] \Pr\Big[Z_{x,p} = i\Big]  \\   
	&\overset{1}{=}& \sum_{i=0}^{\ell-1}  \E\Big[ s\big(i + Z_{1,p}\big) - s(i) \Big] \Pr\Big[Z_{x,p} = i\Big] \\
	&\overset{2}{=}& \sum_{i=0}^{\ell-1}  \E\Big[Z_{1,p}\Big] \Pr\Big[Z_{x,p} = i\Big] \\
	&{=}& \sum_{i=0}^{\ell-1} p \Pr\Big[Z_{x,p} = i\Big]  \\
	&{=}& p \Pr\Big[Z_{x,p} \le \ell - 1\Big]  \\
	&\overset{3}{\ge}& p \Pr\Big[Z_{y,p} \le \ell - 1\Big]  \\
	&\overset{4}{=}& \fcov({y+1}) - \fcov(y)
	\end{eqnarray*}

Steps $1$ and $2$ follow from the construction of the function $s$; in particular, for any $0 \le z \le 1$, we have $s(i + z) = s(i)$ whenever $i \ge \ell$ and $s(i + z) = s(i) + z$ if $i \le \ell - 1$. Step $3$ follows from the stochastic dominance of $Z_{y,p}$ over $Z_{x,p}$; recall that $x \leq y$. Step $4$ is obtained by reapplying the arguments used till Step $3$ for $y$, i.e., by instantiating the expressions with $y$, instead of $x$.  
\end{proof}

Now we are ready to prove the concavity of $\widetilde{\fcov}^{(\ell)}_{x,h}$.

\begin{lemma}					\label{lem:rho-concave}
	Let $\widetilde{\fcov}^{(\ell)}_{x,h} $ be the piecewise linear extension of $\fcov^{(\ell)}_{x,h}$. Then, $\widetilde{\fcov}^{({\ell})}_{x,h}$ is concave in $x$.
\end{lemma}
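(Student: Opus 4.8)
The plan is to deduce the concavity of the piecewise linear extension $\widetilde{\fcov}^{(\ell)}_{x,h}$ directly from the two properties already established for the integer-valued function $\fcov(x) = \fcov^{(\ell)}_{x,h}$, namely monotonicity (Lemma~\ref{lem:rho_incr}) and the diminishing marginals property (Lemma~\ref{lem:incr_marginal}). The key observation is that a piecewise linear function whose breakpoints are exactly the integers is concave if and only if the sequence of its slopes on consecutive unit intervals is non-increasing; and the slope on the interval $[x, x+1]$ for integer $x$ is precisely $\fcov(x+1) - \fcov(x)$.

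\begin{proof}
Fix $\ell$ and $h$ and write $\fcov(x) = \fcov^{(\ell)}_{x,h}$. By definition of the piecewise linear extension, for $x \in [i, i+1)$ with $i \in \mathbbm{Z}_+$ and $x = \lambda i + (1-\lambda)(i+1)$ we have
\begin{equation*}
\widetilde{\fcov}(x) = \lambda \fcov(i) + (1-\lambda) \fcov(i+1) = \fcov(i) + (x - i)\big(\fcov(i+1) - \fcov(i)\big).
\end{equation*}
Thus $\widetilde{\fcov}$ is affine on each interval $[i, i+1]$ with slope $m_i \coloneqq \fcov(i+1) - \fcov(i)$, and it is continuous on $\mathbbm{R}_+$ since the one-sided limits at each integer agree with $\fcov$. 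A continuous function that is piecewise affine with breakpoints at the integers is concave on $\mathbbm{R}_+$ precisely when its slopes form a non-increasing sequence, i.e.\ $m_i \geq m_{i+1}$ for every $i \in \mathbbm{Z}_+$. But this is exactly the statement of Lemma~\ref{lem:incr_marginal} applied with $x = i$ and $y = i+1$ (which satisfy $x \le y$): it gives $\fcov(i+1) - \fcov(i) \ge \fcov(i+2) - \fcov(i+1)$, that is, $m_i \ge m_{i+1}$. Hence the slopes are non-increasing and $\widetilde{\fcov}$ is concave in $x$.
\end{proof}

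I do not expect a genuine obstacle here: all the work has been done in Lemmas~\ref{lem:rho_incr} and~\ref{lem:incr_marginal}, and what remains is the elementary fact that a continuous piecewise linear function with integer breakpoints is concave iff its successive slopes are non-increasing. The only thing to be slightly careful about is spelling out that the extension is continuous (so that the ``non-increasing slopes implies concave'' criterion genuinely applies, rather than just midpoint-concavity), and noting that monotonicity (Lemma~\ref{lem:rho_incr}) is not even strictly needed for concavity itself — it is the diminishing-marginals lemma that does the real work — though it is reassuring and was used elsewhere in the soundness argument. If one wanted to be fully explicit, one could instead verify midpoint concavity $\widetilde{\fcov}\big(\tfrac{a+b}{2}\big) \ge \tfrac12(\widetilde{\fcov}(a) + \widetilde{\fcov}(b))$ for all $a, b \in \mathbbm{R}_+$ by reducing to the slope comparison interval by interval, but the slope-monotonicity argument above is cleaner.
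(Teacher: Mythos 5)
Your proof is correct. It rests on the same key ingredient as the paper's proof --- the diminishing-marginals property of Lemma~\ref{lem:incr_marginal} --- but the packaging is different and notably more direct. The paper invokes a four-point characterization of concavity (its Proposition~\ref{prop:cncv}), and to compare the slope over $[x_1,x_2]$ with the slope over $[x_3,x_4]$ it decomposes each difference quotient into a telescoping sum over unit intervals and applies the mediant inequality $\frac{\sum a_i}{\sum b_i} \ge \min_i \frac{a_i}{b_i}$; this step is why the paper also needs the monotonicity of $\fcov$ (Lemma~\ref{lem:rho_incr}), to guarantee the numerators are nonnegative. You instead observe that $\widetilde{\fcov}$ is continuous and affine on each $[i,i+1]$ with slope $m_i = \fcov(i+1)-\fcov(i)$, and appeal to the standard fact that such a function is concave iff its slopes are non-increasing, which is exactly Lemma~\ref{lem:incr_marginal} with consecutive integers. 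Your remark that Lemma~\ref{lem:rho_incr} is not needed for concavity itself is accurate (it is used elsewhere in the soundness argument, and in the paper's particular proof route). The one point you correctly flag but leave as a citation is the ``non-increasing slopes $\Rightarrow$ concave'' criterion; if one wanted it self-contained, the cleanest justification is that $\widetilde{\fcov}(x) = \fcov(0) + \int_0^x g(t)\,dt$ where $g$ is the non-increasing step function taking value $m_i$ on $[i,i+1)$, and the integral of a non-increasing function is concave. No gap.
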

\begin{proof}
	For ease of notation, we drop the indexing by $\ell,h$ and denote $\widetilde{\fcov}(x) = \widetilde{\fcov}^{(\ell)}_{x,h}$. We shall need the following equivalent characterization of concavity of function:
	
	\begin{proposition}					\label{prop:cncv}
		A function $g:A \mapsto \mathbbm{R}$ is concave over $A \subseteq \mathbbm{R}$ iff for every choice of $x_1,x_2,x_3,x_4 \in A$ such that $x_1 \le x_2 \le x_3 \le x_4$ we have 
		\begin{equation}
		\frac{g(x_2) - g(x_1)}{x_2 - x_1} \ge \frac{g(x_4) - g(x_3)}{x_4 - x_3}
		\end{equation}  
	\end{proposition}
	
	Using the above proposition, the concavity of $\widetilde{\fcov}$ follows almost directly. Fix any ${x_i}_{i \in [4]}$ as in Proposition \ref{prop:cncv}. For every $j \in \{1,2,3,4,\}$, let $(i_j,i_j + 1]$ be the semi-closed interval such that $x_j \in (i_j,i_j + 1]$. Then,
	
	\begin{eqnarray}
	\frac{\widetilde{\fcov}(x_2) - \widetilde{\fcov}(x_1)}{x_2 - x_1}  \label{eq:ineq1}
	&=& \frac{(\widetilde{\fcov}(x_2) - \widetilde{\fcov}(i_2)) +  \sum_{r = i_1+2}^{i_2} (\widetilde{\fcov}(r) - \widetilde{\fcov}(r-1))  + (\widetilde{\fcov}(i_1 + 1) - \widetilde{\fcov}(x_1)) }
	{(x_2 - i_2 ) + \sum_{r = i_1+2}^{i_2} (r - (r-1))  + ((i_1 + 1) - x_1) } \\ \nonumber
	 &\overset{1}{\ge}& \max\left\{\frac{\widetilde{\fcov}(x_2) - \widetilde{\fcov}(i_2)}{x_2 - i_2}, \max_{r \in \{i_1+2,\ldots,i_2\}} \Big(\frac{\widetilde{\fcov}(r) - \widetilde{\fcov}(r-1)}{r - (r-1)} \Big)  , \frac{\widetilde{\fcov}(i_1 + 1) - \widetilde{\fcov}(x_1)}{(i_1 + 1) - x_1}\right\} \\ \nonumber
	&\overset{2}{\ge}& \max\left\{\frac{\widetilde{\fcov}(x_2) - \widetilde{\fcov}(i_2)}{x_2 - i_2}, \frac{\widetilde{\fcov}(i_1 + 2) - \widetilde{\fcov}(i_1 + 1)}{(i_1 + 2) - (i_1 + 1)}  , \frac{\widetilde{\fcov}(i_1 + 1) - \widetilde{\fcov}(x_1)}{(i_1 + 1) - x_1}\right\}\\ \nonumber
	&\overset{3}{=}& \max\left\{\frac{\widetilde{\fcov}(i_2 + 1) - \widetilde{\fcov}(i_2)}{(i_2 + 1) - i_2}, \frac{\widetilde{\fcov}(i_1 + 2) - \widetilde{\fcov}(i_1 + 1)}{(i_1 + 2) - (i_1 + 1)}  , \frac{\widetilde{\fcov}(i_1 + 1) - \widetilde{\fcov}(i_1)}{(i_1 + 1) - i_1}\right\}\\ 
	&\overset{4}{=}& \frac{\widetilde{\fcov}(i_2 + 1) - \widetilde{\fcov}(i_2)}{(i_2 + 1) - i_2}  \label{eq:ineq2}
	\end{eqnarray} 
	
	We briefly justify the above steps. Step $1$ uses the following known observation that for any sequence of pairs of nonnegative integers $(a_i,b_i)_{i \in [r]}$ we have 
	
	\begin{equation}
	\frac{a_1 + a_2 + \cdots + a_r}{b_1 + b_2 + \cdots + b_r} \ge \min_{i \in [r]} \frac{a_i}{b_i}
	\end{equation}
	
	Combining the above observation with the fact that $\widetilde{\fcov}(x)$ is increasing in $x$ gives us the inequality. In steps $2$ and $4$, we use the diminishing marginal property of $\widetilde{\fcov}$ (Lemma \ref{lem:incr_marginal}). Step $3$ follows from the piecewise linearity of $\widetilde{\fcov}$. A similar sequence of arguments also gives us
	
	\begin{equation}					\label{eq:ineq3}
	\frac{\widetilde{\fcov}(x_4) - \widetilde{\fcov}(x_3)}{x_4 - x_3} {\le} \frac{\widetilde{\fcov}(i_3 + 1) - \widetilde{\fcov}(i_3)}{(i_3 + 1) - i_3} 
	\end{equation}
	
	Since $x_3 \ge x_2$, we have $i_3 \ge i_2$, and therefore, using the diminishing marginals property of $\widetilde{\fcov}$ (Lemma \ref{lem:rho_incr}) we have 
	
	\begin{equation}					\label{eq:ineq4}
	\frac{\widetilde{\fcov}(i_3 + 1) - \widetilde{\fcov}(i_3)}{(i_3 + 1) - i_3} \ge \frac{\widetilde{\fcov}(i_3 + 1) - \widetilde{\fcov}(i_3)}{(i_3 + 1) - i_3}
	\end{equation}	
	
	Combining the inequalities from Equations (\ref{eq:ineq1}-\ref{eq:ineq2}), (\ref{eq:ineq3}) and (\ref{eq:ineq4}) gives us $\frac{\widetilde{\fcov}(x_2) - \widetilde{\fcov}(x_1)}{x_2 - x_1} \ge \frac{\widetilde{\fcov}(x_4) - \widetilde{\fcov}(x_3)}{x_4 - x_3}$. Since this holds for any choice of $x_1 \le x_2 \le x_3 \le x_4$, using Proposition \ref{prop:cncv}, we get that $\widetilde{\fcov}$ is concave.
	\end{proof}

 \section{Reduction to $h$-\kary} \label{sec:hyper-ugc}

Here, we give the reduction from the graph variant of the $\ug$ to the $h$-$\kary$ that we use in our reduction. We point out that this variant is well known, and in particular, a near identical variant can be found in \cite{feldman2012agnostic}. However the variant from \cite{feldman2012agnostic} does not explicitly guarantee that the underlying constraint hypergraph is regular, a feature we use crucially in our reduction. Hence, we include the full reduction for the sake of completeness. We begin by introducing the conjecture for bi-regular variant of $\ug$.

\begin{definition}[$\ug$]
		An instance $\mathcal{G}(U,V,E,\Sigma,\{\pi_{e,u}:\Sigma \mapsto \Sigma\}_{e\in E, v \in V})$ of $\ug$ is characterized by a bipartite graph on vertices $(U,V)$ and bijection projection constraints $\pi_{e,v}:\Sigma \mapsto \Sigma$. Here, each edge represents a constraint involving the vertices participating in the edge. We say that a labeling $\sigma:U \cup V \mapsto \Sigma$ satisfies the edge $(u,v)  \in E$ if and only if $\pi_{e,v}(\sigma(v)) = \sigma(u)$.
\end{definition}
		
The following is known to be equivalent to the Unique Games Conjecture

\begin{conjecture}[See Conjecture 1 \cite{bhangale-biUG}	]		\label{conj:ug}
	For every constant $\epsilon > 0$ the following holds. Given an instance $\mathcal{G}$ of $\ug$, it is NP-Hard to distinguish between the following cases:
	\begin{itemize}
		\item (YES): There exists a labeling $\sigma$ of the vertices which satisfies at least $1-\epsilon$ fraction of the constraints.
		\item (NO): No labeling $\sigma$ of the vertices satisfies more than $\epsilon$ fraction of the edges. 
	\end{itemize}
Additionally, the underlying constraint graph is regular. Here the degree of the constraint graph and the alphabet size depend only on the parameter $\epsilon$.
\end{conjecture}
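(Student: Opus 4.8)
The statement to be established, Conjecture~\ref{conj:ug}, is the \emph{$d$-regular} form of the Unique Games Conjecture; it is Conjecture~1 of~\cite{bhangale-biUG}, and the only point here is to record how it follows from the \emph{vanilla} Unique Games Conjecture~\cite{khot2002power}, which imposes no regularity on the constraint graph. The plan is a two-stage reduction starting from a vanilla hard instance: (i) a \emph{regularization} step that turns the constraint graph into a bipartite $d$-regular graph while keeping completeness close to $1$, at the price of inflating the soundness to a constant bounded away from $1$; and (ii) a \emph{parallel-repetition} step that drives the soundness back down to an arbitrarily small constant, while preserving completeness close to $1$, regularity, bipartiteness, and a constant-size alphabet.

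\textbf{Regularization.} Start from a vanilla $\ug$ instance $\mathcal{H}=(A\cup B,E_0,\Sigma,\{\pi\})$ with completeness $1-\epsilon_0$ and soundness $\epsilon_0$, where $\epsilon_0$ is chosen (small) only at the very end. Replace each vertex $w$ of degree $d_w$ by a cloud of $d_w$ copies, routing the $d_w$ edges at $w$ to distinct copies so that each copy carries exactly one original edge; overlay on each cloud a $\Delta$-regular expander with identity (equality) constraints, $\Delta=\Delta(\epsilon)$ a large constant fixed below (a standard auxiliary-vertex gadget keeps these consistency edges bipartite, and a vanishing amount of identity-constrained dummy padding makes the two sides exactly $d$-regular with $d=O(\Delta)$). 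In the resulting bipartite $d$-regular instance $\mathcal{H}'$, over the same label set $\Sigma$, the ``original'' edges form a $\rho=\Theta(1/\Delta)$ fraction of all edges. For completeness, lifting a labeling $\sigma$ of $\mathcal{H}$ to ``every copy of $w$ gets $\sigma(w)$'' satisfies all equality and dummy constraints and a $1-\epsilon_0$ fraction of the original edges, hence at least $1-\rho\epsilon_0\ge 1-\epsilon_0$ of $\mathcal{H}'$. For soundness, given an $\mathcal{H}'$-labeling of value $\ge v$: averaging over clouds, for a $1-o_\Delta(1)$ fraction of vertices $w$ it satisfies a $1-o_\Delta(1)$ fraction of the expander constraints in the cloud of $w$, and the expander mixing lemma then forces a single plurality label onto a $1-o_\Delta(1)$ fraction of the copies of $w$; decoding $\sigma(w)$ to be that plurality label, every original edge that the $\mathcal{H}'$-labeling satisfies through two plurality copies is satisfied by $\sigma$, so $\sigma$ has $\mathcal{H}$-value at least $(v-(1-\rho))/\rho - o_\Delta(1)$. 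Contrapositively, if no $\mathcal{H}$-labeling exceeds value $\epsilon_0$, then no $\mathcal{H}'$-labeling exceeds $1-\rho+\rho\epsilon_0+o_\Delta(1)\le 1-\rho/2$, once $\Delta$ is large enough and $\epsilon_0$ small.

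\textbf{Amplification.} Replace $\mathcal{H}'$ by its $k$-fold parallel repetition $(\mathcal{H}')^{\otimes k}$: its constraint graph is bipartite and $d^k$-regular (the $k$-th tensor power of a $d$-regular bipartite graph), each constraint is still a bijection on the label set $\Sigma^k$ (of constant size), the completeness is at least $(1-\rho\epsilon_0)^k\ge 1-k\epsilon_0$, and by the parallel-repetition theorem for projection games---unique games being a special case---the soundness is at most $(1-\Theta(\rho^2))^{\Omega(k)}$, whose decay rate is independent of the alphabet size. Given the target $\epsilon$, fix parameters in this order: $\Delta$ large enough that the regularized soundness is $\le 1-\rho/2$; then $k$ large enough that $(1-\Theta(\rho^2))^{\Omega(k)}\le\epsilon$; and finally $\epsilon_0\le\epsilon/(2k)$ in the invoked vanilla UGC instance, so the completeness is $(1-\rho\epsilon_0)^k\ge 1-\epsilon$. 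This yields, for every $\epsilon>0$ and in polynomial time, a bipartite $d(\epsilon)$-regular $\ug$ instance over an alphabet of size depending only on $\epsilon$ for which it is NP-hard to distinguish completeness $\ge 1-\epsilon$ from soundness $\le\epsilon$, which is exactly Conjecture~\ref{conj:ug}.

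\textbf{Main obstacle.} The crux is the soundness side of the regularization step: one must show, via the expander mixing lemma with the cloud-expanders' spectral gap tuned to $\epsilon$, that an $\mathcal{H}'$-labeling can gain only an $o_\Delta(1)$ advantage by being inconsistent within clouds, so that plurality decoding loses only $o_\Delta(1)$; and one must invoke a parallel-repetition theorem with the correct, alphabet-independent exponential decay, together with the right ordering of the three parameters $\Delta,k,\epsilon_0$. The remaining points---keeping the graph bipartite throughout the cloud/expander construction and making it \emph{exactly} $d$-regular rather than merely bounded-degree---are handled by the auxiliary-vertex gadget and a vanishing amount of dummy padding and are routine. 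All of this is standard; it is precisely what~\cite{bhangale-biUG} records, and in the body of the paper we simply invoke it.
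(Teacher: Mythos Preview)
The paper does not give its own proof of this statement. Conjecture~\ref{conj:ug} is simply \emph{stated} as a known equivalent form of the Unique Games Conjecture, with the equivalence attributed to~\cite{bhangale-biUG}; the paper then uses it as a black box in the reduction to $h$-$\kary$ (Theorem~\ref{thm:hyper-ugc}). So there is no ``paper's proof'' to compare against.

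What you have written is a sketch of the implication ``vanilla UGC $\Rightarrow$ regular bipartite UGC'', via the standard two-step route: (i) replace each vertex by a cloud equipped with a constant-degree expander carrying equality constraints, which yields a biregular bipartite unique game whose soundness is only bounded away from~$1$ by a constant $\Theta(1/\Delta)$; and (ii) restore arbitrarily small soundness by $k$-fold parallel repetition, using the alphabet-independent decay for projection games. This is a correct and standard argument; the points you flag as the crux (the expander-mixing soundness analysis for plurality decoding, and the ordering $\Delta\to k\to\epsilon_0$) are indeed the places where care is needed, and your treatment of them is fine at the level of a sketch. Two minor remarks: parallel repetition of a bijective (unique) game is again bijective, and the $k$-fold tensor of a $d$-regular bipartite graph is $d^k$-regular bipartite, so both the ``unique'' and the ``regular bipartite'' structure survive step~(ii); and your padding/auxiliary-vertex fixes for exact regularity and bipartiteness are routine, as you say. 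In short, your proposal is a valid justification of the equivalence the paper merely cites, and goes beyond what the paper itself does here.
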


The following theorem says that there exists a polynomial time reduction from $\ug$ to $h$-$\kary$.

\begin{theorem}							\label{thm:hyper-ugc}
	For all constant choices of $\epsilon > 0$ and $h \in \mathbbm{N}$, there exists a polynomial time reduction which on input a $\ug$ instance $\mathcal{G}(U,V,E,\Sigma,\{\pi_{e,u}\}_{e \in E,u\in e})$ (as in Conjecture \ref{conj:ug}) outputs $h$-$\kary$ instance $\mathcal{G}'(V,E',\Sigma,\{\tilde{\pi}_{e,u}\}_{e \in E',u \in e})$ satisfying the the following properties: 
	\begin{itemize}
		\item If $\mathcal{G}$ is a YES instance, then there exists a labeling which strongly satisfies $1 - \epsilon$ fraction of hyperedges in $\mathcal{G}'$.
		\item If $\mathcal{G}$ is a NO instance, then no labeling weakly satisfies more than $h^2\sqrt{\epsilon}$ fraction of the hyperedges in $\mathcal{G}'$.
	\end{itemize}   
	Additionally, the instance $\mathcal{G}'$ output by the reduction satisfy the following properties.
	\begin{itemize}
		\item The alphabet set of $\mathcal{G}'$ is the same as the alphabet set of $\mathcal{G}$.
		\item The underlying constraint hypergraph is {\em regular} i.e., every vertex $v \in V$ participates in the same number of hyperedge constraints.
	\end{itemize}
\end{theorem}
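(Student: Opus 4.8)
The plan is to use the standard ``neighbourhood'' reduction: the hyperedges of $\mathcal{G}'$ will be the size-$h$ subsets of the neighbourhoods $N(u)$, $u \in U$, of the left vertices of the source instance. Assume (by a standard degree-amplification step, which preserves the YES/NO gap) that $\mathcal{G}(U,V,E,\Sigma,\{\pi_{e,u}\})$ is bi-regular with left-degree $d$, a large constant with $d \geq h$ to be fixed later. For every $u \in U$ and every size-$h$ subset $\{v_1,\dots,v_h\} \subseteq N(u)$ put a hyperedge $e=(v_1,\dots,v_h)$ into $E'$ with projections $\tilde{\pi}_{e,v_i} := \pi_{(u,v_i)}$; since the source constraint on $(u,v_i)$ reads $\pi_{(u,v_i)}(\sigma(v_i)) = \sigma(u)$ and each $\pi_{(u,v_i)}$ is a bijection, $\tilde{\pi}_{e,v_i}$ is a valid projection. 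We allow parallel hyperedges, which is immaterial for a constraint problem, and the alphabet is unchanged; the construction runs in time $|U|\cdot O(d^h)$, polynomial since $h,d$ are constants. For regularity, note each $v\in V$ lies in exactly $\deg_{\mathcal{G}}(v)\cdot\binom{d-1}{h-1}$ hyperedges, so bi-regularity of $\mathcal{G}$ (which makes $\deg_{\mathcal{G}}(v)$ independent of $v$) makes $\mathcal{G}'$ regular. Moreover each $u\in U$ contributes exactly $\binom{d}{h}$ hyperedges, so sampling a uniformly random hyperedge of $\mathcal{G}'$ amounts to sampling $u\in U$ uniformly and then a uniformly random size-$h$ subset of $N(u)$; I use this sampling throughout.

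\emph{Completeness.} If $\mathcal{G}$ is a YES instance, fix $\sigma$ satisfying at least $1-\epsilon$ of the edges and keep its restriction to $V$. Let $\epsilon_u$ be the fraction of $u$'s edges left unsatisfied, so $\mathbb{E}_u[\epsilon_u]=\epsilon$ by left-regularity. A hyperedge from $u$ is strongly satisfied exactly when all $h$ of its vertices have their edge to $u$ satisfied, so the fraction of $u$'s hyperedges that are strongly satisfied is $\binom{(1-\epsilon_u)d}{h}/\binom{d}{h} \geq (1-\epsilon_u-h/d)^h \geq 1-h\epsilon_u-h^2/d$. Averaging over $u$, the labeling strongly satisfies at least $1-h\epsilon-h^2/d$ of the hyperedges; taking $d$ large and running the reduction from a source instance with parameter $\epsilon/(2h)$ in place of $\epsilon$ (still a constant) gives the claimed $1-\epsilon$.

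\emph{Soundness (contrapositive).} Suppose $\tau:V\to\Sigma$ weakly satisfies more than $h^2\sqrt{\epsilon}$ of the hyperedges. Keep $\tau$ on $V$, and for $u\in U$ let $p_u$ be the distribution on $\Sigma$ with $p_u(\alpha)=\Pr_{v\sim N(u)}[\pi_{(u,v)}(\tau(v))=\alpha]$; draw $\sigma(u)$ independently as $\pi_{(u,v')}(\tau(v'))$ for a fresh uniform $v'\sim N(u)$. A hyperedge from $u$ is weakly satisfied only if two of its $h$ vertices give the same projected label, so by a union bound over pairs the fraction of $u$'s hyperedges that are weakly satisfied is at most $\binom{h}{2}\tfrac{d}{d-1}\sum_{\alpha}p_u(\alpha)^2$ (the $\tfrac{d}{d-1}$ corrects for sampling distinct vertices instead of i.i.d.\ ones, and is absorbed for $d$ large). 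Averaging, $\mathbb{E}_u\big[\sum_\alpha p_u(\alpha)^2\big] \geq \tfrac{2\sqrt{\epsilon}(d-1)}{d} > \epsilon$ for small constant $\epsilon$. On the other hand, the expected fraction of source edges satisfied by $(\sigma,\tau)$ is $\mathbb{E}_u\,\mathbb{E}_{v\sim N(u)}\Pr_\sigma[\sigma(u)=\pi_{(u,v)}(\tau(v))] = \mathbb{E}_u\big[\sum_\alpha p_u(\alpha)^2\big] > \epsilon$, so some fixing of $\sigma$ satisfies more than $\epsilon$ of the edges of $\mathcal{G}$, contradicting the NO case.

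\emph{Main obstacle.} There is no deep difficulty: the construction and both directions are standard (the essentially identical reduction, without the regularity claim, is in~\cite{feldman2012agnostic}), and the one new requirement---regularity of $\mathcal{G}'$---falls out immediately from bi-regularity of the source. The points that need care are purely bookkeeping: choosing the source degree $d$ and source parameter large/small enough that the $h^2/d$ term in completeness, the factor-$h$ loss there, and the $\tfrac{d}{d-1}$ correction in soundness are all swallowed, and checking the numeric slack $2\sqrt{\epsilon}>\epsilon$ used in the final contradiction.
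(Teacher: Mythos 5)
Your construction is the same as the paper's (hyperedges are the $h$-subsets of left-neighbourhoods, projections inherited from the source edges, regularity from bi-regularity), and your completeness argument matches the paper's up to bookkeeping: the paper just union-bounds over the $h$ incident source edges to get $1-\epsilon h$ and implicitly rescales, whereas you count the surviving $h$-subsets exactly and rescale explicitly, which is if anything more careful. Where you genuinely diverge is the soundness decoding. The paper runs three successive averaging/Markov steps: it extracts a pair of coordinates $(i,j)$ with collision probability $\geq 2\epsilon'/h^2$, then a $\geq \epsilon'/h^2$ fraction of "good" left vertices $u$, then for each good $u$ a popular projected label $\sigma_u$, and concludes that the resulting deterministic labeling satisfies $\geq \epsilon'^2/h^4$ of the source edges — hence the $h^2\sqrt{\epsilon}$ threshold in the statement, obtained by squaring. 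You instead use a randomized decoding, drawing $\sigma(u)$ from the empirical distribution $p_u$ of projected neighbour labels, and observe that both the weak-satisfaction rate of $u$'s hyperedges and the expected satisfaction rate of $u$'s source edges are controlled by the same collision probability $\sum_\alpha p_u(\alpha)^2$; this converts an $\eta$ fraction of weakly satisfied hyperedges into $\eta/\binom{h}{2}$ satisfied source edges (up to the $\tfrac{d}{d-1}$ correction), losing only a factor $\binom{h}{2}$ rather than squaring. Your argument is cleaner and quantitatively stronger, and comfortably clears the $h^2\sqrt{\epsilon}$ bar stated in the theorem. The one point you wave at — ensuring the source left-degree $d$ is a constant at least $h$ via "degree amplification" — is also left implicit in the paper (Conjecture~\ref{conj:ug} only says the degree is a constant depending on $\epsilon$), so you are no worse off; it is indeed standard but deserves the one-line justification you gesture toward.
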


\begin{proof}
	We construct the $h$-$\kary$ instance as follows. The vertex set of the $h$-$\kary$ instance $\mathcal{G}'$ is going to be $V$ i.e., the right vertex set of the $\ug$ instance $\mathcal{G}$. The underlying constraint hypergraph is the following $h$-ary hypergraph $H$. Fix a left vertex $u \in U$, and let $N_(u)$ denote its neighborhood. For every $h$-sized subset $(v_1,v_2,\ldots,v_h) \subset {N(u) \choose h}$ we add the hyperedge $e = (v_1,v_2,\ldots,v_h)$ to the hyperedge set $E'$. Furthermore, we set the corresponding bijection constraint to be $\tilde{\pi}_{e,v_i} = \pi_{(u,v_i),v_i}$. This is done for every choice of left vertex $u$, and every $h$-sized subset of its neighborhood. Overall, for a constant $h$, the reduction runs in time $|V|^{O(h)} {\rm poly}(|U|,|V|,|\Sigma|)$.
	  
	From its construction, it is clear that every vertex in $v$ participates in the same number of $h$-ary constraints (this follows from the bi-regularity of the original constraint graph). Furthermore, we shall need the following observation which is again a consequence of the bi-regularity of the $\ug$ instance $\mathcal{G}$.
	
	{\bf Observation 1} The following process is an equivalent way of sampling a random hyperedge $e \in E'$.
	
	\begin{itemize}
		\item Sample a random left vertex $u \sim U$.
		\item Sample $h$-random neighbors $v_1,v_2,\ldots,v_h \sim N(u)$ without replacement, and output the hyperedge $e = (v_1,v_2,\ldots,v_h)$
	\end{itemize}
	
	Equipped with the above observations, we now argue the completeness and soundness directions of our reduction. 
	
	{\bf Completeness}: Suppose $\mathcal{G}$ is a YES instance. Then, there exists a labeling $\sigma:U \cup V \mapsto \Sigma$ of the vertices which satisfies at least $1 - \epsilon$ fraction of the edges. Let $\sigma':V \mapsto \Sigma$ be the restriction of the labeling $\sigma$ to the set of right vertices $V$ i.e., for all $v \in V$ we have $\sigma'(v) = \sigma(v)$. We now show that the labeling $\sigma'$ strongly satisfies at least $1 - \epsilon h$ fraction of hyperedges in $\mathcal{G}'$.
	\begin{align*}
		&\Pr_{e = (v_1,v_2,\ldots,v_h) \sim E'}\Big[\sigma' \mbox{strongly satsifies } e \Big]\\
		&=\E_{u\sim U}\Bigg[\Pr_{v_1,v_2,\ldots,v_h \sim N(u)} \Big[\forall {i \neq j}, \tilde{\pi}_{e,v_i}(\sigma'(v_i)) = \tilde{\pi}_{e,v_j}(\sigma'(v_j)) \Big]\Bigg] \tag{Observation 1}\\
		&\ge \E_{u\sim U}\Bigg[ \Pr_{v_1,v_2,\ldots,v_h \sim N(u)} \Big[\forall i \in [h], \tilde{\pi}_{e,v_i}(\sigma'(v_i)) = \sigma(u) \Big] \Bigg]\\
		&= \E_{u\sim U}\Bigg[\Pr_{v_1,v_2,\ldots,v_h \sim N(u)} \Big[\forall i \in [h], {\pi}_{(u,v_i),v_i}(\sigma(v_i)) = \sigma(u) \Big]\Bigg]\\
		&\ge 1 - \sum_{i \in [h]}\Pr_{u \sim U,v_i \sim N(u)} \Big[{\pi}_{(u,v_i),v_i}(\sigma(v_i)) \neq \sigma(u) \Big]\\
		&\ge 1 - \epsilon h
	\end{align*}
	The inequality in the last step can be justified as follows. We use the fact that for $u$ drawn uniformly random from $U$ and $v_i$ drawn uniformly random from $N(u)$, the pair $(u,v_i)$ is marginally distributed as a uniformly random edge from $E$ (since $\mathcal{G}$ is bi-regular). Since the labeling $\sigma$ satisfies at least $1-\epsilon$ fraction of edges,  for each $i \in [h]$, the probability that $\sigma$ does not satisfy the edge $(u,v_i)$ is at most $\epsilon$. Combining the two observations gives us the inequality.
	 
	{\bf Soundness}: Suppose in the $h$-$\kary$ instance, there exists a labeling of the vertices $\sigma':V \mapsto \Sigma$ which weakly satisfies at least $\epsilon$ fraction of the hyperedges in $\mathcal{G}'$. Using Observation $1$ and the construction of $\tilde{\pi}$, this is equivalent to  
	
	\begin{equation}
	\Pr_{u,\{v_1,v_2,\ldots,v_h\} \sim N(u)}\left[ \exists i \neq j \mbox{ s.t. }  {\pi}_{(u,v_i),v_i}(\sigma'(v_i)) = {\pi}_{(u,v_j),v_j}(\sigma'(v_j))\right] \ge \epsilon
	\end{equation}
	
	By averaging over the choices of pairs of indices, there exists indices $i,j \in [h], i \neq j$, such that 
	
	\begin{equation}
	\E_{u \sim U} \Bigg[\Pr_{v_i,v_j \sim N(u)}\left[ {\pi}_{(u,v_i),v_i}(\sigma'(v_i)) = {\pi}_{(u,v_j),v_j}(\sigma'(v_j))\right] \Bigg] \ge \frac{\epsilon}{{h\choose 2}} \ge 2\epsilon/h^2
	\end{equation}
	
	Again by an averaging argument, we know that for at least $\epsilon/h^2$ choices of left vertices $u$, we have 
	  
	\begin{equation}
	\Pr_{v_i,v_j \sim N(u)}\left[ {\pi}_{(u,v_i),v_i}(\sigma'(v_i)) = {\pi}_{(u,v_j),v_j}(\sigma'(v_j))\right] \ge \epsilon/h^2
	\end{equation}
	
	We call such a left vertex $u$ as \emph{good}. Then, for any fixed good vertex $u \in U$, there exists a right vertex $v(u) \in V$, for which 
	
	\begin{equation}
	\Pr_{v_i \sim N(u)}\left[ {\pi}_{(u,v(u)),v(u)}(\sigma'(v(u))) = {\pi}_{(u,v_i),v_i}(\sigma'(v_i))\right]  \ge \epsilon/h^2
	\end{equation}
	
	In other words, for at least $\epsilon/h^2$ fraction of right vertices of $v' \in N(u)$, the corresponding right vertex label $\sigma'(v')$ projects under $\pi_{(u,v'),v'}$ to the same left vertex label, say $\sigma_u \in \Sigma$. In particular, we denote the set of all right vertices $v' \in V'$ which project to $\sigma_u$ as $V(u)$. We shall use these left vertex labels and the labeling $\sigma'$ to construct a labeling $\sigma: U \uplus V \mapsto \Sigma$ which shall satisfy a significant fraction of edges in $\mathcal{G}$. Specifically, for every good vertex $u \in U$, we assign $\sigma(u) = \sigma_u$.  We complete the labeling of left vertices by assigning labels to unlabeled vertices arbitrarily. The right vertices are labeled exactly according to the labeling $\sigma'$. We now show that this labeling satisfies at least $\epsilon^2/h^4$-fraction of edges in the $\ug$ instance $\mathcal{G}$. 
	
	\begin{eqnarray*}
	\E_{e = (u,v)} \Big[\mathbbm{1}_{\{\sigma \mbox{\small{ satisfies} } e\}}\Big]
	&=&\E_{e = (u,v)}\Big[\mathbbm{1}_{\{\pi_{e,v}(\sigma(v)) = \sigma(u)\}}\Big] \\
	&=& \E_{u \in U}\E_{v \sim N(u)}\Big[\mathbbm{1}_{\{\pi_{(u,v),v}(\sigma(v)) = \sigma(u)\}}\Big] \\
	&\ge& \Pr_{u \sim L}\left[ u \mbox{ is good} \right]\E_{v \sim N(u)}\Big[\mathbbm{1}_{\{\pi_{(u,v),v}(\sigma(v)) = \sigma(u)\}} | u \mbox{ is good}\Big] \\
	&\ge& \frac{\epsilon}{h^2}\E_{u \sim U,v \sim N(u)}\Big[\mathbbm{1}_{\{\pi_{(u,v),v}(\sigma(v)) = \sigma_u\}} | u \mbox{ is good}\Big] \\
&\ge& \frac{\epsilon}{h^2}\E_{u \sim U,v \sim N(u)}\Big[\Pr[v \in V(u)]\mathbbm{1}_{\{\pi_{(u,v),v}(\sigma(v)) = \sigma_u\}} | u \mbox{ is good},v \in V(u)\Big] \\
&\ge& \frac{\epsilon^2}{h^4}\E_{u \sim U,v \sim N(u)}\Big[\mathbbm{1}_{\{\pi_{(u,v),v}(\sigma(v)) = \sigma_u\}} | u \mbox{ is good},v \in V(u)\Big] \\
&=& \frac{\epsilon^2}{h^4}
	\end{eqnarray*}
	
	where in the first step we use the fact that the $\ug$ instance $\mathcal{G}$ is bi-regular, and hence left regular. In the last step we know that for any choice of $u \in U$ such that $u$ is good, and any choice of $v \in V(u)$, we must have $\sigma_u = \pi_{(u,v),v}(\sigma(v))$. This completes the proof of soundness direction.
\end{proof}	
	
\section{Increasing Alphabet Size While Preserving Completeness and Soundness}		\label{sec:alphabet}
	
In this section, we state and prove the following lemma which shows that one can always choose alphabet size of the $\ug$ instance to be large enough while preserving the completeness and soundness parameters of the instance.

\begin{lemma}
	Let $\mathcal{G}(V,E,\Sigma,\{\pi_{e,v}\}_{e,v})$ be an instance of $h$-$\kary$. Let $r \in \mathbbm{N}$ be a nonnegative integer. Then there exists a polynomial time algorithm which constructs  a $h$-$\kary$ instance $\mathcal{G}'(V,E,\Sigma',\{\pi'_{e,v}\}_{e,v})$ such that $|\Sigma'| = r|\Sigma|$ satisfying the following property for any choice of $\gamma \in [0,1]$. There exists a labeling $\sigma:V \mapsto \Sigma$ which strongly (or weakly) satisfies at least $\gamma$ fraction of the hyperedges in $\mathcal{G}$ {\em iff} there exists a labeling $\sigma':V \mapsto \Sigma'$ which strongly (or weakly) satisfies at least $\gamma$ fraction of hyperedges in $\mathcal{G}'$.
\end{lemma}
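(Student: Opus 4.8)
The plan is a direct product construction. Set $\Sigma' := \Sigma \times [r]$ (assuming $r \ge 1$; the case $r = 0$ is vacuous), keep the same vertex set $V$ and the \emph{same} hyperedge set $E$, and define the new projection constraints coordinate-wise: for each hyperedge $e$ and vertex $v \in e$, let
\[
\pi'_{e,v}(\beta, j) := \bigl(\pi_{e,v}(\beta),\, j\bigr), \qquad \beta \in \Sigma,\ j \in [r].
\]
Since each $\pi_{e,v}$ is a bijection on $\Sigma$, each $\pi'_{e,v}$ is a bijection on $\Sigma'$, so $\mathcal{G}'$ is a valid $h$-$\kary$ instance; it has alphabet size exactly $r|\Sigma|$, its constraint hypergraph equals $(V,E)$ and is therefore regular, and the instance is clearly constructible in polynomial time.

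The key observation is that the second coordinate is never mixed by the projections. For a hyperedge $e = (v_1,\dots,v_h)$ and a labeling $\sigma'$ with $\sigma'(v_x) = (\beta_x, j_x)$, we have $\pi'_{e,v_x}(\sigma'(v_x)) = (\pi_{e,v_x}(\beta_x),\, j_x)$, so
\[
\pi'_{e,v_x}(\sigma'(v_x)) = \pi'_{e,v_y}(\sigma'(v_y)) \iff \pi_{e,v_x}(\beta_x) = \pi_{e,v_y}(\beta_y) \ \text{and}\ j_x = j_y.
\]
Hence, for the ``$\Leftarrow$'' direction of the claimed equivalence, if $\sigma'$ strongly (resp.\ weakly) satisfies a hyperedge $e$ in $\mathcal{G}'$, then its first-coordinate projection $\sigma(v) := p_1(\sigma'(v))$, where $p_1 : \Sigma' \to \Sigma$ is the projection onto $\Sigma$, strongly (resp.\ weakly) satisfies $e$ in $\mathcal{G}$: the displayed equivalence shows the $\Sigma$-labels already match. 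For the ``$\Rightarrow$'' direction, given a labeling $\sigma$ of $\mathcal{G}$, lift it to $\sigma'(v) := (\sigma(v), 1)$, using the \emph{constant} second coordinate $1$ at every vertex; then all the $j_x$ agree automatically, so the displayed equivalence shows that $e$ is strongly (resp.\ weakly) satisfied by $\sigma'$ in $\mathcal{G}'$ exactly when it is by $\sigma$ in $\mathcal{G}$.

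In both directions the correspondence preserves the satisfied-edge set \emph{verbatim}, and since $E$ is unchanged this means the fraction of strongly (resp.\ weakly) satisfied hyperedges is identical on the two sides; thus the equivalence holds for every $\gamma \in [0,1]$ with no loss, which is exactly what is needed. There is no real obstacle here beyond the bookkeeping; the only point requiring a moment's care is to use a lift/projection pair that keeps the satisfied-edge set exactly the same for the strong and the weak notions \emph{simultaneously}, which the constant-second-coordinate lift and the first-coordinate projection accomplish.
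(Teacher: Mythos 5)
Your construction is identical to the paper's: the disjoint union of $r$ copies of $\Sigma$ with copy-preserving projections is exactly your product $\Sigma \times [r]$ with $\pi'(\beta,j)=(\pi(\beta),j)$, and the paper likewise lifts into a single fixed copy for one direction and forgets the copy index for the other. The proof is correct and matches the paper's argument (the only nitpick: for the projection direction the satisfied-edge set can only grow, rather than being preserved verbatim, but that inclusion is all that is needed).
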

\begin{proof}
	The underlying idea here is to define the large alphabet set $\Sigma'$ for the $h$-$\kary$ instance $\mathcal{G}'$ to be the disjoint union of $r$ copies of the smaller alphabet set $\Sigma$. Specifically, we define $\Sigma' = \Sigma_1 \uplus \Sigma_2 \uplus \ldots \uplus \Sigma_r$, where $\Sigma_1,\Sigma_2,\ldots,\Sigma_r$ are disjoint copies of the alphabet set $\Sigma$. Let $\Sigma: = \{\alpha_1,\alpha_2,\ldots,\alpha_s\}$, and for every $i \in [r]$, let $\Sigma_i = \{\alpha^{(i)}_1,\alpha^{(i)}_2,\ldots,\alpha^{(i)}_s\}$. As stated in the lemma, the vertex set and the hyperedge set for $\mathcal{G}'$ are the same as those of $\mathcal{G}$. Additionally, for any $h$-ary hyperedge $e$ and any vertex $v \in E$, we define the projection constraint $\pi'_{e,v}:\Sigma' \mapsto \Sigma'$ as follows. For every $i \in [r]$, the restriction of the projection $\pi'_{e,v}$ to the alphabet set $\Sigma_i$ is the corresponding copy of $\pi_{e,v}:\Sigma \mapsto \Sigma$ on the alphabet set $\Sigma_i$. Formally, for every index $i \in [r]$, and label $\alpha^{(i)}_j \in \Sigma_i$, and $e \in E', v \in e$, we assign $\pi_{e,v}(\alpha^{(i)}_j) = \alpha^{(i)}_{j'}$ if $\pi_{e,v}(\alpha_j) = \alpha_{j'}$. This completes the description of the $h$-$\kary$ instance $\mathcal{G}'$. Note that we can construct $\mathcal{G}'$ from $\mathcal{G}$ in time ${\rm poly}(|V|,|\Sigma|,r)$.  
	
	We shall prove the claim for strongly satisfied hyperedges; the case involving the weakly satisfied hyperedges follows similarly. We begin by arguing the forward direction of the claim. Suppose there exists labeling $\sigma:V  \mapsto \Sigma$ which strongly satisfies at least $\gamma$ fraction of the hyperedges in $\mathcal{G}$. Then, we construct labeling $\sigma':V \mapsto \Sigma'$ for $\mathcal{G}'$ from $\sigma$ as follows. For every $v \in V$, we let $\sigma'(v) = \alpha^{(1)}_{i(v)}$ if $\sigma(v) = \alpha_{i(v)}$. We claim that the labeling $\sigma'$ will strongly satisfy any hyperedges strongly satisfied by $\sigma$ in $\mathcal{G}$. To see this, we fix a hyperedge $e=(v_1,v_2,\ldots,v_h)$ (which are w.l.o.g., labeled with $\alpha_1,\alpha_2,\ldots,\alpha_h$ respectively) which is strongly satisfied by $\sigma$. Then $\pi_{e,v_1}(\sigma(v_1)) = \pi_{e,v_2}(\sigma(v_2)) = \cdots = \pi_{e,v_h}(\sigma(v(h))) = \alpha_{i(e)}$ for some $\alpha_{i(e)} \in \Sigma$. But then, our construction of $\Sigma'$ ensures that for every $j \in [h]$ we have 
	\begin{equation}
	\pi'_{e,v_j}(\sigma'(v_j)) = \pi'_{e,v_j}(\alpha^{(1)}_{i(v_j)}) = \alpha^{(1)}_{i(e)}
	\end{equation} 
	The above implies that the labeling $\sigma'$ strongly satisfies the hyperedge $e$ in the $h$-$\kary$ instance $\mathcal{G}'$. Note that these observations hold for any hyperedge $e \in E$ strongly satisfied by the labeling $\sigma$ in the $h$-$\kary$ instance $\mathcal{G}$. Therefore, the labeling $\sigma'$ must also strongly satisfy at least $\gamma$-fraction of the hyperedges in $\mathcal{G}'$.
	
	Now we prove the reverse direction of the claim. Let $\sigma':V \mapsto \Sigma'$ be a labeling strongly satisfying at least $\gamma$ fraction of the hyperedges in $\mathcal{G}'$. We shall construct from $\sigma'$ a labeling $\sigma: V \mapsto \Sigma$ of the vertices of $\mathcal{G}$ which will strongly satisfy at least $\gamma$ fraction of the hyperedges in $\mathcal{G}$. Formally, for any vertex $v \in V$, we assign $\sigma(v) = \alpha_{i(v)}$ if $\sigma'(v) \in \{\alpha^{(w)}_{i(v)} | w \in [r]\}$. As in the previous part, we claim that if the labeling $\sigma'$ strongly satisfies a hyperedge $e = (v_1,v_2,\ldots,v_h)$ in $\mathcal{G}'$, then the labeling $\sigma$ must strongly satisfy the hyperedge $e$ in $\mathcal{G}$. For $j \in [h]$ let $\sigma'(v_j) = \alpha^{(w_j)}_{i(v_j)}$ be the label assigned to the $j^{th}$ vertex in $e$. Then by construction of $\sigma$, for any vertex $v_j \in e$ we have $\sigma(v_j) = \alpha_{i(v_j)}$. Since $\sigma'$ strongly satisfies the hyperedge $e \in E$, we have $\pi'_{e,v_1}(\sigma'(v_1)) = \pi'_{e,v_2}(\sigma'(v_2)) = \cdots = \pi'_{e,v_h}(\sigma'(v_h)) = \alpha^{w_e}_{i_e}$ for some $\alpha^{w_e}_{i_e} \in \Sigma_{w_e}$. But then by construction of $\sigma$, for every choice of $j \in [h]$ we must have $\pi_{e,v_j}(\sigma(v_j)) = \pi_{e,v_j}(\alpha_{i(v_j)}) = \alpha_{i_e}$, which implies that $\sigma$ strongly satisfies the hyperedge $e$ in $\mathcal{G}$. Again, note that this holds for any hyperedge $e$ strongly satisfied by the labeling $\sigma'$ in $\mathcal{G}'$. Since $\sigma'$ strongly satisfies at least $\gamma$ fraction of the hyperedges in $\mathcal{G}'$, we can conclude that the labeling $\sigma$ strongly satisfies at least $\gamma$ fraction of the hyperedges in $\mathcal{G}$.
	
	\end{proof}

\end{document}